\newcommand{\hr}{{\mathcal H}}
\newcommand{\cn}{{\mathcal N }}
\newcommand{\cs}{{\mathcal S}}
\newcommand{\crr}{{\mathcal R}}
\newcommand{\fr}{{\mathcal F}}
\newcommand{\fri}{{\mathfrak I}}
\newcommand{\kr}{{\mathcal K}}
\newcommand{\cc}{{\mathbb C}}
\newcommand{\rr}{{\mathbb R}}
\newcommand{\M}{{\mathcal M}}
\newcommand{\nn}{{\mathbb N}}
\newcommand{\eps}{{\varepsilon}}        %%%%%%%%%%%%%%%%%%%%%%%%%%%%%%%%%%%
\newcommand{\A}{\mathcal A}
\newcommand{\B}{\mathcal B}
\newcommand{\cP}{\mathcal P}
\newcommand{\bS}{\mathbf S}
\newcommand{\bX}{\mathbf X}
\newcommand{\bY}{\mathbf Y}
\newcommand{\eins}{{\mathbbm{1}}}
\newcommand{\BIGOP}[1]
{
\mathop{\mathchoice%
{\raise-0.22em\hbox{\Large $#1$}}%
{\raise-0.05em\hbox{\large $#1$}}{\hbox{\large $#1$}}{#1}}}
\newcommand{\bigtimes}{\BIGOP{\times}}
\newcommand{\BIGboxplus}{\mathop{\mathchoice%
{\raise-0.35em\hbox{\huge $\boxplus$}}%
{\raise-0.15em\hbox{\Large $\boxplus$}}{\hbox{\large $\boxplus$}}{\boxplus}}}
\newtheorem{theorem}{Theorem}
\newtheorem{corollary}{Corollary}
\newtheorem{definition}{Definition}
\newtheorem{lemma}{Lemma}
\newtheorem{remark}{Remark}
\newcommand{\tr}{\mathrm{tr}}
\newcommand{\supp}{\mathrm{supp}}
\DeclareMathOperator{\conv}{conv}
\DeclareMathOperator{\aff}{aff}
\DeclareMathOperator{\ri}{ri}
\begin{document}

\title{Arbitrarily Small Amounts of Correlation for Arbitrarily Varying Quantum Channels}
\author{H. Boche, J. N\"otzel \\
\scriptsize{Electronic addresses: \{boche, janis.noetzel\}@tum.de}
\vspace{0.2cm}\\
{\footnotesize Lehrstuhl f\"ur Theoretische Informationstechnik, Technische Universit\"at M\"unchen,}\\
{\footnotesize 80290 M\"unchen, Germany}
}
\maketitle

\begin{abstract}
As our main result we show that, in order to achieve the randomness assisted message - and entanglement transmission capacities of a finite arbitrarily varying quantum channel it is not necessary that sender and receiver share (asymptotically perfect) common randomness. Rather, it is sufficient that they each have access to an unlimited amount of uses of one part of a correlated bipartite source. This access might be restricted to an arbitrary small (nonzero) fraction per channel use, without changing the main result.\\
We investigate the notion of common randomness. It turns out that this is a very costly resource - generically, it cannot be obtained just by local processing of a bipartite source. This result underlines the importance of our main result.\\
Also, the asymptotic equivalence of the maximal- and average error criterion for classical message transmission over finite arbitrarily varying quantum channels is proven.\\
At last, we prove a simplified symmetrizability condition for finite arbitrarily varying quantum channels.
\end{abstract}
%-----------------------------------------------------------------------------------
\begin{section}{Introduction and Historical Remarks}
An arbitrarily varying quantum channel (we will use the shorthand 'AVQC' henceforth) is defined by a set $\fri=\{\cn_s\}_{s\in\bS}$ of quantum channels which all have the same input- and output system. We will consider the case $|\bS|<\infty$ only. The techniques necessary to cope with arbitrary sets have been developed in \cite{abbn}. They are based on the case $|\bS|<\infty$. The in- and output systems are controlled by a sender and a receiver, who (in the cases considered here) wish to transmit either entanglement or classical messages. We do not consider strong subspace transmission here, since the (asymptotical) equivalence to entanglement transmission (over arbitrarily varying quantum channels) has already been proven in \cite{abbn}.\\
They do so by $l$-fold usage of the arbitrarily varying channel, which is under the control of a third party. This third party, called the adversary, is able to select either one of the channels $\cn_{s^l}:=\cn_{s_1}\otimes\ldots\otimes\cn_{s_l}$ for which $s^l\in\bS^l$. It is understood that sender and receiver have to select their protocol first, after that the adversary makes his choice of channel sequence $s^l$. As usual in the (quantum) Shannon information theoretic setup we are interested in the case when $l$ goes to infinity.\\
This scenario can also be understood as an attack on the communication between a legitimate sender and receiver. In the present case, the only aim of the adversary is to prohibit the communication, no eavesdropping is done. The strength of the adversary can easily be modeled by the set $\fri$ from which he is allowed to select the channels. Clearly, in case that $\{\cn_s\}_{s\in\bS}=\mathcal C(\hr,\kr)$, (the set of all completely positive and trace preserving maps with input system $\hr$ and output system $\kr$) any kind of communication between sender and receiver is impossible.\\
Recent work \cite{abbn} provided a formula for the entanglement transmission capacity of such a channel when sender and receiver are allowed to use an unlimited amount of shared randomness in order to perform a possibly correlated randomization over their encoding and decoding strategies. The corresponding capacity was named the 'random entanglement transmission capacity', or $\A_{\mathrm{random}}(\fri)$ for short. The authors of \cite{abbn} also showed that it is sufficient to use only a polynomial (in the number of channel uses) amount of \emph{common randomness} to achieve (rates that are close to) $\A_{\mathrm{random}}(\fri)$.\\
Using this result, they then showed that it was sufficient for sender and receiver to be able to establish this common randomness first by sending classical messages, and then use a bunch of \emph{deterministic} entanglement transmission codes afterwards. This led to their 'Quantum Ahlswede Dichotomy', which stated that the deterministic entanglement transmission capacity $\A_{\mathrm{det}}(\fri)$ of an AVQC $\fri$ equals its random entanglement transmission capacity, if its deterministic message transmission capacity $\overline C_{\mathrm{det}}(\fri)$ is greater than zero. Since $\A_{\mathrm{det}}(\fri)\leq\overline C_{\mathrm{det}}(\fri)$, the very same statement holds true with $\overline C_{\mathrm{det}}(\fri)$ replaced by $\A_{\mathrm{det}}(\fri)$.\\
In the conclusions of their paper they conjectured that $\A_{\mathrm{det}}(\fri)=\A_{\mathrm{random}}(\fri)$ holds for every AVQC $\fri$.\\
This question is important from two points of view. First, for transmission of messages over classical arbitrarily varying channels, there exist explicit examples (see \cite{ahlswede-elimination}) where the capacity without at least a small, polynomial in the number of channel uses, amount of shared randomness between sender and receiver is zero. However, if one uses a small amount of randomness, it is strictly larger than zero.\\
For message transmission over arbitrarily varying classical-quantum channels, these examples have been extended in \cite{ahlswede-blinovsky}. So, for this setting as well, a small amount of shared randomness can boost the message transmission capacity from zero to some value strictly larger than zero.\\
A comparable behaviour is widely known in the information theoretic context from complexity theory, where the use or abandonment of randomized algorithms defines the different complexity classes $P$ and $NP$. The interest in such questions is not of purely academic nature - randomness in any form is an additional resource, that can become costly - and this is exactly the second reason, why above conjecture is important:\\
Consider a whole network of senders and receivers that want to carry out different quantum communication tasks such as entanglement transmission, generation, distillation, message transmission, identification of states or messages and so on. Given that the channels between the mutual participants of the network are never perfectly shielded against the environment or the actions of the other participants, one can ask how the network would benefit from the distribution of shared randomness or some other sort of stabilizing resource, like e.g. correlation, over the network.\\
In this paper we prove that certain elementary networks (AVcqCs) strongly benefit already from a very cheap form of shared randomness, namely correlation.\\
Correlation arises for example when sender and receiver both agree to watch correlated events in order to synchronize their en- and decoding. The signals of one of the many satellites orbiting earth could be used for this matter, as well as observations of weather or other natural processes. Important here is not that sender and receiver receive the \emph{same} signal, but only that it is \emph{correlated}: The temperature at a given time, measured at two nearby places, will usually not be the same, but it is also not statistically independent. What we would like to point out is that, although the use of a satellite might seem more useful for synchronization, much weaker resources could be used as well!\\
In the present paper, we thus not only start an investigation on the conjectured equality $\A_{\mathrm{det}}(\fri)=\A_{\mathrm{random}}(\fri)$ - we do so by searching for the least amount of randomness that is, to our current knowledge, sufficient for transmission of entanglement at $\A_{\mathrm{random}}(\fri)$. The approach that is used produces, at the same time, similar results for message transmission.\\
It is our hope that, with this approach, we might finally be able to show that, for entanglement transmission over arbitrarily varying quantum channels, \emph{no} sort of randomness is helpful at all. To start with, we will now distinguish between four different types of entanglement- and message transmission codes for AVQCs. Each class requires a different kind of resource. We will introduce them now, and then explain what they are capable of.
\\\\
\textbf{1) Deterministic codes.} These are the ones that require the least resources. Sender and receiver simply agree on one encoding- and decoding scheme, the adversary selects a channel sequence, and the transmission starts. A transmission of entanglement or messages at a certain rate is successfull, if it is asymptotically perfect for \emph{every} choice of (infinite) channel sequence that the adversary might come up with. The corresponding capacities for message -and entanglement transmission include the term 'deterministic' in their name.
\\\\
\textbf{2) $((X,Y),r)-$correlated codes.} A bipartite source, modelled by an i.i.d. random variable $(X,Y)$ with values in some finite set $\bX\times\bY$, is observed by sender and receiver. The sender has access to the random variable $X$ and the receiver to $Y$. Every $r$-th channel use, they obtain one pair of realizations of $(X,Y)$. This way, they are able to make their choice of encoding and decoding \emph{dependent} on the outcomes of $(X,Y)$. The adversary, while aware of the description of $(X,Y)$, does not have access to the specific outcomes. In order to avoid trivialities, we will always assume that the mutual information between $X$ and $Y$ is larger than zero: $I(X,Y)>0$. Any such pair $((X,Y),r)$ will also be named 'correlation'.
\\\\
\textbf{3) Common randomness assisted codes.} Sender and receiver both have access to the outcomes of a source that, for $l$-fold usage of the channel, outputs one pair of elements taken from a set $\Gamma_l\times\Gamma_l$. It is guaranteed that the probability distribution according to which these elements are chosen converges to the equidistribution on the subset of pairs of identical elements and that $|\Gamma_l|$ grows unbounded with $l$. If such a source is available, we will equivalently speak of 'common randomness'.
\\\\
\textbf{4) Random codes.} This is the most general class. It consists of the whole set of probability measures on the set of encoding and decoding schemes, where the sigma-algebra may be chosen in any way that makes the error criteria integrable functions. It contains all the other classes as special cases.
\\\\
Our results, in order of appearance and put into historical context, are the following.
\\\\
First, common randomness is a stronger resource than mere correlation. Whenever the source $(X,Y)$ belongs to the relative interior of the set of probability distributions on $\bX\times\bY$ (this is equivalent to $p(x,y)>0$ for all $(x,y)\in\bX\times\bY$ for the distribution $p$ that $(X,Y)$ is distributed according to), then if sender and receiver cannot communicate with each other, \emph{no} common randomness can be extracted from $(X,Y)$. That not even a finite amount of common randomness (meaning that $(|\Gamma_l|)_{l\in\nn}$ converges to some constant) can be extracted already follows from the results of \cite{witsenhausen}.\\
On the contrary, it is evident that a large enough amount of common randomness (meaning that $\liminf\frac{1}{l}\log|\Gamma_l|$ is large enough) allows the sender and receiver to asymptotically simulate the statistics of \emph{any} sequence $((X,Y)^{\otimes l})_{l\in\nn}$, if local randomness is for free. We will take this assumption for granted, as is usually done (e.g. in wiretap scenarios).\\
Our results are in fact even slightly stronger. We show that the generation of a certain class of sequences of bipartite distributions that includes those who model common randomness is impossible under the above assumptions. The class is defined in such a way that every member of it satisfies the minimal requirements on a sequence of bipartite distributions that has the property that it enables classical message transmission over an AVQC, just below the corresponding random message transmission capacity $\overline C_{\mathrm{random}}(\fri)$, when the AVQC satisfies the symmetrizability condition.\\
The validity of this condition is both necessary and sufficient for an AVQC $\fri$ to satisfy $\overline C_{\mathrm{det}}(\fri)=0$, as was proven in \cite{abbn}. It is, to our knowledge, the first time that it is put to use. The complicated, non-single letter character of the symmetrizability condition does not really cry out for applications, making this latter statement nontrivial.\\
The question of quantifying the correlation present in a bipartite source $(X,Y)$ in an operational way has drawn research interest also into other directions, including the results of \cite{witsenhausen} that are used in our proof.\\
A different approach was taken in the work \cite{gacs-koerner}, whose authors were concerned with the question of finding an operational interpretation of the mutual information $I(X,Y)$ of a bipartite source. Their approach was to not only generate some common randomness, as we would like to, but also be able to reconstruct the local sources $X$ and $Y$ from it. Therefore, their results were of a negative character and, due to the additional task of reproducing the marginal statistics they do not find an application in our work.\\
Later on, in his work \cite{wyner}, Wyner discarded the earlier results \cite{witsenhausen} and \cite{gacs-koerner} by Witsenhausen and Gacs, K\"orner in his search for an operational notion for the 'common information' or 'common core' of a bipartite i.i.d. source. He found different mathematical formulations for these words and could characterize them in terms of an (probably one-shot, but that seems not to be immediate) entropic formula involving the source. Again, these results are not directly applicable to our scenario.\\
The authors of \cite{ahlswede-koerner}, who provided a broader overview of the topic, concluded that there were even more possible notions of common information, making the definitions and results of \cite{wyner} just one of many.\\
Recently, the question of how to quantify the correlation that is present in a bipartite source $(X,Y)$ has been studied by Kang and Ulukus in \cite{kang-ulukus}. They defined yet another measure of correlation. They showed that their measure was non-increasing in the number of i.i.d. copies of $(X,Y)$ and connected them to the results of \cite{witsenhausen} and \cite{gacs-koerner}.\\
Their results were generalized to the quantum setting by Beigi in \cite{beigi}.\\
Another approach to the generation of common randomness in our sense is given in a scenario where sender and receiver not only have access to their respective parts of $(X,Y)$ but may also send classical messages between each other. This case is considered for example in \cite{ahlswede-csiszar-I} and \cite{ahlswede-csiszar-II}, a corresponding quantum case has been dealt with in \cite{devetak-winter}.
\\\\
Second we show that, despite the above remarks, $((X,Y),r)-$correlated codes are already enough to achieve either $\A_{\mathrm{random}}(\fri)$ or $\overline C_{\mathrm{andom}}(\fri)$. This is regardless of the value of $r$. Our proof follows the reasoning of Ahlswede and Cai in \cite{ahlswede-cai}, who showed the equivalent result for arbitrarily varying classical channels.\\
Like in the proof of our first result, we explicitly use the symmetrizability conditions given in \cite{abbn}, making this the second time that they are successfully put to use.\\
This result is not only an interesting step in the investigation of the question whether correlated and randomized coding schemes help the transmission of entanglement over AVQCs, it is also relevant from a very practical point of view.\\
If it turns out that there are in fact cases where correlated, randomized coding schemes for message- or entanglement transmission have a strictly better performance than the respective deterministic variants, then the better performance can already be reached by using only $((X,Y),r)$ correlation. But this kind of correlation is rather cheap compared to common randomness (by our first result) and, additionally, it might be gained by simply observing e.g. some natural process.\\
Also, even though we conjecture that the equality $\A_{\mathrm{random}}(\fri)=\A_{\mathrm{det}}(\fri)$ holds for all arbitrarily varying quantum channels $\fri$, there is an explicit example in \cite{ahlswede-blinovsky} where $\overline C_{\mathrm{random}}(\mathbb W)>0$ and $\overline C_{\mathrm{det}}(\mathbb W)=0$ holds for arbitrarily varying classical-quantum (cq) channels. In this case, by our Corollary \ref{corollary-2} below, it is already clear that mere correlation leads to a huge benefit for message transmission.\\
We conjecture that similar examples of AVQCs can be found, that is: an AVQC $\fri$ satisfying both $\overline C_{\mathrm{random}}(\fri)>0$ and $\overline C_{\mathrm{det}}(\fri)=0$.
\\\\
As a third result, we show that the deterministic message transmission capacity of an AVQC is independent of the choice of either maximal- or average error criterion. This fact is due to our choice of encoding- and decoding strategies. In the quantum case, it seems of little interest to restrict to pure state inputs or POVMs whose elements are given by orthogonal projections. But already in the classical scenario, the results of \cite{ahlswede-elimination} show this kind of behaviour of the respective capacities for randomized encoding strategies.
\\\\
Our fourth result is motivated by our conjecture concerning the classical message transmission capacities of an AVQC. It aims at finding nontrivial examples of AVQCs $\fri$ for which $\overline C_{\mathrm{det}}(\fri)=0$, but $\overline C_{\mathrm{random}}(\fri)>0$ holds. We felt little urge to prove a coding theorem for message transmission using random codes over an AVQC, but expect a regularized version with an additional optimization over input ensembles in the spirit of the coding result given in \cite{ahlswede-blinovsky}. Given such a result, it should in principle be possible to find out whether $\overline C_{\mathrm{random}}(\fri)>0$ holds, by brute force computer simulations at the worst.\\
The symmetrizability conditions given in \cite{abbn} cannot be checked that easily, especially not since it might be unstable with respect to small variations of the set $\fri$. By the results of Ahlswede, Bjelakovic, Boche and N\"otzel in \cite{abbn}, symmetrizability of an AVQC is equivalent to $\overline C_{\mathrm{det}}(\fri)=0$, hence a simpler version of the symmmetrizability conditions seems desirable. This simplified version is the fourth result of this paper.
\\\\
The paper is structured as follows: First, in Section \ref{sec:Notation} we fix the notation. We proceed with elementary definitions in Section \ref{sec:Definitions}. With these preparations at hand, we state our main results in Section \ref{sec:Main Results}. The rest of the paper (Sections \ref{sec:common-randomness} to \ref{sec:equivalence}) is devoted to the corresponding proofs, in the same order the results were stated and with a separate section for each proof.
\end{section}
%-----------------------------------------------------------------------------------
\begin{section}{\label{sec:Notation}Notation}
All Hilbert spaces are assumed to have finite dimension and are over the field $\cc$. The set of linear operators from $\hr$ to $\hr$ is denoted $\mathcal B(\hr)$. The adjoint of $b\in\mathcal B(\hr)$ is marked by a star and written $b^\ast$.\\
$\cs(\hr)$ is the set of states, i.e. positive semi-definite operators with trace (the trace function on $\mathbb B(\hr)$ is written $\tr$) $1$ acting on the Hilbert space $\hr$. Pure states are given by projections onto one-dimensional subspaces. A vector $x\in\hr$ of unit length spanning such a subspace will therefore be referred to as a state vector, the corresponding state will be written
$|x\rangle\langle x|$. For a finite set $\mathbf X$ the notation $\mathfrak{P}(\mathbf X)$ is reserved for the set of probability distributions on $\mathbf X$, and
$|\mathbf X|$ denotes its cardinality. For any $l\in\nn$, we define $\bX^l:=\{(x_1,\ldots,x_l):x_i\in\bX\ \forall i\in\{1,\ldots,l\}\}$, we also write $x^l$ for the elements of $\bX^l$.\\
The set of completely positive trace preserving (CPTP) maps (also called quantum channels)
between the operator spaces $\mathcal{B}(\hr)$ and $\mathcal{B}(\kr)$ is denoted by $\mathcal{C}(\hr,\kr)$.\\
Closely related is the set of classical-quantum channels (abbreviated here using the term 'cq-channels') with finite input alphabet $\mathbf Z$ and output alphabet $\kr$, that arises from $\mathcal C(\hr,\kr)$ by setting $d=|\mathbf Z|$ and restricting the inputs to matrices that are diagonal in any specific basis. This set is denoted $CQ(\mathbf Z,\kr)$.\\
For any natural number $N$, we define $[N]$ to be the shorthand for the set $\{1,...,N\}$.\\
Using the usual operator ordering symbols $\leq$ and $\geq$ on $\mathcal B(\hr)$, the set of measurements with $N\in\nn$ different outcomes is written
\begin{align}\M_N(\hr):=\{\mathbf D:\mathbf D=(D_1,\ldots,D_N)\ \wedge\ \sum_{i=1}^ND_i\leq\eins_\hr\ \wedge\ D_i\geq0\ \forall i\in[N]\}.\end{align}
To every $\mathbf D\in \M_N(\hr)$ there corresponds a unique operator defined by $D_0:=\eins_\hr-\sum_{i=1}^ND_i$. Throughout the paper, we will assume that $D_0=0$ holds. This is possible in our scenario, since adding the element $D_0$ to any of the other $D_1,\ldots,D_N$ does not decrease the performance of a given code.\\
The von Neumann entropy of a state $\rho\in\mathcal{S}(\hr)$ is given by
\begin{equation}S(\rho):=-\textrm{tr}(\rho \log\rho),\end{equation}
where $\log(\cdot)$ denotes the base two logarithm which is used throughout the paper.\\
The Holevo information is for a given channel $W \in CQ(\mathbf{X},\hr)$ and input probability distribution $p \in \mathfrak P(\mathbf{X})$ defined by
\begin{align}
 \chi(p, W) := S(\overline{W}) - \sum_{x \in \mathbf{X}} p(x) S(W(x)),
\end{align}
where $\overline{W}$ is defined by $\overline{W} := \sum_{x \in \mathbf{X}} p(x) W(x)$\\
Given a bipartite random variable $(X,Y)$, its mutual information $I(X,Y)$ is given by $I(X,Y):=H(X)+H(Y)-H(X,Y)$, where $H(\cdot)$ is the usual Shannon entropy.\\
For $\rho\in\mathcal{S}(\hr)$ and $\cn\in
\mathcal{C}(\hr,\hr)$ the entanglement fidelity (which was defined in \cite{schumacher}) is given by
\begin{equation}F_e(\rho,\cn):=\langle\psi, (id_{\mathcal{B}(\hr)}\otimes \cn)(|\psi\rangle\langle \psi|)     \psi\rangle,  \end{equation}
with $\psi\in\hr\otimes \hr$ being an arbitrary purification of the state $\rho$.\\
For a finite set $\mathcal{W}=\{W_s\}_{s\in\bS} \subset \mathcal C(\hr,\kr)$ or $\mathcal{W}=\{W_s\}_{s\in\bS} \subset CQ(\mathbf Z,\kr)$ we denote its convex hull by $\conv(\mathcal{W})$ (for the  definition of the convex hull, \cite{webster} is a useful reference).
In the cases considered here the following will be sufficient. For a set $\mathcal{W}:= \{W_s\}_{s \in \bS}$
\begin{align}\label{eq:conv-hull}
 \conv(\mathcal{W})=\left\{W_{q}: W_q=\sum_{s\in \bS}q(s)W_s,\ q \in\mathfrak{P}(\bS)\right\}.
\end{align}
Finally, we need some simple topological notions for convex sets in finite dimensional normed space $(V, ||\cdot||)$ over the field of real or complex numbers which we borrow from \cite{webster}. Let $F\subset V$ be convex. $x\in F$ is said to be a relative interior point of $F$ if there is $r>0$ such that $B(x,r)\cap \aff F\subset F$. Here $B(x,r)$ denotes the open ball of radius $r$ with the center $x$ and $\aff F$ stands for the affine hull of $F$. The set of relative interior points of $F$ is called the relative interior of $F$ and is denoted by $\ri F$.
\end{section}
%-----------------------------------------------------------------------------------
\begin{section}{\label{sec:Definitions}Definitions}
For the rest of this subsection, let $\fri=\{\cn_s\}_{s\in\bS}$ denote a finite AVQC. Henceforth, we follow the convention from \cite{abbn}, using the term 'the AVQC $\fri$' as a linguistic shorthand for the mathematical object $(\{\cn_{s^l}\}_{s^l\in\bS^l})_{l\in\nn}$.\\
We will now define the entanglement transmission capacities of an AVQC. Corresponding coding theorems can be found in \cite{abbn}.
\begin{definition}
An $(l,k_l)-$\emph{random entanglement transmission code} for $\fri$ is a probability measure $\mu_l$ on $(\mathcal C(\fr_l,\hr^{\otimes l})\times\mathcal C(\kr^{\otimes l},\fr_l'),\sigma_l)$,
where $\fr_l,\fr_l'$ are Hilbert spaces, $\dim\fr_l=k_l$, $\fr_l\subset\fr_l'$ and the sigma-algebra $\sigma_l$ is chosen such that the function $(\cP_l,\crr_l)\mapsto F_e(\pi_{\fr_l},\crr_l\circ\cn_{s^l}\circ\cP_l)$ is measurable
w.r.t. $\sigma_l$ for every $s^l\in\bS^l$.\\
Moreover, we assume that $\sigma_l$ contains all singleton sets. An example of such a sigma-algebra $\sigma_l$ is given by
the product of sigma-algebras of Borel sets induced on $\mathcal C(\fr_l,\hr) $ and $\mathcal C(\kr,\fr_l') $ by the standard topologies of the ambient spaces.
\end{definition}
%----------------------------------------------------------------------------------------------------------------------------------------------------------------------------------------------
%----------------------------------------------------------------------------------------------------------------------------------------------------------------------------------------------
\begin{definition}\label{def:random-cap-ent-trans}
A non-negative number $R$ is said to be an achievable entanglement transmission rate for the AVQC $\fri=\{\cn_s  \}_{s\in\bS}$ with random codes if there is a sequence of $(l,k_l)-$random entanglement transmission codes such that
\begin{enumerate}
\item $\liminf_{l\rightarrow\infty}\frac{1}{l}\log k_l\geq R$ and
\item $\lim_{l\rightarrow\infty}\inf_{s^l\in\bS^l}\int F_e(\pi_{\fr_l},\crr^l\circ\cn_{s^l}\circ\cP^l)d\mu_l(\cP^l,\crr^l)=1$.
\end{enumerate}
The random entanglement transmission capacity $\A_{\textup{random}}(\fri)$ of $\fri$ is defined by
\begin{equation}
\A_{\textup{random}}(\fri):=\sup\left\{R\in\rr_+:\begin{array}{l}R \textrm{ is an achievable entanglement trans-}\\ \textrm{mission rate for } \fri \textrm{ with random codes}\end{array}\right\}.
\end{equation}
\end{definition}
Now, we consider the important subclass of random entanglement transmission codes that use only \emph{correlation}:
\begin{definition}Let $(X,Y)$ be a bipartite random variable taking values in the finite alphabet $\bX\times\bY$ which is distributed according to some probability distribution $p\in\mathfrak P(\bX\times\bY)$.\\
For $l\in\nn$, $r\in\nn$ and $n(l):=\lfloor l/r\rfloor$, an $((X,Y),r)$ code for entanglement transmission over the finite AVQC $\fri$ is a random entanglement transmission code with $\sigma_l=\{\{ \cP_{x^{n(l)}} \}\}_{x^{n(l)}\in\bX^{n(l)}} \times \{\{ \crr_{y^{n(l)}} \}\}_{y^{n(l)}\in\bY^{n(l)}}$ being some finite subset of $\mathcal C(\fr_l,\hr^{\otimes l})\times\mathcal C(\kr^{\otimes l},\fr_l')$ and $\mu_l(\cP_{x^{n(l)}},\crr_{y^{n(l)}})=p^{\otimes n(l)}(x^{n(l)},y^{n(l)})$.\\
A nonnegative number $R$ is said to be an achievable $((X,Y),r)$ rate for entanglement transmission over $\fri$ if there is a sequence of $((X,Y,r)$ codes for entanglement transmission over $\fri$ such that
\begin{enumerate}
\item $\liminf_{l\rightarrow\infty}\frac{1}{l}\log k_l\geq R$ and
\item $\lim_{l\rightarrow\infty}\min_{s^l\in\bS^l}\sum_{x^{n(l)}\in\bX^{n(l)}}\sum_{y^{n(l)}\in\bY^{n(l)}} F_e(\pi_{\fr_l},\crr^l_{y^{n(l)}}\circ\cn_{s^l}\circ\cP^l_{x^{n(l)}})p^{\otimes n(l)}(x^{n(l)},y^{n(l)})=1$.
\end{enumerate}
The corresponding $((X,Y),r)$ entanglement transmission capacity of $\fri$ is defined through
\begin{align}
\A(\fri,r,(X,Y)):=\sup\left\{R\in\rr_+:\begin{array}{l}R \textrm{ is an achievable entanglement trans-}\\ \textrm{mission rate for } \fri \textrm{ using ((X,Y),r) codes}\end{array}\right\}.
\end{align}
\end{definition}
%------------------------------------------------------------------------------------------------------------------------------------------------------------------------
Having defined random codes and random code capacity for entanglement transmission we are in the position to introduce their deterministic counterparts: An $(l,k_l)-$code for entanglement transmission over $\fri$ is an $(l,k_l)-$random code for $\fri$ with $\mu_l(\{(\mathcal{P}^l,\crr^l)  \}  )=1$ for some encoder-decoder pair $(\mathcal{P}^l,\crr^l)$ (This explains our requirement on $\sigma_l$ to contain all singleton sets) and $\mu_l(A)=0$ for any $A\in\sigma_l$ with $(\mathcal{P}^l,\crr^l)\notin A $. We will refer to such measures as point measures in what follows.
%------------------------------------------------------------------------------------------------------------------------------------------------------------------------------------------------------------
\begin{definition}
A non-negative number $R$ is a deterministically achievable entanglement transmission rate for the AVQC $\fri=\{\cn_s  \}_{s\in \bS}$ if it is achievable in the sense of Definition \ref{def:random-cap-ent-trans} for random codes  with \emph{point measures} $\mu_l$.\\
The deterministic entanglement transmission capacity $\A_{\textup{det}}(\fri)$ of $\fri$ is given by
\begin{equation}
\A_{\textup{det}}(\fri):=\sup\left\{R\in\rr_+:\begin{array}{l}R \textrm{ is an achievable entanglement trans-}\\ \textrm{mission rate for } \fri \textrm{ with deterministic codes}\end{array}\right\}.
 \end{equation}
\end{definition}
Let us take a brief sidestep to the more restricted model (which is not only interesting in its own right, but also important for our current investigations on AVQCs) of an AVcqC that was introduced in \cite{ahlswede-blinovsky}. We are again given a finite subset $\mathbb W=\{W_s\}_{^s\in\bS}$ of channels, but this time $\mathbb W\subset CQ(\mathbf Z,\kr)$ for some finite alphabet $\mathbf Z$. Then $(\{W_{s^l}\}_{s^l\in\bS^l})_{l\in\nn}$ is abbreviated 'the AVcqC $\mathbb W$'. Again, codes, rates and capacities can be defined. We are going to make use of the following:
\begin{definition}\label{def:avc_rand-code}
An $(l,M_l)$-\emph{random code} for message transmission over $\mathbb W=\{W_s\}_{s\in \bS}$ is a probability measure $\mu_l$ on $(\mathbf
(X^l)^{M_l}\times\M_N(\hr^{\otimes l}),\Sigma_l)$. Again, $\Sigma_l$ is a sigma-algebra that contains the singleton sets. An example has been given in \cite{bbjn}. In order to shorten our notation, we write elements of $(X^l)^{M_l}\times\M_N(\hr^{\otimes l})$ in the form
$(x^l_i,D_i)_{i=1}^{M_l}$.
\end{definition}
\begin{definition}\label{def:avc_det-code}
An $(l,M_l)$-\emph{deterministic code} for message transmission over $\mathbb W=\{W_s\}_{s\in \bS}$ is given by a random code for message transmission over
$\mathbb W$ with $\mu_l$ assigning probability one to a singleton set.
\end{definition}
\begin{definition}\label{def:avc_rand-achievability}
A non-negative number $R$ is called achievable for transmission of messages over the AVcqC $\mathbb W=\{W_s\}_{s\in \bS}$ with random codes using the average error criterion if
there is a sequence $(\mu_l)_{l\in\nn}$ of $(l,M_l)$-random codes such that the following two lines are true:
\begin{equation}
\liminf_{l\to\infty}\frac{1}{l}\log M_l\ge R
\end{equation}
\begin{equation}
\limsup_{l\to\infty}\max_{s^l\in\bS^l} \int \frac{1}{M_l} \sum_{i=1}^{M_l} \tr\left(W_{s^l}(x_i^l)(\eins_{\hr^{\otimes l}} - D_i^l)\right)\ d\mu_l((x_i^l,D_i^l)_{i=1}^{M_l})=0.
\end{equation}
\end{definition}
\begin{definition}\label{def:avc_det-achievability}
A non-negative number $R$ is called achievable for transmission of messages over the AVcqC $\mathbb W=\{W_s\}_{s\in \bS}$ with deterministic codes using the average error criterion if
it is achievable with random codes by a sequence $(\mu_l)_{l\in\nn}$ which are deterministic codes.
\end{definition}
\begin{definition}
The capacity for message transmission using random codes and the average error criterion of an AVcqC $\mathbb W$ is given by
\begin{eqnarray}
\overline C_{\textup{random}}(\mathbb W):=\sup\left\{R:\begin{array}{l}R\textrm{\ is\ an\ achievable\ rate\ for\ transmission\ of\ messages\ over\ }\mathbb W\\
 \textrm{\ with\ random\ codes\ using\ the\ average\ error\ probability\ criterion}\end{array}\right\}.
\end{eqnarray}
\end{definition}
\begin{definition}
The capacity for message transmission using deterministic codes and the average error criterion of an AVcqC $\A$ is given by
\begin{eqnarray}
\overline C_{\textup{det}}(\mathbb W):=\sup\left\{R:\begin{array}{l}R\textrm{\ is\ an\ achievable\ rate\ for\ transmission\ of\ messages\ over\ }\mathbb W\textrm{\
with}\\
 \textrm{deterministic\ codes\ using\ the\ average\ error\ probability\ criterion}\end{array}\right\}.
\end{eqnarray}
\end{definition}
We will now come back to our basic object, the AVQC, and establish the notions for its various message transmission capacities:
\begin{definition}\label{def:message-trans-with-average-error}
Let $l\in\nn$. A random code for message transmission over $\fri$ is given by a probability measure $\gamma_l$ on the set $(CQ(M_l,\hr^{\otimes l})\times\mathcal M_{M_l},\Sigma_l)$, where $\Sigma_l$ again denotes a $\sigma-$algebra containing all singleton sets. Such $\sigma$-algebras exist, by arguments similar to those given in \cite{abbn} and\cite{bbjn}. A deterministic code is then given by a random code $\gamma_l$, where $\gamma_l$ is a point (also called Dirac) measure.
\end{definition}
\begin{definition}
A nonnegative number $R$ is called achievable with random codes under the average error criterion if there exists a sequence $(\gamma_l)_{l\in\nn}$ of random codes satisfying both
\begin{align}
1)&\qquad\liminf_{l\to\infty}\min_{s^l\in\bS^l}\int\frac{1}{M_l}\sum_{i=1}^{M_l}\tr\{D_i\cn_{s^l}(\cP(i))\}d\gamma_l(\cP,\mathbf D)=1\\
2)&\qquad\limsup_{l\to\infty}\frac{1}{l}\log M_l\geq R.
\end{align}
It is called achievable with random codes under the maximal error criterion if instead of $1)$ even
\begin{align}
1')&\qquad\liminf_{l\to\infty}\min_{s^l\in\bS^l}\min_{i\in[M_l]}\int\tr\{D_i\cn_{s^l}(\cP(i))\}d\gamma_l(\cP,\mathbf D)=1
\end{align}
holds.\\
If the sequence $(\gamma_l)_{l\in\nn}$ can be chosen to consist of point measures only, then $R$ is called achievable with deterministic codes under the average error criterion if $1)$ and $2)$ hold and it is called achievable with deterministic codes under the maximal error criterion if $1')$ and $2)$ hold.
\end{definition}
\begin{definition}
The corresponding capacities of $\fri$ are defined as
\begin{align}
\overline C_{\mathrm{det}}(\fri)&:=\sup\left\{R:
\begin{array}{l}
R\ \mathrm{is\ achievable\ with\ deterministic\ codes}\\
\mathrm{under\ the\ average\ error\ criterion}
\end{array}\right\},\\
\overline C_{\mathrm{random}}(\fri)&:=\sup\left\{R:
\begin{array}{l}
R\ \mathrm{is\ achievable\ with\ random\ codes}\\
\mathrm{under\ the\ average\ error\ criterion}
\end{array}\right\},\\
C_{\mathrm{det}}(\fri)&:=\sup\left\{R:
\begin{array}{l}
R\ \mathrm{is\ achievable\ with\ deterministic\ codes}\\
\mathrm{under\ the\ maximal\ error\ criterion}
\end{array}\right\},\\
C_{\mathrm{random}}(\fri)&:=\sup\left\{R:
\begin{array}{l}
R\ \mathrm{is\ achievable\ with\ random\ codes}\\
\mathrm{under\ the\ maximal\ error\ criterion}
\end{array}\right\}.
\end{align}
\end{definition}
From \cite{ahlswede-elimination}, \cite{abbn} and \cite{bbjn} it is clear that common randomness is a useful resource. Readers with a deeper interest in the topic will find it fruitful to take a look at Theorem 1 and Theorem 2 a) in \cite{ahlswede-elimination}) or Theorem 32 and Lemma 37 in \cite{abbn} or Lemma 9 and Lemma 10 in \cite{bbjn} for applications to AVQCs and AVcqCs.\\
The proofs given there rely on the possibility to establish approximately perfect correlations between sender and receiver. This kind of correlation is called 'common randomness' here (although, to be fair, one should say that every $p\in\mathfrak P(\bX\times\bY)$ satisfying $p\neq p_\bX\times p_\bY$ deserves that name). As explained in the introduction, terms like 'common randomness' or 'common information' have a variety of different definitions attached to it. We will state the definitions that are relevant for our work now.
\begin{definition}\label{def:source of common randomness}
A source of common randomness $CR\geq0$ is given by a sequence $(\gamma_l)_{l\in\nn}$ of probability distributions, where $\gamma_l\in\mathfrak P(\Gamma_l\times\Gamma_l)$ for every $l\in\nn$ and, asymptotically, we have
\begin{align}
1)&\qquad\liminf_{l\to\infty}\frac{1}{l}\log|\Gamma_l|=CR\\
2)&\qquad\limsup_{l\to\infty}\|\gamma_l-\bar\delta_l\|_1=0,
\end{align}
where $\bar\delta_l\in\mathfrak P(\Gamma_l\times\Gamma_l)$ denotes the normalized delta function, $\bar\delta_l(i,j)=1/|\Gamma_l|$ if $i=j$ and $\bar\delta_l(i,j)=0$ else.
\end{definition}
\begin{definition}
Let $\bX$ and $\bY$ be finite alphabets. A probability distribution $p\in\mathfrak P(\bX\times\bY)$ is said to have common randomness $CR\geq0$ if there exists a sequence $(f_l,g_l)_{l\in\nn}$ of functions such that for every $l\in\nn$ $f_l:\bX^l\mapsto\Gamma_l$, $g_l:\bY^l\mapsto\Gamma_l$ with $\Gamma_l$ being a finite set and asymptotically we have that
\begin{align}
1)&\qquad\liminf_{l\to\infty}\frac{1}{l}\log|\Gamma_l|=CR\\
2)&\qquad\limsup_{l\to\infty}\|(f_l\times g_l)\circ p^{\otimes l}-\bar\delta_l\|_1=0.
\end{align}
The supremum over all $CR$ such that $p$ has $CR$ is called the common randomness of $p$ and is written $CR(p)$.
\end{definition}
As it turns out in Theorem \ref{theorem:no-common-randomness-extraction}, the set of probability distributions on some $\bX\times\bY$ with a positive common randomness is exceptional in a lot of ways: Its complement is open and dense in $\mathfrak P(\bX\times\bY)$, its measure (with respect to the Lebesgue measure) is zero, and operationally it is highly unstable with respect to small perturbations.\\
Let us now define correlated codes for message transmission over arbitrarily varying quantum channels.
\begin{definition}
Given $l\in\nn$, a random variable $(X,Y)$ distributed according to $p\in\mathfrak P(\bX\times\bY)$ for some finite sets $\bX,\bY$ and $r\in\nn$, an $((X,Y),r)-$correlated code $\mathfrak C_l$ for message transmission over $\fri$ is given by a set
\begin{align}
\mathfrak C_l=\{(M_l,\mathcal{P}_{x^{n(l)}},\mathcal D_{y^{n(l)}})\}_{x^{n(l)}\in\bX^{n(l)},y^{n(l)}\in\bY^{n(l)}},
\end{align}
where:
\begin{enumerate}
\item $M_l\in\nn$,
\item $\mathcal{P}_{x^{n(l)}}\in CQ([M_l],\hr^{\otimes l})$ for all $x^{n(l)}\in\bX^{n(l)}$,
\item $\mathcal D_{y^{n(l)}}=\{D_{y^{n(l)},1},\ldots,D_{y^{n(l)},M_l}\}$ is a POVM on $\kr^{\otimes l}$ for all $y^{n(l)}\in\bY^{n(l)}$,
\item and $n(l):=\lfloor l/r\rfloor$.
\end{enumerate}
\end{definition}
\begin{definition}
Let $(X,Y)$ be a random variable distributed according to $p\in\mathfrak P(\bX\times\bY)$ and $r\in\nn$. A nonnegative number $R$ is said to be an achievable m-a-((X,Y),r) (message transmission under the average error criterion using $((X,Y),r)-$correlated codes) rate for $\fri$ if there exists a sequence $(\mathfrak C_l)_{l\in\nn}$ of $((X,Y)r)-$correlated codes such that
\begin{align*}
1)\qquad & \liminf_{l\to\infty}\frac{1}{l}\log M_l\geq R\\
2)\qquad & \lim_{l\to\infty}\inf_{s^l\in\bS^l}\sum_{x^{n(l)}\in\bX^{n(l)}}\sum_{y^{n(l)}\in\bY^{n(l)}}p^{\otimes {n(l)}}(x^{n(l)},y^{n(l)})\frac{1}{M_l}\sum_{i=1}^{M_l}\tr\{D_{y^{n(l)},i}\cn_{s^l}(\cP_{x^{n(l)}}(i))\}=1
\end{align*}
\end{definition}
\begin{definition}
Given the random variable $(X,Y)$ distributed according to $p\in\mathfrak P(\bX\times\bY)$ and an $r\in\nn$, the m-a-((X,Y),r) capacity $\overline C(\fri,r,(X,Y))$ of the AVQC $\fri$ is given by
\begin{align*}
\overline C(\fri,r,(X,Y)):=\sup\{R:R\ \mathrm{is\ achievable\ }m-a-((X,Y),r)\ \mathrm{rate\ for}\ \fri\}.
\end{align*}
\end{definition}
\begin{remark}
Obviously, the following inequalities hold true.
\begin{align}
C_{\mathrm{det}}(\fri)\leq\overline C_{\mathrm{det}}(\fri)\leq\overline C_{\mathrm{rand}}(\fri),\\
C_{\mathrm{det}}(\fri)\leq C_{\mathrm{rand}}(\fri)\leq\overline C_{\mathrm{rand}}(\fri),\\
\overline C(\fri,r,(X,Y))\leq\overline C_{\mathrm{rand}}(\fri).
\end{align}
\end{remark}
During the proof of our main result, Theorem \ref{thm-1}, the following objects will play a central role.
\begin{definition}\label{definition-associated-avcqc}
Given $n\in\nn$ and $\fri\subset\mathcal C(\hr,\kr)$, let $\hr_{Y^n}$ be a Hilbert space of dimension $|\bY|^n$ and $\{\hat\rho_{y^n}\}_{y^n\in\bY^n}\subset\cs(\hr_{Y^n})$ be a set of pairwise orthogonal and pure states.
For a set $\mathfrak S_K=\{\rho_1,\ldots,\rho_K\}\subset\cs(\hr^{\otimes n})$, the associated AVcqC
$\mathbb W_{\mathfrak S_K}=\{W_{s^n}\}_{s^n\in\bS^n}\subset CQ(\mathbb F(\bX^n,[K]),\hr_{\bY^{n}}\otimes\kr^{\otimes n})$ is defined by
\begin{align}
W_{s^n}(f):=\sum_{x^n,y^n}p^{\otimes n}(x^n,y^n)\cdot\hat\rho_{y^n}\otimes\cn_{s^n}(\rho_{f(x^n)})\qquad (s^n\in\bS^n).
\end{align}
Here, $\mathbb F(\bX^n,[K])$ denotes the functions on $\bX^n$ taking values in $[K]$.
\end{definition}
\end{section}
%--------------------------------------------------------------------------------------------------------------------------------------------------
\begin{section}{\label{sec:Main Results}Main Results}
This section enlists our main results, in order of appearance. First, in Section \ref{sec:common-randomness}, we prove a result concerning common randomness:
\begin{theorem}\label{theorem:no-common-randomness-extraction}
Let a bipartite classical source $(X,Y)$ with values in $\bX\times\bY$ which is distributed according to $\mathbb P(X=x,Y=y)=p(x,y)\ \forall (x,y)\in\bX\times\bY$ be given, where $p\in\mathrm{ri}\mathfrak P(\bX\times\bY)$. There is no sequence $(f_l,g_l)_{l\in\nn}$ of functions $f_l:\bX^l\to\Gamma_l$, $g_l:\bY^l\to\Gamma_l$ ($l\in\nn$) satisfying
\begin{align*}
(1)\qquad&\lim_{l\to\infty}|\Gamma_l|=\infty,\\
(2)\qquad&\lim_{l\to\infty}p^{\otimes l}(\{(x^l,y^l):f_l(x^l)=g_l(y^l)\})=1,\\
(3)\qquad&\lim_{l\to\infty}p_\bX^{\otimes l}(\{x^l:f_l(x^l)=k_l\})=\lim_{l\to\infty}p_\bY^{\otimes l}(\{y^l:g_l(x^l)=k_l\})=0\qquad\forall\ (k_l)_{l\in\nn}\subset\bigtimes_{l=1}^\infty\Gamma_l.
\end{align*}
Further, the set of probability distributions $p\in\mathfrak P(\bX\times\bY)$ satisfying $CR(p)>0$ is closed.
\end{theorem}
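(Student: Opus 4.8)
The engine for the first assertion is the maximal correlation coefficient
\[
\rho(X,Y):=\sup\bigl\{\mathbb E[a(X)b(Y)]:\mathbb E a(X)=\mathbb E b(Y)=0,\ \mathbb E a(X)^2=\mathbb E b(Y)^2=1\bigr\}
\]
together with its tensorization, and the plan runs as follows. One first observes that $p\in\ri\mathfrak P(\bX\times\bY)$, i.e.\ $p(x,y)>0$ for every $(x,y)$, forces $\rho(X,Y)<1$: were $\rho(X,Y)=1$, there would be a non-constant $a$ and some $b$ with $a(X)=b(Y)$ $p$-almost surely, which on a fully supported $p$ would mean $a(x)=b(y)$ for \emph{all} $(x,y)$ and hence $a$ constant, a contradiction. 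Next, by Witsenhausen's tensorization of maximal correlation \cite{witsenhausen} one has $\rho(X^l,Y^l)=\rho(X,Y)=:\rho$ for every $l$, so the covariance estimate $|\mathbb E[uv]-\mathbb E u\,\mathbb E v|\le\rho\sqrt{\mathrm{Var}(u)\mathrm{Var}(v)}$ is available for all real functions $u$ of $X^l$ and $v$ of $Y^l$.

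Granting this, I would argue by contradiction: suppose $(f_l,g_l)_{l\in\nn}$ as in $(1)$--$(3)$ existed, and write $a_k:=p_\bX^{\otimes l}(f_l^{-1}(k))$, $b_k:=p_\bY^{\otimes l}(g_l^{-1}(k))$ for $k\in\Gamma_l$. Applying the covariance estimate to the indicators of the events $\{f_l(X^l)=k\}$ and $\{g_l(Y^l)=k\}$ and bounding $\mathrm{Var}(\mathbbm{1}_A)=p(A)(1-p(A))\le p(A)$ gives $p^{\otimes l}(f_l^{-1}(k)\times g_l^{-1}(k))\le a_kb_k+\rho\sqrt{a_kb_k}$. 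Summing over $k$, and using $\sum_k a_k=\sum_k b_k=1$, $\sum_k a_kb_k\le\max_k a_k$ and Cauchy--Schwarz $\sum_k\sqrt{a_kb_k}\le1$, this yields
\[
p^{\otimes l}\bigl(\{(x^l,y^l):f_l(x^l)=g_l(y^l)\}\bigr)\le\max_k a_k+\rho .
\]
Taking, for each $l$, the index $k_l\in\Gamma_l$ maximizing $a_k$, assumption $(3)$ makes $\max_k a_k\to0$, so the agreement probability is $\le\rho+o(1)<1$, contradicting $(2)$. (Note that $(1)$ is not needed here; it is implied by $(3)$, and the estimate used only the tensorization identity, not $p\in\ri\mathfrak P(\bX\times\bY)$.)

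For the closedness statement I would first turn $\{p:CR(p)>0\}$ into a combinatorial set. A sequence of functions witnessing $CR(p)>0$ satisfies $(1)$--$(3)$ — the requirement $\limsup_l\|(f_l\times g_l)\circ p^{\otimes l}-\bar\delta_l\|_1=0$ forces both the agreement $(2)$ and, together with $|\Gamma_l|\to\infty$, the uniform vanishing $(3)$, while positivity of the rate gives $(1)$ — so by the contrapositive of the estimate above, $CR(p)>0$ forces $\rho(X,Y)=1$, hence the characteristic bipartite graph of $p$ (vertices: the positive-probability symbols of $\bX$ and of $\bY$; one edge per positive entry of $p$) is disconnected, hence, grouping its components into two, $p$ is block-diagonal for some bipartition $\bX=\bX_0\sqcup\bX_1$, $\bY=\bY_0\sqcup\bY_1$ into nonempty parts, i.e.\ $p(\bX_1\times\bY_0)=p(\bX_0\times\bY_1)=0$. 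As there are only finitely many such bipartitions $\pi$ and each set $D_\pi:=\{q\in\mathfrak P(\bX\times\bY):q(\bX_1\times\bY_0)=q(\bX_0\times\bY_1)=0\}$ is closed, $\{p:CR(p)>0\}$ sits inside the closed finite union $\bigcup_\pi D_\pi$. In the opposite direction, any block-diagonal $p$ both of whose diagonal blocks carry positive mass does have $CR(p)>0$, the common part being $\mathbbm{1}[X\in\bX_1]=\mathbbm{1}[Y\in\bY_1]$ (cf.\ \cite{gacs-koerner}); block-coding this bit over typical sequences produces a sequence with the required $\bar\delta_l$-property.

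The main obstacle is to close the gap between these two inclusions at the boundary of $\mathfrak P(\bX\times\bY)$: the closed set $D_\pi$ also contains block-diagonal distributions one of whose diagonal blocks has vanishing mass, and such a distribution may have $CR=0$. Hence, to deduce that $\{p:CR(p)>0\}$ is itself closed one must show that whenever $p_n\to p$ with $CR(p_n)>0$, then, after passing to a subsequence along which the witnessing bipartition $\pi$ is constant, the limit $p$ not only inherits the two (closed) defining equalities of $D_\pi$ but in addition keeps \emph{both} diagonal blocks non-vanishing — that is, the mass of $p_n$ cannot migrate entirely into one block in the limit. This is where the real work lies; the plan would be to track, along the subsequence, the connected components of the characteristic graph (which can only merge under small perturbations) and to argue that the associated common function stays non-constant in the limit.
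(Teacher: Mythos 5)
Your proof of the first (main) assertion is correct, but it follows a genuinely different route from the paper's. The paper never sums a covariance bound over the values $k$: it orders $\Gamma_l$, forms the cumulative marginal masses $A_m=\sum_{k\le m}a_l(k)$, $B_m=\sum_{k\le m}b_l(k)$, stops at the first index $\hat m$ at which both exceed a threshold $\sigma\in(0,1/2)$, shows $\max\{A_{\hat m},B_{\hat m}\}\le\sigma+2\varepsilon$, and thus produces \emph{binary} functions $\Theta_l\circ f_l$, $\Theta_l\circ g_l$ which agree with probability at least $1-\varepsilon$ while both have marginals in $[\sigma,1-\sigma]$; this is then ruled out by the binary agreement result of \cite{witsenhausen}, and $p\in\ri\mathfrak P(\bX\times\bY)$ enters exactly at that point. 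You replace this combinatorial binarization by the direct estimate $p^{\otimes l}(f_l=g_l)\le\max_k a_l(k)+\rho$, obtained from the covariance characterization of maximal correlation together with Witsenhausen's tensorization $\rho(X^l,Y^l)=\rho(X,Y)$, and you use full support only to guarantee $\rho<1$. Both arguments ultimately rest on the same reference, but yours is shorter, quantitative in $\rho$, and your side remarks are accurate: condition $(3)$ forces $\max_{k\in\Gamma_l} a_l(k)\to0$ (choose $k_l$ as the maximizer), and $(3)$ indeed implies $(1)$.

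For the closedness assertion you only outline a program and explicitly flag the remaining gap: after fixing a witnessing bipartition along a subsequence, one must exclude that the mass of $p_n$ migrates entirely into one block in the limit. This gap is genuine, and in fact it cannot be closed, because the assertion is false as stated. Take $\bX=\bY=\{0,1\}$ and $p_n(0,0)=1-1/n$, $p_n(1,1)=1/n$, $p_n(0,1)=p_n(1,0)=0$. Under $p_n$ one has $X=Y$ almost surely with $H(X)=h(1/n)>0$, so extracting nearly uniform bits from $X^l$ at any positive rate below $h(1/n)$ and setting $g_l=f_l$ shows $CR(p_n)>0$ for every $n\ge2$. But $p_n\to p=\delta_{(0,0)}$ in $\|\cdot\|_1$, and $CR(p)=0$: under $p^{\otimes l}$ the push-forward $(f_l\times g_l)\circ p^{\otimes l}$ is a point mass, whose $\ell_1$-distance to $\bar\delta_l$ is at least $2-2/|\Gamma_l|$, while any positive rate forces $|\Gamma_l|\to\infty$. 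Hence $\{p:CR(p)>0\}$ contains every $p_n$ but not its limit, so it is not closed. The paper's own argument breaks down at precisely the point you identified: it lets $p$ ``inherit'' the decompositions $\cup_i\bX_{i,n}\times\bY_{i,n}$ of the $p_n$ via $\supp(p)\subset\supp(p_n)$, but operational nontriviality requires at least two blocks of positive mass \emph{under $p$}, and this is exactly what can be lost in the limit. So you should not have tried to complete this step; any repaired statement must keep the masses of two components bounded away from zero along the sequence, for instance by fixing a threshold $c>0$ and considering the sets $\{p:CR(p)\ge c\}$ instead of $\{p:CR(p)>0\}$.
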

\begin{remark}
Theorem \ref{theorem:no-common-randomness-extraction} tells us not only that, for given $\bX,\bY$, $CR(p)_{\upharpoonright\mathrm{ri}\mathfrak P(\bX\times\bY)}=0$, but also that not even a \emph{polynomially} small amount of common randomness can be extracted!\\
It further states that every point at which $CR$ (as a function on $\mathfrak P(\bX\times\bY)$) is positive is also a point at which it is discontinuous, underlining the fragile character of this resource.
\end{remark}
The importance of this statement stems from the strategy of proof that is used in \cite{ahlswede-elimination}, \cite{abbn}, \cite{bbjn} in order to establish the Ahlswede Dichotomy (the original formulation can be found in \cite{ahlswede-elimination}, Theorem 1) in its various forms, namely for message transmission over classical channels using the average error criterion, entanglement transmission over quantum channels and message transmission over cq channels using the average error criterion.\\
In order to convert random codes into deterministic codes, in every of the three scenarios it turns out that establishing a small (only polynomial in the number of channel uses) amount of common randomness first is useful. We emphasize that, although an explicit example of the advantage that random codes offer over their deterministic counterparts has been given in \cite{ahlswede-blinovsky}, it is conjectured in \cite{abbn} that $\A_{\mathrm{det}}(\fri)=\A_{\mathrm{random}}(\fri)$ holds for every AVQC $\fri$. To complete the list of models and capacities that one could investigate in this matter we point out that no such example has been given for classical message transmission over AVQCs, although one should expect that the case $\overline C_{\mathrm{random}}(\fri)>\overline C_{\mathrm{det}}(\fri)$ occurs.\\
In any case, by Theorem \ref{theorem:no-common-randomness-extraction} it is clear that common randomness is a costly resource. This underlines the importance of the next two theorems, our main results, which are proven in Sections \ref{sec:correlation-and-messages} and \ref{sec:correlation-and-entanglement}.
\begin{theorem}\label{thm-2}
Let $\fri=\{\cn_s\}_{s\in\bS}\subset\mathcal C(\hr,\kr)$ denote a finite AVQC. If $(X,Y)$ is a random variable taking values in the finite alphabet $\bX\times\bY$ such that $I(X,Y)>0$, then for every $r\in\nn$ we have $\overline C(\fri,r,(X,Y))=\overline C_{\mathrm{random}}(\fri)$.
\end{theorem}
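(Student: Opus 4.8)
We sketch the plan. The inequality $\overline C(\fri,r,(X,Y))\le\overline C_{\mathrm{random}}(\fri)$ is the Remark preceding Definition~\ref{definition-associated-avcqc}, so only the reverse inequality needs work, and there is nothing to prove if $\overline C_{\mathrm{random}}(\fri)=0$. Fix $R<\overline C_{\mathrm{random}}(\fri)$ and a small $\eps>0$. Following the reasoning of Ahlswede and Cai \cite{ahlswede-cai}, the plan is a concatenation built around the associated AVcqC. For the inner code pick $N\ge r$ and a random code for message transmission over $\fri$ of blocklength $N$ with $M_0\ge 2^{N(\overline C_{\mathrm{random}}(\fri)-\eps)}$ messages and average error (worst over $s^N\in\bS^N$, averaged over the code) at most $\eps$; by the elimination technique \cite{ahlswede-elimination,abbn} it may be taken to be a uniform mixture of $T\le\mathrm{poly}(N)$ deterministic codes $(\mathrm{enc}_t,\mathrm{dec}_t)_{t\in[T]}$. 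Put $\mathfrak S_K:=\{\mathrm{enc}_t(i):t\in[T],\,i\in[M_0]\}$, a finite set, and let $\mathbb W_{\mathfrak S_K}$ be the associated AVcqC of Definition~\ref{definition-associated-avcqc} in the variant where one use of $\mathbb W_{\mathfrak S_K}$ stands for $N$ uses of $\fri$ but consumes only $n:=\lfloor N/r\rfloor\ge1$ copies of $(X,Y)$, its input letters being the functions $f:\bX^{n}\to[K]$. A deterministic $(L,M_L)$-code for $\mathbb W_{\mathfrak S_K}$ becomes an $((X,Y),r)$-correlated code for $\fri$ of length $LN$: in the $k$-th inner block the sender inputs $\rho_{f_{j,k}(x^{(k)})}$ to the channel (where $f_{j,k}$ is the $k$-th letter of the $\mathbb W_{\mathfrak S_K}$-codeword of message $j$ and $x^{(k)}$ the $k$-th source block), and the receiver applies the $\mathbb W_{\mathfrak S_K}$-decoder to $\hat\rho_{y^{(1)}}\otimes\sigma_1\otimes\cdots\otimes\hat\rho_{y^{(L)}}\otimes\sigma_L$, which it can form from the side information $y^{(1)},\ldots,y^{(L)}$. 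Since $W_{s^N}(f)$ already contains the source average and the error is affine in the channel, the average error is exactly preserved, and $\bS^{LN}$ matches the tuples of channel states seen by the outer code; hence $\overline C(\fri,r,(X,Y))\ge N^{-1}\overline C_{\mathrm{det}}(\mathbb W_{\mathfrak S_K})$. By the Ahlswede dichotomy for finite AVcqCs \cite{ahlswede-blinovsky}, $\overline C_{\mathrm{det}}(\mathbb W_{\mathfrak S_K})=\overline C_{\mathrm{random}}(\mathbb W_{\mathfrak S_K})$ as soon as $\mathbb W_{\mathfrak S_K}$ is non-symmetrizable, so it remains to establish non-symmetrizability and to lower-bound $\overline C_{\mathrm{random}}(\mathbb W_{\mathfrak S_K})$.

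For non-symmetrizability, recall that an AVcqC $\{V_s\}_{s\in\bS}$ is symmetrizable if there is a stochastic map $U(\cdot\mid\cdot)$ from its input letters to $\mathfrak P(\bS)$ with $\sum_s U(s\mid z)V_s(z')=\sum_s U(s\mid z')V_s(z)$ for all inputs $z,z'$. Suppose such a $U$ existed for $\mathbb W_{\mathfrak S_K}$ and write $\mathcal A_g:=\sum_{s^N}U(s^N\mid g)\cn_{s^N}\in\conv(\{\cn_{s^N}\}_{s^N\in\bS^N})$. Matching, in the symmetrizability identity for an input pair $(f,f')$, the coefficients of the pairwise orthogonal states $\hat\rho_{y^{n}}$ and dividing by $p_\bY^{\otimes n}(y^{n})$ yields, for every $y^{n}\in\supp(p_\bY^{\otimes n})$, the relation $\sum_{x^{n}}p^{\otimes n}(x^{n}\mid y^{n})\mathcal A_f(\rho_{f'(x^{n})})=\sum_{x^{n}}p^{\otimes n}(x^{n}\mid y^{n})\mathcal A_{f'}(\rho_{f(x^{n})})$. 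Choosing $f\equiv c$ constant makes the right-hand side equal to $\mathcal A_{f'}(\rho_c)$, independent of $y^{n}$, so $y^{n}\mapsto\sum_{x^{n}}p^{\otimes n}(x^{n}\mid y^{n})\mathcal A_c(\rho_{f'(x^{n})})$ must be constant on $\supp(p_\bY^{\otimes n})$. But $I(X,Y)>0$ provides $x^*\in\bX$ and $y',y''\in\supp(p_\bY)$ with $p(x^*\mid y')\ne p(x^*\mid y'')$, and since some component $\mathrm{enc}_t$ of the inner code decodes $M_0\ge2$ messages against $\mathcal A_c\in\conv(\{\cn_{s^N}\})$ with average error below $1-1/M_0$ (the worst-case bound $\eps$ passes to every convex combination by affineness), the states $\mathcal A_c(\rho)$, $\rho\in\mathfrak S_K$, are not all equal, so $\mathcal A_c(\rho_{i_0})\ne\mathcal A_c(\rho_{i_1})$ for some $\rho_{i_0},\rho_{i_1}\in\mathfrak S_K$. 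Letting $f'$ equal $i_0$ on $\{x^{n}:x^{n}_1=x^*\}$ and $i_1$ elsewhere and comparing the support points $(y',y_2,\ldots,y_{n})$ and $(y'',y_2,\ldots,y_{n})$ (arbitrary $y_j\in\supp(p_\bY)$) makes that quantity change by $\bigl(p(x^*\mid y')-p(x^*\mid y'')\bigr)\bigl(\mathcal A_c(\rho_{i_0})-\mathcal A_c(\rho_{i_1})\bigr)\ne0$, a contradiction; so $\mathbb W_{\mathfrak S_K}$ is non-symmetrizable. The same computation on the first inner coordinate also rules out symmetrizability of every tensor power of $\mathbb W_{\mathfrak S_K}$, should that be the form the dichotomy refers to.

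For the lower bound on $\overline C_{\mathrm{random}}(\mathbb W_{\mathfrak S_K})$, restrict the outer code to constant input functions and let its shared randomness reveal $t\in[T]$: then $\mathbb W_{\mathfrak S_K}$ degenerates (its $\hat\rho$ tensor factor being a fixed state) to $\{\cn_{s^N}\}_{s^N\in\bS^N}$ with inputs confined to $\mathfrak S_K$, so $\overline C_{\mathrm{random}}(\mathbb W_{\mathfrak S_K})\ge\max_P\min_{W\in\conv(\mathbb W_{\mathfrak S_K})}\chi(P,W)$. Take $P$ uniform over the constant functions pointing at the states $\mathrm{enc}_t(i)$, $(t,i)\in[T]\times[M_0]$. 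For $W_q=\sum_{s^N}q(s^N)W_{s^N}$, with $\mathcal A_q=\sum_{s^N}q(s^N)\cn_{s^N}$, reading the Holevo quantity $\chi(P,W_q)$ as a mutual information and applying the chain rule for the pair $(\text{shared-randomness index},\text{message})$ bounds it below by $T^{-1}\sum_t\chi_t(q)$, where $\chi_t(q)$ is the Holevo quantity of the uniform ensemble $\{\mathcal A_q(\mathrm{enc}_t(i))\}_{i\in[M_0]}$; Fano's inequality applied to $\mathrm{enc}_t$ (which decodes $M_0$ messages against $\mathcal A_q$ with average error $e_t(q)$, $T^{-1}\sum_t e_t(q)\le\eps$) gives $\chi_t(q)\ge(1-e_t(q))\log M_0-1$, hence $\chi(P,W_q)\ge(1-\eps)\log M_0-1$ uniformly in $q$. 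Since $\log M_0$ is $O(N)$ and $N^{-1}\log M_0\ge\overline C_{\mathrm{random}}(\fri)-\eps$, this yields $N^{-1}\overline C_{\mathrm{random}}(\mathbb W_{\mathfrak S_K})\ge\overline C_{\mathrm{random}}(\fri)-O(\eps)$, and therefore $\overline C(\fri,r,(X,Y))\ge N^{-1}\overline C_{\mathrm{det}}(\mathbb W_{\mathfrak S_K})=N^{-1}\overline C_{\mathrm{random}}(\mathbb W_{\mathfrak S_K})>R$ for $\eps$ small and $N$ large; letting $R\uparrow\overline C_{\mathrm{random}}(\fri)$ finishes the proof. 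I expect the main obstacle to be this last step together with the concatenation bookkeeping: making sure the inner random code can be taken with a \emph{finite} codeword set, keeping the Holevo--Fano estimate valid under the minimisation over the \emph{convex hull} $\conv(\{\cn_{s^N}\})$ and not merely over $\bS^N$ — which is exactly why the inner code must be seen to decode against every $\mathcal A_q$ — and checking that the $r$-generalised associated AVcqC is a legitimate finite AVcqC carrying $n=\lfloor N/r\rfloor\ge1$ source copies per block, so that the argument is uniform in $r$. The non-symmetrizability step, by contrast, is a short computation once the orthogonality of the $\hat\rho_{y^{n}}$ is used, and it is precisely where the hypothesis $I(X,Y)>0$ is spent.
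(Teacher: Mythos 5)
Your proposal is correct, but it takes a genuinely different route from the paper's. The paper splits the work in two: Theorem \ref{thm-1} shows (contrapositively) that $\overline C(\fri,r,(X,Y))=0$ forces every associated AVcqC to be symmetrizable, uses $I(X,Y)>0$ to extract a constant channel in the convex hull, and then rules out $\overline C_{\mathrm{random}}(\fri)>0$ via a detour through a classical AVC (Ericson/Csisz\'ar--Narayan); the proof of Theorem \ref{thm-2} then runs a two-phase protocol in which an $O(\log l)$ prefix of correlated-code uses establishes roughly $l^2$ values of common randomness and the remaining uses carry a finitely supported random code, the error being controlled by the Innerproduct Lemma of \cite{ahlswede-elimination}. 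You instead absorb the correlation into the associated AVcqC of Definition \ref{definition-associated-avcqc} built from an elimination-reduced, finitely supported near-optimal random code for $\fri$, prove its non-symmetrizability directly (your coefficient matching against the orthogonal $\hat\rho_{y^n}$, the constant function $f\equiv c$, and the observation that a code with error $\eps<1-1/M_0$ cannot decode through a channel collapsing all codeword states are sound, and this is exactly where $I(X,Y)>0$ enters), then invoke the Ahlswede--Blinovsky dichotomy and capacity formula \cite{ahlswede-blinovsky,bbjn} plus a Holevo--Fano estimate to get $\overline C_{\mathrm{det}}(\mathbb W_{\mathfrak S_K})=\overline C_{\mathrm{random}}(\mathbb W_{\mathfrak S_K})\geq(1-\eps)\log M_0-1$, and pull deterministic codes for $\mathbb W_{\mathfrak S_K}$ back to $((X,Y),r)$-correlated codes. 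Your route buys a single-pass argument with no common-randomness-generation phase, no Innerproduct Lemma and no classical-AVC detour — closer in spirit to Ahlswede--Cai \cite{ahlswede-cai} — at the price of needing the full capacity formula of \cite{ahlswede-blinovsky} rather than only its zero-capacity criterion; the paper's route, conversely, yields Theorem \ref{thm-1} as a stand-alone statement. Two routine points you should still spell out: padding to blocklengths not divisible by $N$ in the pullback $\overline C(\fri,r,(X,Y))\geq N^{-1}\overline C_{\mathrm{det}}(\mathbb W_{\mathfrak S_K})$, and that the finitely supported inner code is supplied by the random code reduction for the average error criterion (Lemmas 9 and 10 of \cite{bbjn}).
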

It will be clear from the proof of Theorem \ref{thm-2} that the corresponding results hold w.l.o.g. for AVcqCs as well, hence with the notation adapted in the obvious way we get
\begin{corollary}\label{corollary-2}
Let $\mathbb W=\{W_s\}_{s\in\bS}\subset CQ(\mathbf Z,\kr)$ denote a finite AVcqC. If $(X,Y)$ is a random variable taking values in the finite alphabet $\bX\times\bY$ such that $I(X,Y)>0$, then for every $r\in\nn$ we have $\overline C(\mathbb W,r,(X,Y))=\overline C_{\mathrm{random}}(\mathbb W)$.
\end{corollary}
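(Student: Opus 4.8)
The plan is to show that the proof of Theorem \ref{thm-2} is, up to terminology, insensitive to the quantum nature of the input system $\hr$ and therefore transfers to an AVcqC. I would start by making the identification precise: an AVcqC $\mathbb W=\{W_s\}_{s\in\bS}\subset CQ(\mathbf Z,\kr)$ is the AVQC obtained from $\{W_s\}_{s\in\bS}$ by fixing an orthonormal basis $\{|z\rangle\}_{z\in\mathbf Z}$ of $\hr:=\cc^{|\mathbf Z|}$ and pre-composing every $W_s$ with the channel that completely dephases in that basis. Under this identification, the average-error random message transmission codes of Definitions \ref{def:avc_rand-code}--\ref{def:avc_rand-achievability} and the $((X,Y),r)$-correlated codes (``with notation adapted in the obvious way'') correspond to the respective classes of Definition \ref{def:message-trans-with-average-error} and the correlated-code definition for an AVQC, and the quantities $\overline C_{\mathrm{random}}$ and $\overline C(\cdot,r,(X,Y))$ agree on both sides. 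So it is enough to check that every ingredient used in the proof of Theorem \ref{thm-2} has an AVcqC counterpart.

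Next I would run through those ingredients. The inequality $\overline C(\mathbb W,r,(X,Y))\le\overline C_{\mathrm{random}}(\mathbb W)$ is free, since an $((X,Y),r)$-correlated code is a particular random code. For the reverse inequality I would follow Ahlswede and Cai \cite{ahlswede-cai}: take a sequence of random codes of rate arbitrarily close to $\overline C_{\mathrm{random}}(\mathbb W)$; apply the robustification/elimination reduction so that the random code is replaced by a mixture over a controlled (polynomial-size) family of deterministic codes -- for AVcqCs this reduction is supplied by \cite{ahlswede-blinovsky} and \cite{bbjn} exactly as its AVQC analogue in \cite{abbn} is used in the proof of Theorem \ref{thm-2} -- and then build the $((X,Y),r)$-correlated code out of this family, splitting cases according to whether $\overline C_{\mathrm{det}}(\mathbb W)>0$ (in which case the Ahlswede dichotomy for AVcqCs gives $\overline C_{\mathrm{det}}(\mathbb W)=\overline C_{\mathrm{random}}(\mathbb W)$ and a correlated code that ignores $(X,Y)$ suffices) or $\overline C_{\mathrm{det}}(\mathbb W)=0$, i.e.\ $\mathbb W$ is symmetrizable. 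In the symmetrizable case the symmetrizability condition -- now the AVcqC version from \cite{ahlswede-blinovsky}/\cite{bbjn} in place of the AVQC version from \cite{abbn} -- enters precisely as in Theorem \ref{thm-2}, and is what lets the weak agreement coming from $I(X,Y)>0$ be converted into a working scheme.

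The part I expect to require actual care -- the rest being bookkeeping -- is the \emph{consistency} of this substitution: the robustification estimates, the symmetrizability characterisation, and the lemma that combines the reduced random code with $(X,Y)$ into an $((X,Y),r)$-correlated code must each be re-read over cq-channels, and one must verify that the only property of $(X,Y)$ these use is $I(X,Y)>0$ (and not some stronger extractability property -- consistent with Theorem \ref{theorem:no-common-randomness-extraction}). Because cq-channels form a restricted subclass of quantum channels and the auxiliary results of \cite{ahlswede-blinovsky} and \cite{bbjn} are the cq analogues of those of \cite{abbn}, this verification goes through, proving the corollary; equivalently, one may carry the entire argument out at the level of AVcqCs from the start, in which case the statement is simply the cq instance of the proof of Theorem \ref{thm-2}.
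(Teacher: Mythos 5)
Your proposal matches the paper's own treatment: the paper gives no separate argument for Corollary \ref{corollary-2} beyond the remark that the proof of Theorem \ref{thm-2} (together with its ingredient Theorem \ref{thm-1}) ``holds w.l.o.g. for AVcqCs as well, with the notation adapted in the obvious way,'' which is exactly the transfer-of-proof you carry out, with the cq auxiliary results of \cite{ahlswede-blinovsky} and \cite{bbjn} standing in for their AVQC counterparts from \cite{abbn}. Your dephasing identification and the case split on $\overline C_{\mathrm{det}}(\mathbb W)$ are just a mild repackaging of the same Ahlswede--Cai-style argument, so the approach is essentially the paper's.
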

\begin{theorem}\label{thm:entanglement-transmission-over-avqc}
Let $\fri=\{\cn_s\}_{s\in\bS}\subset\mathcal C(\hr,\kr)$ denote a finite AVQC. If $(X,Y)$ is a random variable taking values in the finite alphabet $\bX\times\bY$ such that $I(X,Y)>0$, then $\A(\fri,r,(X,Y))=\A_{\mathrm{random}}(\fri)$ for all $r\in\nn$.
\end{theorem}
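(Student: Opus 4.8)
The inequality $\A(\fri,r,(X,Y))\le\A_{\mathrm{random}}(\fri)$ is immediate, since an $((X,Y),r)$ code is by definition a particular kind of random code. For the converse the plan is \emph{not} to extract common randomness from $(X,Y)$ — that is impossible in general by Theorem \ref{theorem:no-common-randomness-extraction} — but rather to have the parties \emph{produce} the shared randomness they need locally (which is free) and \emph{transmit} it over the channel, exploiting that $(X,Y)$ already boosts \emph{message} transmission. If $\A_{\mathrm{random}}(\fri)=0$ there is nothing to show, so assume $\A_{\mathrm{random}}(\fri)>0$. Since $\A_{\mathrm{random}}(\fri)\le\overline C_{\mathrm{random}}(\fri)$ (a code transmitting a $k$-dimensional entangled system with entanglement fidelity close to one also transmits $\lfloor\log k\rfloor$ classical bits with small average error), Theorem \ref{thm-2} gives $c:=\overline C(\fri,r,(X,Y))=\overline C_{\mathrm{random}}(\fri)>0$.

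Now fix $R<\A_{\mathrm{random}}(\fri)$ and $\eps\in(0,1)$. For each $l$ I would split the $l$ channel uses into a \emph{signalling phase} of length $\lceil\eps l\rceil$ and a \emph{payload phase} of length $l-\lceil\eps l\rceil$, and split the $n(l)=\lfloor l/r\rfloor$ observations of $(X,Y)$ accordingly. For the payload phase I invoke the refinement of the random coding theorem proved in \cite{abbn}: rates arbitrarily close to $\A_{\mathrm{random}}(\fri)$ are achievable by random entanglement transmission codes that are uniform mixtures of a number of \emph{deterministic} codes growing only polynomially in the block length. This produces, for each $l$, deterministic codes $(\cP^{(1)},\crr^{(1)}),\dots,(\cP^{(K_l)},\crr^{(K_l)})$ of block length $l-\lceil\eps l\rceil$, with $K_l$ polynomial in $l$, with $\log k_l/l\to(1-\eps)R$, and with
\[
\min_{s}\ \frac1{K_l}\sum_{k=1}^{K_l}F_e\bigl(\pi_{\fr_l},\ \crr^{(k)}\circ\cn_{s}\circ\cP^{(k)}\bigr)\ \longrightarrow\ 1\qquad(l\to\infty),
\]
the minimum ranging over channel sequences of the payload block length. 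In the signalling phase the sender picks $j\in[K_l]$ uniformly at random using \emph{local} randomness and transmits $j$ to the receiver by means of an $((X,Y),r)$ correlated \emph{message} code of block length $\lceil\eps l\rceil$; since $\log K_l=O(\log l)\ll c\,\eps l$, Theorem \ref{thm-2} provides such a code whose average error (over the channel sequence, hence over the uniformly distributed $j$) tends to $0$.

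These two pieces assemble into an honest $((X,Y),r)$ entanglement transmission code: the encoder $\cP_{x^{n(l)}}$ is the CPTP map that internally samples $j$, prepares the signalling-phase channel inputs by applying the source-dependent message encoder to $j$, and applies $\cP^{(j)}$ to its payload input; the decoder $\crr_{y^{n(l)}}$ first measures the signalling-phase outputs to obtain an estimate $\hat\jmath$ and then applies $\crr^{(\hat\jmath)}$ to the payload outputs. Thus encoder and decoder depend on the source only through the message sub-code, $\mu_l$ has the product form required by the definition, and the payload codes, being deterministic for $\fri$, consume none of the source. For any adversary choice $s^l=(s^{\mathrm{sig}},s^{\mathrm{pay}})$ — fixed in advance, so that $s^{\mathrm{pay}}$ is independent of the internally sampled $j$ — splitting on $\{\hat\jmath=j\}$ and using that $j$ is uniform yields
\[
1-\sum_{x^{n(l)},y^{n(l)}}p^{\otimes n(l)}(x^{n(l)},y^{n(l)})\,F_e\bigl(\pi_{\fr_l},\ \crr_{y^{n(l)}}\circ\cn_{s^l}\circ\cP_{x^{n(l)}}\bigr)\ \le\ \Pr[\hat\jmath\neq j]\ +\ \frac1{K_l}\sum_{k=1}^{K_l}\Bigl(1-F_e\bigl(\pi_{\fr_l},\ \crr^{(k)}\circ\cn_{s^{\mathrm{pay}}}\circ\cP^{(k)}\bigr)\Bigr),
\]
where $\Pr[\cdot]$ averages over $j$, the signalling-phase source and the decoder's measurement; both right-hand terms tend to $0$ uniformly in $s^l$ by the previous paragraph. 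Hence every rate below $(1-\eps)R$ is an achievable $((X,Y),r)$ rate, and letting $\eps\to0$ and $R\to\A_{\mathrm{random}}(\fri)$ gives $\A(\fri,r,(X,Y))\ge\A_{\mathrm{random}}(\fri)$.

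The step I expect to require the most care is the verification just sketched — that the two-phase scheme genuinely belongs to the class of $((X,Y),r)$ codes of Section \ref{sec:Definitions}: that the coin flip for $j$ and the estimate $\hat\jmath$ are absorbed into $\cP_{x^{n(l)}}$ and $\crr_{y^{n(l)}}$, that these depend on $(x^{n(l)},y^{n(l)})$ only through the message sub-code, and that the payload codes may be taken deterministic by the polynomial derandomization of \cite{abbn}. Everything else is the Ahlswede--Cai reduction of \cite{ahlswede-cai} transcribed to the quantum setting, with Theorem \ref{thm-2} in the role that the classical coding theorem with correlated side information plays there. As a sanity check: when $\fri$ is not symmetrizable the assertion is already immediate from the Quantum Ahlswede Dichotomy of \cite{abbn} together with $\A_{\mathrm{det}}(\fri)\le\A(\fri,r,(X,Y))\le\A_{\mathrm{random}}(\fri)$, so the genuinely new content lies in the symmetrizable case, where Theorem \ref{thm-2} — itself established via the symmetrizability condition — is indispensable.
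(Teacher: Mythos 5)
Your proposal is correct and follows essentially the same route as the paper's proof: invoke Theorem \ref{thm-2} to get a positive-rate $((X,Y),r)$ message code, use the polynomial (in the paper, $l^2$-fold) derandomization of the random entanglement transmission codes from \cite{abbn} for the payload, send the code index over a short prefix, and combine the two phases via an inner-product/union-type estimate, absorbing the local coin and the index measurement into the correlated encoder and decoder exactly as the paper does. The only differences are cosmetic: the paper uses a prefix of length $O(\log l)$ matched so that the number of messages equals the number of deterministic codes (avoiding any rate loss or message-merging detail), whereas you use an $\eps l$ prefix and let $\eps\to0$, and you make explicit the trivial case $\A_{\mathrm{random}}(\fri)=0$ and the inequality $\A_{\mathrm{random}}(\fri)\le\overline C_{\mathrm{random}}(\fri)$ that the paper leaves implicit.
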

Another important result is the equivalence of maximal - and average error criterion for AVQCs. This result shoud be compared to Theorem 2 a) and Theorem 3 a) in \cite{ahlswede-elimination}. The capacity for message transmission over an arbitrarily varying channel does in general depend on which of the two criteria one uses. However, in the case of randomized encoding, the two capacities coincide.\\
The codes that are used in our definition of the two capacities allow for a randomized encoding, since the signal states that get fed into the channel at senders side are allowed to be mixed. Taking this into account, the following result is not too surprising:
\begin{theorem}\label{thm:equivalence-of-average-and-maximal-error}
Let $\fri$ be a finite AVQC. Then $\overline C_{\mathrm{det}}(\fri)=C_{\mathrm{det}}(\fri)$.
\end{theorem}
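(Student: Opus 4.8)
The plan is to show the nontrivial inequality $\overline C_{\mathrm{det}}(\fri)\le C_{\mathrm{det}}(\fri)$; the reverse inequality is immediate from the remark above, since any code good for the maximal error criterion is good for the average error criterion. Fix an achievable rate $R$ for $\overline C_{\mathrm{det}}(\fri)$ and a sequence of deterministic $(l,M_l)$-codes $(\mathcal P^l,\mathbf D^l)$ with $\frac1l\log M_l\to R$ and average success probability tending to $1$, uniformly over $s^l\in\bS^l$. The standard trick (Markov inequality / expurgation, as in \cite{ahlswede-elimination}, Theorem 3 a)) is: if the \emph{average} error $\bar e_l(s^l):=\frac{1}{M_l}\sum_i\tr\{( \eins-D_i)\cn_{s^l}(\cP(i))\}$ is at most $\lambda_l$ for all $s^l$, then for each fixed $s^l$ at most half the messages can have individual error exceeding $2\lambda_l$. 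The obstacle is that the set of ``bad'' messages depends on $s^l$, and here $\bS^l$ has exponentially many elements, so one cannot simply throw away a $\bS^l$-independent half of the codewords.

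First I would resolve this by exploiting that the signal states fed into the channel may be \emph{mixed}, i.e. the encoder $\mathcal P^l\in CQ([M_l],\hr^{\otimes l})$ is itself a randomized map. The idea, following the spirit of \cite{ahlswede-elimination}, is to build a new encoder by \emph{prefixing} a randomization over messages: replace the message set $[M_l]$ by $[M_l']$ with $M_l'=\lfloor M_l/2\rfloor$ (say), and for each new message $j$ let the new encoder send $\cP(\pi(j))$ where $\pi$ is a suitable (random, but fixed once chosen) injection $[M_l']\to[M_l]$; averaging the individual error over the random choice of $\pi$, together with the average-error guarantee, gives a single deterministic choice of $\pi$ for which \emph{every} new message has small individual error \emph{simultaneously for all} $s^l$. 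Concretely one uses a union bound over $\bS^l$: the probability (over $\pi$) that some message $j$ and some $s^l$ give individual error $>\eps$ is at most $|\bS|^l M_l' \cdot (\text{small})$, and since $\frac1l\log(|\bS|^l M_l')$ stays bounded while the individual-error tail can be made to decay faster than any exponential — here one uses that $\lambda_l\to 0$, so by a secondary expurgation one may first pass to the sub-block structure and obtain doubly-exponential concentration — the union bound closes. This is essentially the "random code reduction'' argument; the key point is that randomized \emph{encoding} is already built into the definition, so no extra resource (no common randomness) is consumed.

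The cleanest route, which I would prefer, avoids the double-expurgation subtlety as follows. Since $R$ is achievable with average error, it is in particular achievable with average error tending to $0$ \emph{exponentially}: one sacrifices an arbitrarily small rate by the elimination technique of \cite{abbn,bbjn} (or simply notes that the existing coding theorems for $\overline C_{\mathrm{det}}$ already deliver exponentially small average error). With $\bar e_l(s^l)\le 2^{-cl}$ uniformly in $s^l$, the Markov step gives, for each $s^l$, at most $M_l 2^{-cl/2}$ messages with individual error $>2^{-cl/2}$. Deleting, over all $s^l\in\bS^l$, every message that is bad for \emph{some} $s^l$ removes at most $|\bS|^l\cdot M_l 2^{-cl/2}$ messages, which is a vanishing fraction of $M_l$ provided $c$ is large enough relative to $\log|\bS|$ (achievable by first boosting the exponent via blocking). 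The surviving code has $M_l' = M_l(1-o(1))$ messages, hence rate $\ge R$, and \emph{maximal} error $\le 2^{-cl/2}\to 0$ uniformly in $s^l$, i.e. it witnesses $R\le C_{\mathrm{det}}(\fri)$.

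I expect the main obstacle to be the bookkeeping that upgrades the average-error decay from merely $o(1)$ to exponential with an exponent beating $\log|\bS|$: one must invoke a blocking/elimination argument (sending many independent copies of a base code and a small ``location'' header) to amplify the error exponent without losing rate, and one must check that this amplification is compatible with the AVQC setting, where the adversary acts on the full block. This is exactly where the finiteness $|\bS|<\infty$ is used. Everything else — the Markov inequality, the union bound over $\bS^l$, the fact that expurgating an $o(1)$-fraction of messages does not change the rate — is routine. I would also remark, as the paragraph preceding the theorem does, that the result hinges on allowing mixed-state inputs and general POVMs; for pure-state inputs and projective measurements the classical counterexamples of \cite{ahlswede-elimination} show the two capacities can genuinely differ.
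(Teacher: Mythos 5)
Your reverse inequality is the nontrivial one, and the route you choose for it — expurgation after boosting the average-error decay to an exponential $2^{-cl}$ with exponent $c$ beating $2\log|\bS|$ — has a genuine gap at exactly the step you flag as "bookkeeping". There is no way, in general, to obtain deterministic codes at rates close to $\overline C_{\mathrm{det}}(\fri)$ whose average error decays with an exponent larger than $\log|\bS|$: error exponents vanish as the rate approaches capacity and are bounded by channel parameters unrelated to $|\bS|$, and the proposed "blocking" does not help, because grouping $k$ channel uses into a super-letter replaces the state alphabet $\bS$ by $\bS^k$, so the comparison between the exponent and the size of the adversary's choice set is invariant; repetition-type amplification trades rate for exponent and destroys the rate. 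Your first route (random injection $\pi$ plus a union bound over $\bS^l$) fails for the same reason: each event "message $j$ is bad for $s^l$" has probability only $O(\lambda_l)$, i.e.\ $o(1)$, which cannot survive a union bound over $|\bS|^l$ terms, and the "doubly-exponential concentration via secondary expurgation" is asserted, not proved. There is also a structural reason the expurgation strategy cannot be repaired: your preferred argument never uses the fact that the encoder maps messages to \emph{mixed} states, so if it worked it would equally prove equality of average- and maximal-error capacities for classical AVCs with deterministic encoders, which is false (this is precisely the point of the remark preceding the theorem and of Theorems 2 a) and 3 a) in \cite{ahlswede-elimination}).

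The paper avoids expurgation altogether. It first shows $\overline C_{\mathrm{det}}(\fri)>0$ iff $C_{\mathrm{det}}(\fri)>0$, using the symmetrizability characterization (Theorem 42 of \cite{abbn}) applied to a two-message code to derive a contradiction of the form $1-2\eps\leq 2\eps$. Then, when $C_{\mathrm{det}}(\fri)>0$, it proves $C_{\mathrm{det}}(\fri)=C_{\mathrm{random}}(\fri)$ by Ahlswede's random code reduction (a Bernstein/Markov-type bound showing $L^2$ deterministic codes drawn from a good random code suffice for the maximal error criterion) followed by the usual derandomization: use the positive maximal-error capacity to establish $O(\log l)$ bits of common randomness, then run the reduced random code. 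Finally it shows $\overline C_{\mathrm{random}}(\fri)=C_{\mathrm{random}}(\fri)$ by symmetrizing any average-error-good random code over all permutations of the message labels: the resulting random code has maximal error equal to the original average error, and this is the step that does the work you tried to do by expurgation. Chaining $\overline C_{\mathrm{det}}\leq\overline C_{\mathrm{random}}=C_{\mathrm{random}}=C_{\mathrm{det}}$ (in the positive case) gives the theorem. If you want to salvage your write-up, you would have to replace the exponent-boosting and union-bound steps by an argument of this permutation/common-randomness type, which genuinely exploits the randomized encoding and the positivity dichotomy rather than per-message error decay.
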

In applications it might be necessary to check whether a given AVQC has a positive capacity for transmission of classical messages or not. This can be done by checking whether it is $l-$symmetrizable for every $l\in\nn$ or not (see Definition 39, Theorem 40 in \cite{abbn}). We provide a slightly less complicated formula in Theorem \ref{thm:pure-state-symmetrizability}. Even though it has a simplified structure, Theorem \ref{thm:pure-state-symmetrizability} does not yet provide a substantial benefit over the original formulation, at least when it comes to calculations.\\
However, we found yet another equivalent reformulation of the notion of $l-$symmetrizability:
\begin{theorem}\label{thm:geometric-variant-of-symmetrizability}
Let $l\in\nn$ and $A_1,\ldots,A_K$ be a set of operators such that $\cs(\hr^{\otimes l})\subset\conv(\{A_i\}_{i=1}^K)$. A finite AVQC $\fri=\{\cn_s\}_{s\in\bS}$ is $l-$symmetrizable if and only if there is a set $\{p_i\}_{i=1}^K\subset\mathfrak P(\bS^l)$ of probability distributions such that
\begin{align}
\sum_{s^l\in\bS^l}q_j(s^l)\cn_{s^l}(A_i)=\sum_{s^l\in\bS^l}q_i(s^l)\cn_{s^l}(A_j)\qquad\forall i,j\in[K].
\end{align}
\end{theorem}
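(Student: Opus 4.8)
The plan is to deduce Theorem~\ref{thm:geometric-variant-of-symmetrizability} from the ``affine'' reformulation of the symmetrizability condition: by \cite{abbn} (and by the argument behind Theorem~\ref{thm:pure-state-symmetrizability}), $\fri$ is $l$-symmetrizable if and only if there is an \emph{affine} map $p:\cs(\hr^{\otimes l})\to\mathfrak P(\bS^l)$, $\rho\mapsto p_\rho$, with $\sum_{s^l}p_{\rho_1}(s^l)\cn_{s^l}(\rho_2)=\sum_{s^l}p_{\rho_2}(s^l)\cn_{s^l}(\rho_1)$ for all states $\rho_1,\rho_2$. Throughout I would use that $(q,B)\mapsto\sum_{s^l}q(s^l)\cn_{s^l}(B)$ is bilinear (linear in $q\in\rr^{\bS^l}$ and in the operator $B$); since the $\cn_{s^l}$ are trace preserving, taking traces in this bilinear form forces all the $\tr A_i$ to coincide once the associated vectors are probability vectors, so I would assume throughout (w.l.o.g., by rescaling) that $\tr A_i=1$ for all $i$.

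For ``$\Leftarrow$'': given $\{q_i\}_{i=1}^K\subset\mathfrak P(\bS^l)$ with $\sum_{s^l}q_j(s^l)\cn_{s^l}(A_i)=\sum_{s^l}q_i(s^l)\cn_{s^l}(A_j)$ for all $i,j$, use that $\cs(\hr^{\otimes l})\subset\conv(\{A_i\})$: every state $\rho$ has a (not necessarily unique) representation $\rho=\sum_i\lambda_i(\rho)A_i$ with $\lambda_i(\rho)\ge 0$, $\sum_i\lambda_i(\rho)=1$; fix one and set $p_\rho:=\sum_i\lambda_i(\rho)q_i\in\mathfrak P(\bS^l)$. For $\rho=\sum_i\lambda_iA_i$, $\sigma=\sum_j\mu_jA_j$, bilinearity and the hypothesis give $\sum_{s^l}p_\rho(s^l)\cn_{s^l}(\sigma)=\sum_{i,j}\lambda_i\mu_j\sum_{s^l}q_i(s^l)\cn_{s^l}(A_j)=\sum_{i,j}\lambda_i\mu_j\sum_{s^l}q_j(s^l)\cn_{s^l}(A_i)=\sum_{s^l}p_\sigma(s^l)\cn_{s^l}(\rho)$, so $\fri$ is $l$-symmetrizable; if the affine form of the definition is insisted upon, prescribe $p$ by this formula on a fixed affinely independent spanning family of states and extend affinely, the same computation then holding on all of $\aff\cs(\hr^{\otimes l})$.

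For ``$\Rightarrow$'': let $p$ be an affine symmetrizing map. Since $\cs(\hr^{\otimes l})$ spans the real space of Hermitian operators on $\hr^{\otimes l}$, $p$ extends uniquely to a linear map $L$ on that space, and $L$ is positive ($L(\rho)=(\tr\rho)\,p_{\rho/\tr\rho}\ge 0$ componentwise for every $\rho\ge 0$). Put $q_i':=L(A_i)$. The form $(B,C)\mapsto\sum_{s^l}L(B)(s^l)\cn_{s^l}(C)-\sum_{s^l}L(C)(s^l)\cn_{s^l}(B)$ is bilinear and, by the symmetrizing identity, vanishes on $\cs(\hr^{\otimes l})\times\cs(\hr^{\otimes l})$; as $\cs(\hr^{\otimes l})$ spans, it vanishes identically, so in particular $\sum_{s^l}q_i'(s^l)\cn_{s^l}(A_j)=\sum_{s^l}q_j'(s^l)\cn_{s^l}(A_i)$ for all $i,j$ and $\sum_{s^l}q_i'(s^l)=1$. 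Thus $(q_i')_i$ already solves the linear system in the statement. The crucial remaining point is that the $A_i$ generically lie \emph{outside} $\cs(\hr^{\otimes l})$ — indeed no polytope contained in the cone of positive operators can contain $\cs(\hr^{\otimes l})$ once $\dim\hr^{\otimes l}\ge 2$ — so the positive map $L$ need not send them to nonnegative vectors, and the $q_i'$ need not be probability distributions. To close this gap I would show that the affine solution set $\mathcal Q:=\{(q_i)_{i=1}^K:\sum_{s^l}q_i(s^l)\cn_{s^l}(A_j)=\sum_{s^l}q_j(s^l)\cn_{s^l}(A_i)\ \forall i,j,\ \sum_{s^l}q_i(s^l)=1\ \forall i\}$, which contains $(q_i')_i$, must meet $\prod_{i=1}^K\mathfrak P(\bS^l)$ whenever $\fri$ is $l$-symmetrizable: by a Farkas/separation argument, if it did not there would be Hermitian $Y_{ij}=-Y_{ji}$ and reals $\alpha_i$ with $\alpha_i+\sum_{j\ne i}\langle Y_{ij},\cn_{s^l}(A_j)\rangle\ge 0$ for all $i,s^l$ but $\sum_i\alpha_i<0$, and pairing this certificate with the convex decompositions of states into the $A_i$ and with the extended symmetrizing identity should contradict the symmetrizability of $\cs(\hr^{\otimes l})$ itself. (Equivalently, one may argue that the nonempty compact convex set of affine symmetrizing maps must, by $\cs(\hr^{\otimes l})\subset\conv(\{A_i\})$, contain one whose linear extension maps all of $\conv(\{A_i\})$ into $\mathfrak P(\bS^l)$, and restrict it to the $A_i$.) This passage from ``nonnegative on states'' to ``nonnegative on the $A_i$'' is exactly the main obstacle; by contrast ``$\Leftarrow$'' and the derivation of the equations in ``$\Rightarrow$'' are only bookkeeping with bilinear forms.
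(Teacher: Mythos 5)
Your ``$\Leftarrow$'' direction is fine and is in substance the paper's own argument: fixing a convex decomposition of each state into the $A_i$ and pushing the given $q_i$ forward is exactly the convex-reduction step of Theorem \ref{theorem-elementary-convex-reduction}, and the bookkeeping in ``$\Rightarrow$'' that yields the linear equations together with the normalization $\sum_{s^l}q_i'(s^l)=1$ is also correct.

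The genuine gap is precisely the step you yourself flag as the main obstacle in ``$\Rightarrow$'', and it cannot be closed in the stated generality, so the Farkas/separation sketch will not go through. Concrete obstruction: take $\hr=\cc^2$, $\kr=\cc^3$ with orthonormal basis $e_1,e_2,e_3$, $\bS=\{0,1\}$, and the measure-and-prepare channels $\cn_0(\rho)=\langle0|\rho|0\rangle\,|e_1\rangle\langle e_1|+\langle1|\rho|1\rangle\,|e_2\rangle\langle e_2|$, $\cn_1(\rho)=\langle0|\rho|0\rangle\,|e_2\rangle\langle e_2|+\langle1|\rho|1\rangle\,|e_3\rangle\langle e_3|$ (the quantum lift of the classical ``output is the unordered pair of code symbol and jamming symbol'' channel). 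For any finite set of states the assignment $q_\rho:=(\langle0|\rho|0\rangle,\langle1|\rho|1\rangle)$ symmetrizes, by the symmetry of the kernel, so this AVQC is $1$-symmetrizable (indeed $l$-symmetrizable for every $l$). Now let $A_i:=\frac12(\eins+3\,\hat n_i\cdot\vec\sigma)$, $i=1,\ldots,4$, be the vertices of a regular tetrahedron of trace-one Hermitian operators whose insphere is the Bloch ball, oriented so that $A_1,A_2$ have diagonals $(1+c,-c)$ and $(-c,1+c)$ with $c=(\sqrt3-1)/2>0$; then $\cs(\cc^2)\subset\conv(\{A_i\})$. Since the channels see only diagonals and $e_1,e_2,e_3$ are linearly independent, the single equation $\sum_sq_2(s)\cn_s(A_1)=\sum_sq_1(s)\cn_s(A_2)$ forces $q_2(0)(1+c)=-c\,q_1(0)$ and $-c\,(1-q_2(0))=(1+c)(1-q_1(0))$; its unique real solution is $q_1(0)=1+c$, $q_2(0)=-c$ --- exactly the affine extension you construct, and it leaves the simplex --- while nonnegativity in the first equation forces $q_1(0)=q_2(0)=0$, contradicting the second. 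So there is an $l$-symmetrizable AVQC and an admissible family $\{A_i\}$ for which the system has no solution in $\mathfrak P(\bS^l)^K$: ``nonnegative on states'' genuinely does not propagate outward to the $A_i$. (Two smaller points cut the same way: rescaling to $\tr A_i=1$ is not w.l.o.g.\ in ``$\Rightarrow$'', since rescaling changes the equations and unequal traces already make the conclusion impossible by trace preservation; and the globally affine symmetrizing map you start from is stronger than the finite-set definition of $l$-symmetrizability and needs a separate compactness argument, existing in general only modulo the kernel of $q\mapsto\sum_{s^l}q(s^l)\cn_{s^l}$.) For comparison, the paper's own short proof does not bridge this gap either: both of its sentences invoke Theorem \ref{theorem-elementary-convex-reduction} with $\mathfrak S=\{A_i\}$ and hence only establish that existence of the $q_i$ implies $l$-symmetrizability (the second sentence is the contrapositive of the first), whereas the direction you are attempting --- from symmetrizability on $\cs(\hr^{\otimes l})$ to vertices lying outside the state space --- is beyond the reach of the convex reduction. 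You have isolated the real difficulty correctly; it cannot be resolved along the lines sketched without additional hypotheses on the $A_i$.
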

Theorem \ref{thm:geometric-variant-of-symmetrizability} not only provides a more handy criterion, it also provides a more geometric view of the symmetrizability condition.\\
We did not pursue the question further, but $K=d^{2l}$ should be a sufficient number of operators $A_i$ in above theorem, corresponding to an embedding of the set of states into a simplex.
\end{section}
%------------------------------------------------------------------------------------
\begin{section}{\label{sec:common-randomness}High Price for Common Randomness}
In a first step, we give minimal requirements for the kind of common randomness we can put to use in order to achieve a message transmission capacity $\overline C_{\mathrm{random}}(\fri)$ or an entanglement transmission capacity of $\A_{\mathrm{random}}(\fri)$. Please notice that, as explained in Section \ref{sec:Main Results}, it is still an open question whether common randomness offers any advantage \emph{at all} for entanglement transmission over AVQCs. In contrast to this assumption, it has been proven in \cite{ahlswede-blinovsky} that common randomness \emph{is} advantageous for AVcqCs!\\
Note that we can safely assume $\overline C_{\mathrm{det}}(\fri)=0$, since otherwise we already achieve above numbers (in case of entanglement transmission, this is stated and proven as Theorem 5 in \cite{abbn}, for message transmission it is obvious for the expert but yet unproven), even without the use of common randomness.
\begin{lemma}\label{lemma:necessary-condition-for-random-code}
Let $\fri=\{\cn_s\}_{s\in\bS}$ be a symmetrizable AVQC. A sequence $(\gamma_l)_{l\in\nn}$ of finitely supported random codes can lead to a positive message transmission rate over $\fri$ only if it satisfies
\begin{align}
1)&\qquad \liminf_{l\to\infty}\gamma_l(k_l,k_l')=0,\\
2)&\qquad \liminf_{l\to\infty}\gamma_{S,l}(k_l)=0,\\
3)&\qquad \liminf_{l\to\infty}\gamma_{R,l}(k_l')=0
\end{align}
for all possible sequences $(k_l,k_l')_{l\in\nn}$ satisfying $k_l,k_l'\in\Gamma_l$ for every $l\in\nn$ and with $\gamma_{S,l}$ and $\gamma_{R,l}$ being the marginal distributions of $\gamma_l$ at sender and receivers side ($l\in\nn$).
\end{lemma}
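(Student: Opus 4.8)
\emph{Proof strategy (plan).} The plan is to argue by contraposition. Suppose $(\gamma_l)_{l\in\nn}$ is a sequence of finitely supported random codes whose message sets have sizes $M_l$ and that leads to a positive message transmission rate over $\fri$, i.e.\ $\liminf_{l\to\infty}\frac1l\log M_l>0$ while the average error tends to zero uniformly in the state sequence; writing $\bar e_l(s^l):=\int\frac1{M_l}\sum_{i=1}^{M_l}\tr\{(\eins-D_i)\cn_{s^l}(\cP(i))\}\,d\gamma_l(\cP,\mathbf D)$ this means $\max_{s^l\in\bS^l}\bar e_l(s^l)\to0$. Since $\gamma_l$ is finitely supported we index its support by $\Gamma_l\times\Gamma_l$, so that with probability $\gamma_l(k,k')$ the sender applies the cq-channel $\cP_k\in CQ(M_l,\hr^{\otimes l})$ and the receiver the POVM $\mathbf D_{k'}\in\mathcal M_{M_l}$, with $\gamma_{S,l},\gamma_{R,l}$ the two marginals. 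I will show that if any one of 1)--3) is violated, one can build a sequence of \emph{deterministic} $(l,M_l)$-codes with error tending to zero, contradicting $\overline C_{\mathrm{det}}(\fri)=0$, which holds because $\fri$ is symmetrizable \cite{abbn}.

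First, a conditioning step. Negating, say, 2) means there is a sequence $(k_l)_{l\in\nn}$ with $\liminf_l\gamma_{S,l}(k_l)>0$; by the definition of $\liminf$ this yields a constant $c>0$ with $\gamma_{S,l}(k_l)\ge c$ for \emph{all} sufficiently large $l$. Conditioning $\gamma_l$ on the event that the sender applies $\cP_{k_l}$ produces a code with the fixed encoder $\cP_{k_l}$ and a decoder drawn according to $k'\mapsto\gamma_l(k_l,k')/\gamma_{S,l}(k_l)$; since the error functional is nonnegative, the conditional average error at any $s^l$ is at most $\bar e_l(s^l)/\gamma_{S,l}(k_l)\le\bar e_l(s^l)/c$, hence still tends to $0$ uniformly in $s^l$. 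The violations of 1) and of 3) are treated the same way: one conditions on the sender applying $\cP_{k_l}$ together with the receiver applying $\mathbf D_{k_l'}$, respectively on the receiver applying $\mathbf D_{k_l'}$ alone, and again the conditional error is bounded by $\bar e_l(s^l)$ divided by a positive constant.

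Second, a derandomization step. The average-error functional is affine in the decoder $\mathbf D$ (by linearity of $\tr$) and affine in the encoder $\cP$ (by linearity of each $\cn_{s^l}$), and the classes $\mathcal M_{M_l}$ and $CQ(M_l,\hr^{\otimes l})$ are convex. Hence replacing a randomized decoder by the single POVM whose $i$-th element is the average of the $i$-th elements against the conditional distribution --- a legitimate element of $\mathcal M_{M_l}$ --- leaves the average error at every $s^l$ unchanged, and likewise replacing a randomized encoder by the cq-channel whose output states are the corresponding averages. Applying this to the conditional codes from the previous step yields, in each of the three cases, a bona fide deterministic $(l,M_l)$-code whose average error still tends to $0$ uniformly over $\bS^l$, with $M_l$ unchanged and hence $\liminf_l\frac1l\log M_l>0$. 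This contradicts $\overline C_{\mathrm{det}}(\fri)=0$ and finishes the proof.

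I do not expect a genuine obstacle; the one point requiring attention is the passage from ``a condition fails'' to a bound valid for all large $l$ (not merely along a subsequence), so that the constructed deterministic codes form an admissible sequence for the $\liminf$-based definition of $\overline C_{\mathrm{det}}(\fri)$ --- which is exactly what $\liminf_l\gamma_{S,l}(k_l)>0$ (and its analogues) guarantees. Beyond that, the substance is borrowed entirely from the characterization $\overline C_{\mathrm{det}}(\fri)=0$ for symmetrizable $\fri$ in \cite{abbn} and from the separate affinity of the error in encoder and decoder, the latter being the feature of our code model that also underlies Theorem \ref{thm:equivalence-of-average-and-maximal-error}. One may additionally observe that 1) is a consequence of 2) and 3) together, since $\gamma_l(k,k')\le\min\{\gamma_{S,l}(k),\gamma_{R,l}(k')\}$, so it is listed separately only for emphasis.
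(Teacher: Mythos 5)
Your proof is correct, and it is close in spirit to the paper's, but the final mechanism differs. The paper never passes to deterministic codes and never invokes $\overline C_{\mathrm{det}}(\fri)=0$ as a black box: it works directly with the random code and uses the quantitative statement extracted from the \emph{proof} of Theorem 40 in \cite{abbn}, namely that for a symmetrizable AVQC and any single (deterministic) code with $M_l\geq 2$ messages there is a jammer sequence $\mathbf s^l$ pushing the average success probability below $3/4$; applying this to the heaviest atom $(k_l,k_l')$ then caps the random code's success probability by $1-c/4<1$, which violates the error criterion. For the marginal statements 2) and 3) the paper performs exactly your conditioning/averaging step (it defines $\overline D_{k,i}:=\sum_{k'}\gamma_l(k'|k)D_{k',i}$) to reduce to a diagonal random code and repeats the atom argument. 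You instead argue by contraposition: condition on the heavy index, derandomize the remaining random component exactly (legitimate because the error is affine in the encoder and in the decoder separately, and $CQ(M_l,\hr^{\otimes l})$ and $\M_{M_l}$ are convex), and obtain a sequence of genuine deterministic codes with unchanged $M_l$ and uniformly vanishing error, contradicting $\overline C_{\mathrm{det}}(\fri)=0$. What your route buys is that it only needs the theorem-level equivalence ``symmetrizable $\Leftrightarrow$ $\overline C_{\mathrm{det}}(\fri)=0$'' rather than reaching inside the proof of Theorem 40 for a per-blocklength $3/4$ bound; what the paper's route buys is a direct, quantitative bound on how far the success probability of the offending random code stays from $1$ (at most $1-c/4$), without constructing an auxiliary achievable sequence. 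You also correctly handle the only delicate point (that $\liminf>0$ gives a uniform lower bound $c$ for all large $l$, so the constructed codes form an admissible sequence), and your observation that condition 1) is implied by 2) alone via $\gamma_l(k,k')\leq\gamma_{S,l}(k)$ is accurate.
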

\begin{remark}
It follows directly from this lemma that no finite amount of common randomness can be sufficient in order to achieve positive message transmission capacity over a symmetrizable AVQC. It is also clear that both common randomness and any sequence $(p^{\otimes l})_{l\in\nn}$ ($p\in\mathfrak P(\bX\times\bY)$) satisfy these conditions, but that common randomness has one additional property (see the forthcoming equations (\ref{eqn:201})).
\end{remark}
\begin{proof}
We shall deal with $1)$ first. Suppose there is a sequence $(k_l,k_l')_{l\in\nn}$ satisfying $k_l,k_l'\in\Gamma_l$ for every $l\in\nn$ such that $\liminf_{l\to\infty}\gamma_l(k_l,k_l')=c>0$ and $M_l\geq2$ for large enough $l\in\nn$. Without loss of generality, $k_l=k_l'=1$ for all $l\in\nn$. Given that $\fri$ is symmetrizable (which is equivalent to $\overline C_{\mathrm{det}}(\fri)=0$), the proof of Theorem 40 in \cite{abbn} shows that there exists a sequence $(\mathbf s^l)_{l\in\nn}$ (each $\mathbf s^l$, $l\in\nn$, being an element of $\bS^l$) for which
\begin{align}
\frac{1}{M_l}\sum_{i=1}^{M_l}\tr\{D_{1,i}\cn_{\mathbf s^l}(\rho_{1,i})\}\leq3/4
\end{align}
holds, hence
\begin{align}
\liminf_{l\to\infty}\sum_{(k_l,k_l')\in\Gamma_l\times\Gamma_l}\frac{\gamma_l(k_l,k_l')}{M_l}\sum_{i=1}^{M_l}\tr\{D_{j,i}\cn_{\mathbf s^l}(\rho_{j,i})\}&\leq\liminf_{l\to\infty}[\frac{\gamma_l(1,1)}{M_l}\sum_{i=1}^{M_l}\tr\{D_{1,i}\cn_{\mathbf s^l}(\rho_{1,i})\}+(1-\gamma_l(1,1))]\\
&\leq\liminf_{l\to\infty}[\gamma_l(1,1)3/4+(1-\gamma_l(1,1))]\\
&=1-\liminf_{l\to\infty}\gamma_l(1,1)/4\\
&\leq1-c/4\\
&<1.
\end{align}
Aiming at $2)$ and using the distributions $\gamma_l(\cdot|k_l)\in\mathfrak P(\Gamma_l)$ $(k_l\in\Gamma_l)$ defined by
\begin{align}
\gamma_l(k_l'|k_l)&:=\left\{
\begin{array}{ll}
\gamma_l(k_l,k_l')/\gamma_{S,l}(k_l), & \mathrm{if}\ \gamma_l(k_l,k_l')>0\\
0, & \mathrm{else}
\end{array}\right.& (k_l,k_l'\in\Gamma_l)
\end{align}
we define new decoding POVMs by
\begin{align}
\overline D_{k_l,i}:=\sum_{k_l'}\gamma_l(k_l'|k_l)D_{k_l',i}.
\end{align}
Then
\begin{align}
\liminf_{l\to\infty}\min_{s^l\in\bS^l}\sum_{k_l\in\Gamma_l}\gamma_{S,l}(k_l)\frac{1}{M_l}\sum_{i=1}^{M_l}\tr\{\overline D_{k_l,i}\cn_{s^l}(\rho_{k_l,i})\}=1
\end{align}
Application of the same trick as in case $1)$ leads to statement $2)$. The proof of $3)$ follows a reasoning along the lines of the proof of $2)$ and will therefore be omitted.
\end{proof}
A look at Lemma 37 in \cite{abbn} or Lemma 10 in \cite{bbjn} will convince the reader that, in order to transmit either entanglement or classical messages over a symmetrizable AVQC it suffices to use only a small (polynomial in the number of channel uses) amount of common randomness in order to guarantee transmission at the respective random capacity.\\
It is also immediately clear that in both cases one can get this result by using only a source of common randomness according to Definition \ref{def:source of common randomness} with $|\Gamma_l|=l^2$ for all $l\in\nn$. But these trivially satisfy the following.
\begin{align}\label{eqn:201}
\forall\ (k_l,k_l')_{l\in\nn}\subset\bigtimes_{l=1}^\infty\Gamma_l\times\Gamma_l:\qquad
\begin{array}{ll}
\gamma_{R,l}(k_l')\underset{l\to\infty}\longrightarrow0,&\qquad\gamma_{S,l}(k_l)\underset{l\to\infty}\longrightarrow0\\
\gamma_l(k_l,k_l')\underset{l\to\infty}{\longrightarrow}0,&\qquad\sum_{k_l\in\Gamma_l}\gamma_l(k_l,k_l)\underset{l\to\infty}{\longrightarrow}1.
\end{array}
\end{align}
Next, we use the so far obtained results in order to prove Theorem \ref{theorem:no-common-randomness-extraction}.
\begin{proof}[Proof of Theorem \ref{theorem:no-common-randomness-extraction}]
Assume there is such sequence of functions. For values $k\in\Gamma_l$ ($l\in\nn$ arbitrary) we use the following abbreviations.
\begin{align}
a_l(k)&:=p_\bX^{\otimes l}(\{x^l:f_l(x^l)=k\})\\
b_l(k)&:=p_\bY^{\otimes l}(\{y^l:g_l(y^l)=k\})\\
c_l(k)&:=p^{\otimes l}(\{(x^l,y^l):f_l(x^l)=g_l(y^l)=k\})
\end{align}
Let $\eps>0$ and $l\in\nn$ satisfy
\begin{align}
1-\eps\leq\sum_{k=1}^{|\Gamma_l|}c_l(k)\leq1,\qquad a_l(k)\leq\eps,\qquad b_l(k)\leq\eps\qquad \forall k\in\Gamma_l.
\end{align}
Note that, upon choosing $l$ to be large enough, we can make $\eps$ arbitrarily small (compare equations (\ref{eqn:201})). Consider the monotone increasing sequences $(A_m)_{m=1}^{|\Gamma_l|}$ and $(B_m)_{m=1}^{|\Gamma_l|}$ defined by
\begin{align}
A_m:=\sum_{k=1}^ma_l(k),\qquad B_m:=\sum_{k=1}^mb_l(k).
\end{align}
Let $0<\sigma<1/2$ and $\hat m$ be the smallest number such that
\begin{align}
\min\{A_{\hat m},B_{\hat m}\}\geq\sigma.
\end{align}
Assume (w.l.o.g.) that $A_{\hat m}\leq B_{\hat m}$ holds. We have
\begin{align}
B_{\hat m}&=1-\sum_{m=\hat m+1}^{|\Gamma_l|}b_l(m)\\
&\leq1-\sum_{m=\hat m+1}^{|\Gamma_l|}c_l(m)\\
&\leq\eps+\sum_{m=1}^{\hat m}c_l(m)\\
&=\eps+\sum_{m=1}^{\hat m-1}c_l(m)+c_l(\hat m)\\
&\leq\eps+\sum_{m=1}^{\hat m-1}a_l(m)+a_l(\hat m)\\
&\leq2\eps+\sigma.
\end{align}
It follows that
\begin{align}
\sigma\leq\min\{A_{\hat m},B_{\hat m}\}\leq\max\{A_{\hat m},B_{\hat m}\}\leq\sigma+2\eps.
\end{align}
Let $\Theta_l:\Gamma_l\rightarrow\{0,1\}$ be defined by
\begin{align}
\Theta_l(k):=\left\{
\begin{array}{ll}
1, & \mathrm{if}\ 1\leq k\leq\hat m\\
0, & \mathrm{else}
\end{array}\right.
\end{align}
Then
\begin{align}
p^{\otimes l}(\Theta_l\circ f_l=\Theta_l\circ g_l)&=p^{\otimes l}(\{(x^l,y^l):f_l(x^l),g_l(y^l)\in[\hat m]\})+p^{\otimes l}(\{(x^l,y^l):f_l(x^l),g_l(y^l)\notin[\hat m]\})\\
&\geq\sum_{k=1}^{|\Gamma_l|}c_l(k)\geq1-\eps
\end{align}
and
\begin{align}
p^{\otimes l}(\Theta_l\circ f_l\neq\Theta_l\circ g_l)=1-p^{\otimes l}(\Theta_l\circ f_l=\Theta_l\circ g_l)\leq\eps
\end{align}
Also, for $\eps$ small enough ($\eps\leq(1-2\sigma)/2$ is sufficient) we get
\begin{align}
p^{\otimes l}_\bX(\Theta_l\circ f_l=1)&=A_{\hat m}\in[\sigma,\sigma+2\eps]\subset[\sigma,1-\sigma]\\
p^{\otimes l}_\bY(\Theta_l\circ g_l=1)&=B_{\hat m}\in[\sigma,\sigma+2\eps]\subset[\sigma,1-\sigma].
\end{align}
But this is impossible by (\cite{witsenhausen}, 1. Problem statement), since $p\in\mathrm{ri}\mathfrak P(\bX\times\bY)$ implies that there is $\eps>0$ such that $p(A)>\eps$ for every $A\subset\bX\times\bY$.\\
Under such condition, no partition of $\bX$ and $\bY$ into $k\geq2$ disjoint sets $\bX_1,\ldots,\bX_k$ and $\bY_1,\ldots,\bY_k$ satisfying $p_\bX(\bX_i)>0,\ p_\bY(\bY_i)>0$ for all $i=1,\ldots,k$ can satisfy $p(\cup_{i=1}^k \bX_i\times\bY_i)=1$, since it holds $\bX\times\bY\cap(\cup_{i=1}^k\bX_i\times\bY_i)^\complement\neq\emptyset$.\\
It follows from \cite{witsenhausen} that the sequence $(f_l,g_l)_{l\in\nn}$ satisfying the three conditions in Lemma \ref{theorem:no-common-randomness-extraction} cannot exist.\\
At last we will now show that the set of probability distributions $P\in\mathfrak P(\bX\times\bY)$ on which $CR(p)>0$ holds is closed. This is seen as follows. For a given sequence $(p_n)_{n\in\nn}$ with entries taken from $\mathfrak P(\bX\times\bY)$ and such that for every $n\in\nn$ we have $CR(p_n)>0$ we know from \cite{witsenhausen} that for each such $p_n$ there is a decomposition of $\bX,\bY$ into disjoint sets $\bX_{1,n},\ldots,\bX_{k_n,n}$, $\bY_{1,n},\ldots,\bY_{k_n,n}$ with $k_n\geq2$ such that $p_n(\cup_{i=1}^{k_n}\bX_{i,n}\times\bY_{i,n})=1$ holds.\\
Now let $p\in\mathfrak P(\bX\times\bY)$ satisfy $\lim_{n\to\infty}\|p-p_n\|_1=0$. This is equivalent to the existence of a sequence $\eps_n\searrow0$ such that $|p_n(x,y)-p(x,y)|\leq\eps_n$ holds for all $n\in\nn$ and $(x,y)\in\bX\times\bY$.
Thus if $p(x,y)>0$ for some $(x,y)\in\bX\times\bY$ then necessarily $p_n(x,y)>0$ for all large enough $n\in\nn$, hence there is $N\in\nn$ such that
\begin{align}
\supp(p)\subset\supp(p_n)\qquad\forall n\geq N,
\end{align}
hence $p$ inherits all the decompostions $\cup_{i=1}^{k_n}\bX_{i,n}\times\bY_{i,n}$ for which $n\geq N$. By assumption, all of them were nontrivial ($k_n\geq2$), so $CR(p)>0$ as desired.
\end{proof}
\end{section}
%-----------------------------------------------------------------------------------
\begin{section}{\label{sec:correlation-and-messages}Impact of Correlation on Messsage Transmission over an AVQC}
We will first argue that in the proofs of Theorems \ref{thm-1}, \ref{thm:entanglement-transmission-over-avqc} and \ref{thm-2} we can safely assume that $\bX=\bY=\{0,1\}$ holds.\\
To be more specific, given an arbitrary $(X',Y')$ with $I(X',Y')>0$, sender and receiver can always process the outcomes such that they are effectively left with a binary bipartite source $(X,Y)$ satisfying $I(X,Y)>0$. Thus, for sake of simplicity, in proofs we will always assume that a source is binary, although our statements hold for arbitrary bipartite sources.\\
For readers convenience, a rigorous justification of above assumption that has also been made in \cite{ahlswede-cai} is given now.\\
Let $\bX',\bY'$ be finite alphabets and a random variable $(X',Y')$ with values in $\bX'\times\bY'$ be given that is distributed according to $p'\in\mathfrak P(\bX'\times\bY')$. Let $I(X',Y')>0$. Then there are functions $f:\bX\rightarrow\{0,1\}$ and $g:\bY\rightarrow\{0,1\}$ such that the random variable $(X,Y):=(f(X'),g(Y'))$ with values in $\{0,1\}\times\{0,1\}$ satisfies $I(X,Y)>0$. This is seen as follows.\\
Assume there are no two such functions.\\
Then, considering pre-images of elements $a,b\in\{0,1\}$ under all possible $f$ and $g$ and with $p'_\bX$ and $p'_\bY$ denoting the usual marginal distributions of $p'$ it must hold that
\begin{align}
p'(A,B)=p'_\bX(A)\cdot p'_\bY(B)\qquad\forall A\subset\bX,\ B\subset\bY.
\end{align}
Especially, this would hold for all one-element sets, implying $p'(a,b)=p'_\bX(a)\cdot p'_\bY(b)$ for all $(a,b)\in\bX\times\bY$. But this latter equality contradicts the assumption $I(X',Y')>0$, so above mentioned 'pre-processed' random variable $(X,Y)$ with $I(X,Y)>0$ exists.
\\\\
Let us now start our investigations. The following Theorem \ref{thm-1} will enable us to prove Theorem \ref{thm-2} by using the correlation present in $p$ to first establish some common randomness between sender and receiver and then operate a randomness-assisted code close to the random capacity afterwards, a strategy that has first been successfully used in \cite{ahlswede-elimination} and, afterwards, in \cite{abbn} and \cite{bbjn}.
\begin{theorem}\label{thm-1}
If $I(X,Y)>0$ for some bipartite source $p\in\mathfrak P(\bX\times\bY)$ and $\overline C(\fri,r,(X,Y))=0$ for any $r\in\nn$, then for every $l\in\nn$ and set $\mathfrak S_K=\{\rho_1,\ldots,\rho_K\}\subset\cs(\hr^{\otimes rl})$ the associated AVcqC
$\mathbb W^l_{\mathfrak S_K}$ satisfies $\overline C(\mathbb W^l_{\mathfrak S_K})=0$. Most notably, this implies that $\overline C_{\mathrm{random}}(\fri)=0$.
\end{theorem}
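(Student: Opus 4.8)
The plan is to establish the two implications contained in the statement separately. Implication~(I): $\overline C(\fri,r,(X,Y))=0$ forces $\overline C(\mathbb W^l_{\mathfrak S_K})=0$ for every $l$, $K$ and $\mathfrak S_K$. Implication~(II), the ``Most notably'' part: if all these associated capacities vanish, then $\overline C_{\mathrm{random}}(\fri)=0$. As announced in Section~\ref{sec:correlation-and-messages} we may take $\bX=\bY=\{0,1\}$, so that $I(X,Y)>0$ is precisely the inequality $p(X_1{=}0\,|\,Y_1{=}0)\ne p(X_1{=}0\,|\,Y_1{=}1)$.

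For (I) I would observe that an $(N,M)$ deterministic message transmission code (average error) for $(\mathbb W^l_{\mathfrak S_K})^{\otimes N}$ is essentially the same object as an $((X,Y),r)$-correlated code of block length $Nrl$ for $\fri$. Pinching the decoding POVM $\{D_j\}$ by $D_j\mapsto\sum_{y^{Nl}}(\hat\rho_{y^{Nl}}\otimes\eins)D_j(\hat\rho_{y^{Nl}}\otimes\eins)$ along the orthonormal family $\hat\rho_{y^{Nl}}:=\hat\rho_{y^{l,(1)}}\otimes\cdots\otimes\hat\rho_{y^{l,(N)}}$ leaves all errors unchanged (the output $(\mathbb W^l_{\mathfrak S_K})^{\otimes N}(f)$ is block diagonal along this family) and turns the decoder into POVMs $\{D_{y^{Nl},j}\}_j$ on $\kr^{\otimes Nrl}$. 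Interpreting the codeword $f=(f^{(1)},\dots,f^{(N)})$ of message $i$ as the source-dependent encoder $x^{Nl}\mapsto\bigotimes_t\rho_{f^{(t)}_i(x^{l,(t)})}$ and expanding the tensor power of $W_{s^{rl}}(f)$, one checks that the cq-code's average error against $s^{Nrl}$ equals $1$ minus the correlated code's success probability against $s^{Nrl}$. As $rl$ is a fixed constant this yields $\overline C(\mathbb W^l_{\mathfrak S_K})\le rl\cdot\overline C(\fri,r,(X,Y))$, hence (I); note this step uses nothing about $I(X,Y)$.

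For (II) I would prove the contrapositive: assuming $\overline C_{\mathrm{random}}(\fri)>0$, I produce $l$, $K$, $\mathfrak S_K$ with $\overline C(\mathbb W^l_{\mathfrak S_K})>0$. By the Ahlswede--Blinovsky dichotomy for AVcqCs (\cite{ahlswede-blinovsky}) it suffices to make $\mathbb W^l_{\mathfrak S_K}$ non-symmetrizable. Fix any $l$ and let $\mathfrak S_K\subset\cs(\hr^{\otimes rl})$ be a finite set of states whose convex hull has nonempty relative interior in $\cs(\hr^{\otimes rl})$; a $K$ bounded by a fixed power of $\dim\hr$ suffices. Suppose $U\colon\mathbb F(\bX^l,[K])\to\mathfrak P(\bS^{rl})$ symmetrized $\mathbb W^l_{\mathfrak S_K}$. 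Feeding in the constant function $f\equiv 1$ against the function $f'$ that depends only on $x_1$, with $f'=a$ on $\{x_1=0\}$ and $f'=b$ on $\{x_1=1\}$ (for arbitrary $a,b\in[K]$), and using orthogonality of the $\hat\rho_{y^l}$, the symmetrizability identity splits into one identity per $y^l$; cancelling the marginal probabilities of $y_2,\dots,y_l$ and of $y_1$ it becomes, with $q_1:=U(\cdot\,|\,f\equiv 1)$ and $\overline{\cn}_q:=\sum_{s^{rl}}q(s^{rl})\cn_{s^{rl}}$,
\[ p(X_1{=}0\,|\,Y_1{=}y_1)\,\overline{\cn}_{q_1}(\rho_a)+p(X_1{=}1\,|\,Y_1{=}y_1)\,\overline{\cn}_{q_1}(\rho_b)=\overline{\cn}_{U(\cdot\,|\,f')}(\rho_1), \]
whose right-hand side is independent of $y_1$. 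Subtracting the instances $y_1=0$ and $y_1=1$ and invoking $p(X_1{=}0|Y_1{=}0)\ne p(X_1{=}0|Y_1{=}1)$ gives $\overline{\cn}_{q_1}(\rho_a)=\overline{\cn}_{q_1}(\rho_b)$; ranging over all $a,b$ and using affine spanning of $\mathfrak S_K$ together with linearity of $\overline{\cn}_{q_1}$ forces $\overline{\cn}_{q_1}$ to be a \emph{constant} channel on $\cs(\hr^{\otimes rl})$.

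Finally I would show that no $q$ with $\overline{\cn}_q^{(rl)}$ constant can coexist with $\overline C_{\mathrm{random}}(\fri)>0$: against a random code of block length $m=Nrl$ (to which the general case reduces by harmless padding) the adversary can draw $s^m$ by sampling $N$ blocks independently from $q$, whereupon the averaged channel maps every input to the fixed state $\omega^{\otimes N}$, so $\max_{s^m}$ of the random-code error is at least its average, which is $\ge 1-1/M$; since $M\to\infty$ this contradicts the requirement that the success probability tend to $1$. Hence $\mathbb W^l_{\mathfrak S_K}$ is non-symmetrizable and $\overline C(\mathbb W^l_{\mathfrak S_K})>0$, establishing (II). I expect the genuine difficulty to be the symmetrizability computation of the previous paragraph — arranging the orthogonal-$\hat\rho$ splitting, the cancellation of marginals, and the hypothesis $I(X,Y)>0$ so that they jointly pin $\overline{\cn}_{q_1}$ down to a constant channel — together with choosing $\mathfrak S_K$ with full-dimensional convex hull so that this conclusion is exact rather than merely approximate; the remaining ingredients are routine and rely only on \cite{abbn} and \cite{ahlswede-blinovsky}.
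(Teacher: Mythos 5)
Your proposal is correct, and while part (I) and the symmetrizability computation coincide in substance with the paper's argument, your derivation of the ``most notably'' part takes a genuinely different and leaner route. In the paper, the set $\mathfrak S_K$ is kept arbitrary: from symmetrizability of $\mathbb W_{\mathfrak S_K}$ (obtained, exactly as you do, from the ``zero capacity $\Rightarrow$ symmetrizable'' direction of \cite{ahlswede-blinovsky} together with $p(\cdot|0)\neq p(\cdot|1)$ and orthogonality of the flag states $\hat\rho_{y^l}$ -- the paper tests with $f_i(x)=i\oplus x$ against the constant function, you test with functions of $x_1$ only, which is equivalent) one only concludes that the cq channel $i\mapsto\cn_{s^l}(\rho_i)$ has a constant channel in its convex hull and hence zero random capacity via the compound-channel formula of \cite{ahlswede-blinovsky,bbjn}; converting this family of statements into $\overline C_{\mathrm{random}}(\fri)=0$ then requires the paper's heavier detour through Lemma 37 of \cite{abbn} (finite randomness suffices), the induced classical AVC $\mathfrak U$, non-symmetrizability and positivity via \cite{ericson,csiszar-narayan}, and a lifting back to a cq code. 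You instead specialize $\mathfrak S_K$ to an affinely spanning family of states of $\cs(\hr^{\otimes rl})$, so that the constancy of $\overline\cn_{q_1}$ on $\mathfrak S_K$ extends by affinity to the whole state space, and then a one-line mixed-strategy argument (the adversary samples the $rl$-blocks i.i.d.\ from $q_1$; the minimum over $s^m$ is bounded by this average, which is at most $1/M$) kills $\overline C_{\mathrm{random}}(\fri)$ directly, with the non-symmetrizability of the spanning associated AVcqC falling out as a by-product; nothing is lost relative to the paper's intermediate statements, since zero random capacity of $\fri$ trivially implies zero random capacity of every input-restricted cq channel. Both routes rely on the same nontrivial external ingredient (the hard direction of the Ahlswede--Blinovsky dichotomy). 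To make your write-up complete you should spell out three routine points: the marginals $p_\bY(0),p_\bY(1)$ (and the probabilities of the completing strings $y_2,\ldots,y_l$) used in the cancellation are positive because $I(X,Y)>0$ after the binary reduction; the reduction to block lengths divisible by $rl$ is by passing to that subsequence (liminfs only improve); and an affinely spanning $\mathfrak S_K$ with $K\leq(\dim\hr)^{2rl}$ exists because density operators span the Hermitian matrices.
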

\begin{proof} Let us, for clarity of the argument, first assume that $r=1$.
For every sequence $\mathfrak C_l$ of correlated codes with
\begin{align}\liminf_{l\to\infty}\frac{1}{l}\log M_l>0,\ \
\lim_{l\to\infty}\inf_{s^l\in\bS^l}\sum_{x^l\in\bX^l}\sum_{y^l\in\bY^l}p^{\otimes l}(x^l,y^l)\frac{1}{M_l}\sum_{i=1}^{M_l}\tr\{D_{y^l,i}\cn_{s^l}(\cP_{x^l}(i))\}<1\ \ \mathrm{holds}.
\end{align}
This especially holds for any given set $\mathfrak S_K=\{\rho_1,\ldots,\rho_K\}\subset\cs(\hr)$ and corresponding sequences of correlated codes with product encodings $\cP_{x^l}(i):=\otimes_{j=1}^l\rho_{f^{(i)}_j(x_j)}$ ($i=1,\ldots,M_l$, $j=1,\ldots,l$,
$f^{(i)}_j:\bX\rightarrow[K]$).\\
Now consider the AVcqC $\mathbb W_{\mathfrak S_K}$ as in Definition \ref{definition-associated-avcqc}. Any POVM $\mathbf{D}\in\M_{M_l}(\hr_{Y^l}\otimes\kr^{\otimes l})$ gives rise to a set $\{\mathbf D_{y^l}\}_{y^l\in\bY^l}$ of POVMs in $\M_{M_l}(\kr^{\otimes l})$ through
\begin{align}
D_{y^l,i}:=\tr_{\hr_{Y^l}}\otimes Id_{\kr^{\otimes l}}\{(\hat\rho_{y^l}\otimes\eins_{\kr^{\otimes l}})D_i\},
\end{align}
while any encoding at blocklength $l$ for $\mathbb W_{\mathfrak S_K}$ has the form
\begin{align}i\mapsto f^{(i)}_1\times\ldots\times f^{(i)}_l\qquad \forall l\in\nn.\end{align}
for some choice of functions $\{f^{(i)}_j\}_{i\in[M_l],j\in[l]}\subset \mathbb F(\bX^l,[K])$. Then for all $l\in\nn$,
\begin{align}
\sum_{i=1}^{M_l}\tr\{D_i W_{s^l}(\bigtimes_{j=1}^lf^{(i)}_j)\}&=\sum_{i=1}^{M_l}\tr\{D_{i}\bigotimes_{j=1}^lW_{s_j}(f^{(i)}_j)\}\\
&=\sum_{i=1}^{M_l}\sum_{y^l\in\bY^l}\sum_{x^l\in\bX^l}p^{\otimes l}(x^l,y^l)\tr\{D_{y^l,i}\cn_{s^l}(\otimes_{j=1}^l\rho_{f^{(i)}_j(x_j)})\}\\
&=\sum_{i=1}^{M_l}\sum_{y^l\in\bY^l}\sum_{x^l\in\bX^l}p^{\otimes l}(x^l,y^l)\tr\{D_{y^l,i}\cn_{s^l}(\cP_{x^l}(i))\}.
\end{align}
Thus, no code for the AVcqC $\mathbb W_{\mathfrak S_K}$ can have asymptotically vanishing average error \emph{and} positive rate at the same time, hence
$\overline C(\mathbb W_{\mathfrak S_K})=0$ for every such AVcqC.\\
This proves the first part of the statement.\\
For the second, consider Theorem 1 from \cite{ahlswede-blinovsky}: Since every of the AVcqC's $\mathbb W_{\mathfrak S_K}$ has zero capacity for transmission of messages using
the average error criterion, it is symmetrizable in the sense of \cite{ahlswede-blinovsky}, and hence it follows that for every such $\mathbb W_{\mathfrak S_K}$ there exists
$\{\tau(\cdot|f)\}_{f\in\mathbb F(\bX,[K])}\subset \mathfrak P(\bS)$ such that
\begin{align}
\sum_{s\in\bS}\tau(s|f)W_s(f')=\sum_{s\in\bS}\tau(s|f')W_s(f)\qquad \forall f,f'\in\mathbb F(\bX,[K]).
\end{align}
Consider the functions $f_1,\ldots,f_K,f^*:\bX\rightarrow[K]$ defined by $f_i(0)=i,\ f_i(1)=i\oplus1$ ($\oplus$ denotes addition $\mathrm{mod}\ K$) and $f^*(0)=f^*(1)=1$.\\
Inserting these, we get for every $i=1,\ldots,K$ the following equalities:
\begin{align}
\sum_{s\in\bS}\tau(s|f_i)W_s(f^*)&=\sum_{s\in\bS}\tau(s|f^*)W_s(f_i)\\
\sum_{x\in\bX}\sum_{y\in\bY}p(x,y)\cdot\hat\rho_{y}\otimes\sum_{s\in\bS}\tau(s|f_i)\cn_s(\rho_1)&=\sum_{x\in\bX}\sum_{y\in\bY}p(x,y)\cdot\hat\rho_{y}\otimes\sum_{s\in\bS}\tau(s|f^*)\cn_s(\rho_{x\oplus i})
\end{align}
Define $p(\cdot|0),\ p(\cdot|1)\in\mathfrak P(\bX)$ by $p(\cdot|y):=p(\cdot,y)/(p(0,y)+p(1,y))$, $y=0,1$. Then $p(\cdot|0)=p(\cdot|1)$ if and only if $I(X,Y)=0$. Thus by assumption,
$p(\cdot|0)\neq p(\cdot|1)$. It follows
\begin{align}
\sum_{s\in\bS}\tau(s|f_i)\cn_s(\rho_1)&=\sum_{x\in\bX}p(x|y)\sum_{s\in\bS}\tau(s|f^*)\cn_s(\rho_{x\oplus i}),\qquad y=0,1,\ i\in[K],
\end{align}
hence setting
\begin{align}
\tilde\sigma_i:=\sum_{s\in\bS}\tau(s|f_i)\cn_s(\rho_1),\qquad \sigma_i:=\sum_{s\in\bS}\tau(s|f^*)\cn_s(\rho_{i})\qquad\forall i\in[K],
\end{align}
we know that for every $i\in\{1,\ldots,K\},$ the two states $\sigma_i,\ \sigma_{i\oplus1}$
satisfy both
\begin{align}
\tilde\sigma_i=p(0|0)\sigma_i+p(1|0)\sigma_{i\oplus1},\qquad \mathrm{and}\qquad \tilde\sigma_i=p(0|1)\sigma_i+p(1|1)\sigma_{i\oplus1}.
\end{align}
This can only be if $\sigma_i=\sigma_j$ f.a. $i=1,\ldots,K$ and hence the cq-channel $W^*\in CQ([K],\kr)$ defined by $W^*(i):=\sum_{s\in\bS}\tau(s|f^*)\cn_s(\rho_{i})$ is constant.
Define the cq-channel $\{W_s\}_{s\in\bS}\subset CQ([K],\kr)$ by $W_s(i):=\cn_s(\rho_i)$. We know from the results in (\cite{ahlswede-blinovsky}, see Theorem 1) and the later work (\cite{bbjn}, their Theorem 1 and 2), that $\overline C_{\mathrm{random}}(\{W_s\}_{s\in\bS})$ equals the message transmission capacity of the \emph{compound} cq-channel that is built up from $\conv(\{W_s\}_{s\in\bS})$:
\begin{align}
\overline C_{\mathrm{random}}(\{W_s\}_{s\in\bS})=\underset{p \in \mathfrak P(\mathbf{X})}{\max}\underset{W \in \conv(\{W_s\}_{s\in\bS})}{\inf} \chi(p,W).
\end{align}
Since $\conv(\{W_s\}_{s\in\bS})$ contains the constant channel $W^*$ this implies that $\{W_s\}_{s\in\bS}$ satisfies $\overline C_{\mathrm{random}}(\{W_s\}_{s\in\bS})=0$. This conclusion holds regardless of the set $\mathfrak S_K$ that was chosen, only the conditional probability distribution $\tau(s|f^*)$ changes for every possible set $\{\rho_1,\ldots,\rho_K\}$.\\
Fortunately, the whole argument can be gone through for every $\fri^{\otimes l}:=\{\cn_{s^l}\}_{s^l\in\bS^l}$, $l\in\nn$, it follows that every $\{W_{s^l}\}_{s^l\in\bS^l}$ with $W_{s^l}\in CQ([K],\hr^{\otimes l})$ defined via $W_{s^l}(i):=\cn_{s^l}(\rho_i)$ for a set $\{\rho_1,\ldots,\rho_K\}\subset\cs(\hr^{\otimes l})$ satisfies $\overline C_{\mathrm{random}}(\{W_{s^l}\}_{s^l\in\bS^l})=0$.\\
We have to show that this implies $\overline C_{\mathrm{random}}(\fri)=0$. Assuming the contrary, we know from Lemma 37 in \cite{abbn} that for every $\eps>0$ there exist $l\in\nn$, a finite set $\mathbf Z$, POVMs $\{\mathbf D^{(z)}\}_{z\in\mathbf Z}\subset\mathcal M_2(\kr^{\otimes l})$, states $\{\rho^{(z)}_1,\rho^{(z)}_2\}_{z\in\mathbf Z}\subset\cs(\hr^{\otimes l})$ and $\gamma_l\in\mathfrak P(\mathbf Z)$ such that
\begin{align}
\sum_{z\in\mathbf Z}\frac{1}{2}(\tr\{D_1^{(z)}\cn_{s^l}(\rho_1^{(z)})\}+\tr\{D_2^{(z)}\cn_{s^l}(\rho_2^{(z)}\})\gamma_l(z)\geq1-\eps\qquad\forall s^l\in\bS^l
\end{align}
and this implies
\begin{align}
\sum_{z\in\mathbf Z}\gamma_l(z)\tr\{D_i^{(z)}\cn_{s^l}(\rho_i^{(z)})\}\geq1-2\eps\qquad\forall s^l\in\bS^l,\ i=1,2.
\end{align}
Assume we have taken some large $l$ in above formula, such that $\eps<1/4$. Let, for this $l$, $\Phi:[|\bS^l|]\to\bS^l$ be a bijection and define $T:=[|\bS^l|]$. Also, let us define the channels
\begin{align}
\hat\cn_t:=\cn_{\phi(t)}.
\end{align}
In accordance with our previous notation we additionally define $\hat\cn_{t^m}:=\otimes_{i=1}^m\hat\cn_{t_i}$. Let us stop for a short \emph{remark} concerning the necessity of $\Phi$ and the corresponding set $T$: They only helps us avoiding notational skyscrapers like ${s^l}^m$, since it is our feeling that these objects (especially when they are, in addition, used as subscript of yet another symbol) would make the formulae to come very difficult to read.\\
Now, defining the arbitrarily varying channel $\mathfrak U$ via its elements $\mathfrak U_{t}:[2]\mapsto\mathfrak P([2])$, $t\in T$, where
\begin{align}
\mathfrak U_{t}(j|i):=\sum_{z\in\mathbf Z}\tr\{D_j^{(z)}\hat\cn_{t}(\rho_i^{(z)})\}\gamma_l(z)
\end{align}
we know that, since $2\eps<1/2$, that $\mathfrak U$ is not symmetrizable in the classical sense as given in \cite{ericson}. By Theorem 1 in \cite{csiszar-narayan} and the Ahlswede-Dichotomy we know that this implies that $\mathfrak U$ has a positive capacity for transmission of classical messages using deterministic coding schemes, hence there exist sequences
\begin{align}
(\{x^m_i\}_{i=1}^{M_m})_{m\in\nn},\qquad(\{E_i^m\}_{i=1}^{M_m})_{m\in\nn}
\end{align}
where $\liminf_{m\to\infty}\frac{1}{m}\log M_m=c>0$, each of the $x^m_i$ is an element of $[2]^m$ and the $E_i^m\subset[2]^m$ are mutually disjoint for constant $m$, and most importantly we have
\begin{align}
\lim_{m\to\infty}\min_{t^m\in T^m}\frac{1}{M_m}\sum_{i=1}^{M_m}\mathfrak U_{t^m}(E_i^m|x^m_i)=1.
\end{align}
Rewriting this last formula using the definition of the $\mathfrak U_{t}$ and writing $x^m_i=(x_{i,1},\ldots,x_{i,m})$ as well as $D_{y^m}^{(z^m)}:=D_{y_1}^{(z_1)}\otimes\ldots\otimes D_{y_m}^{(z_m)}$ f.a. $y^m\in[2]^m$, $z^m\in\mathbf Z^m$ we get
\begin{align}
1&=\lim_{m\to\infty}\min_{t^m\in T^m}\frac{1}{M_{m}}\sum_{i=1}^{M_{m}}\sum_{y^{m}\in E_i^m}\prod_{j=1}^m\sum_{z\in\mathbf Z}\tr\{D_{y_j}^{(z)}\hat\cn_{t_j}(\rho^{(z)}_{x_{i,j}})\}\gamma_l(z)\\
&=\lim_{m\to\infty}\min_{t^m\in T^m}\sum_{z^m\in\mathbf Z^m}\frac{1}{M_{m}}\sum_{i=1}^{M_{m}}\sum_{y^{m}\in E_i^m}\tr\{D_{y^m}^{(z^m)}\hat\cn_{t^m}(\rho_{x^m_{i}}^{(z^m)})\}\gamma_l^{\otimes m}(z^m)\\
&=\lim_{m\to\infty}\min_{t^m\in T^m}\sum_{z^m\in\mathbf Z^m}\frac{1}{M_{m}}\sum_{i=1}^{M_{m}}\tr\{D_{E_i^m}^{(z^m)}\hat\cn_{t^m}(\rho^{(z^m)}_{x^m_{i}})\}\gamma_l^{\otimes m}(z^m),
\end{align}
where we introduced the states $\rho_{x^m_{i}}^{(z^m)}:=\otimes_{j=1}^m\rho_{x^m_{i,j}}^{z_j}$ in the forelast line and the POVMs $(D_{E_i^m}^{(z^m)})_{i=1}^{M_m}$ in the last line. That this is a valid definition is seen as follows:
\begin{align}
\sum_{i=1}^{M_m}D_{E_i^m}^{(z^m)}&=\sum_{i=1}^{M_m}\sum_{y^m\in E_{i^m}}D_{y^m}^{(z^m)}\\
&\leq\sum_{y^m\in[2]^m}D_{y^m}^{(z^m)}\\
&\leq\underbrace{\eins_{\hr^{\otimes l}}\otimes\ldots\otimes\eins_{\hr^{\otimes l}}}_{m-\mathrm{times}}.
\end{align}
This shows that with
\begin{align}
\cP^{z^m}_m:[M_m]\rightarrow\cs(\hr^{\otimes ml}),\qquad i\mapsto \rho^{(z^m)}_{x^m_i},
\end{align}
%and $f_m,\ g_m:\mathbf Z^m\to\mathcal M_{M_m}(\kr^{\otimes ml}),\ \cs(\hr^{\otimes ml})$ defined by $f_m(z^m):=\cP^{z^m}$, $g_m(z^m):=(D_{E_i^m}^{(z^m)})_{i=1}^{M_m}$
we have a suitable sequence of random codes with a positive rate for the $cq$ channel $\{W_{s^l}\}_{s^l\in\bS^l}\subset CQ([K]\times\mathbf Z,\hr^{\otimes l})$ arising via $W_{s^l}((i,z)):=\cn_{s^l}(\rho_i^{(z)})$.\\
But this is in contradiction to what we have already proven to be true, hence $\overline C_{\mathrm{random}}(\fri)=0$.\\
Now let $r$ be arbitrary. Then the above argument, applied to $\fri^r:=\{\cn_{s^r}\}_{s^r\in\bS^r}$ shows that $\overline C_{\mathrm{random}}(\fri^r)=0$ has to hold. But $C_{\mathrm{random}}(\fri^r)=r\cdot C_{\mathrm{random}}(\fri)$, hence $\overline C_{\mathrm{random}}(\fri)=0$.
\end{proof}
\begin{proof}[Proof of Theorem \ref{thm-2}] Let $I(X,Y)>0$ and $r\in\nn$. Then by Theorem \ref{thm-1} we have that $\overline C(\fri,r,(X,Y))=0$ implies $\overline C_{\mathrm{random}}(\fri)=0$. Let
$\overline C(\fri,r,(X,Y))>0$. Loosely speaking, we will use this assumption to transmit a small number of messages first, using the correlation that is present in the source $(X,Y)$. The messages sent that way are highly correlated with the decoder (meaning that, when the sender sends message '$i$' then the receiver will, with high probability, detect this) and can therefore be used to operate a random code for message transmission that uses only a small amount of common randomness to correlate the en- and decoding, but achieves a high message transmission rate. During this second part of the transmission, sender and receiver can simply ignore the correlation that is emitted by $(X,Y)$.\\
While the first part of the protocol (running on roughly $\log l$ channel uses) only establishes the common randomness, the second one (using the channel $(l-\log l)$ times) is used to achieve a high throughput of messages - actually, the message transmission rate will be arbitrarily close to optimal.\\
To get started, we define the (possibly very small) positive number $c:=\overline C(\fri,r,(X,Y))/2$. From the definition of capacity it is then clear that there exists a sequence $(\mathfrak C_l)_{l\in\nn}$ of $((X,Y),r)$-correlated codes for message transmission over $\fri$ such that
$N(l):=\lfloor2^{c\cdot l}\rfloor$ messages are transmitted (at blocklength $l$, and for all $l\in\nn$) and this transmission is asymptotically perfect with respect to the average error criterion:
\begin{align}
\sum_{i=1}^{N(l)}\frac{1}{N(l)}\sum_{x^{n(l)},y^{n(l)}}p^{\otimes n(l)}(x^{n(l)},y^{n(l)})\tr\{D_{y^{n(l)},i}\cn_{s^{l}}\circ\cP_{x^{n(l)}}(i)\}\geq1-\delta_l,\qquad \delta_l\searrow0.
\end{align}
Take a sequence of random codes $(\{(M_l,\mathcal D_i,\cP_i)\}_{i=1}^{l^2})_{l\in\nn}$ for message transmission over $\fri$ such that
\begin{align}
\min_{s^l\in\bS^l}\sum_{i=1}^{l^2}\frac{1}{l^2}\frac{1}{M_l}\sum_{j=1}^{M_l}\tr\{D_{i,j}\cn_{s^l}\circ\cP^{l}_i(j)\}\geq1-\eps_l,\qquad\eps_l\searrow0,\\
\liminf_{l\to\infty}\frac{1}{l}\log M_l\geq\overline C_{\mathrm{random}}(\fri)-\eta,\qquad \eta>0\ \mathrm{arbitrary}.
\end{align}
Such codes are guaranteed to exist by Lemma 9 and Lemma 10 in \cite{bbjn}. Let $m(l):=\lfloor\frac{2}{c}\log l\rfloor$, $\overline m(l):=\max\{\lfloor l/r\rfloor \ :\ \lfloor l/r\rfloor\leq m(l)\}$, $c(l):=l-m(l)$. Consider the code defined by
\begin{equation}
\cP^l_{x^{n(l)}}(j):=\sum_{i=1}^{l^2}\frac{1}{l^2}\cP^{c(l)}_i(j)\otimes\cP_{x^{\overline m(l)}}(i),\qquad D_{y^{n(l)},j}:=\sum_{i=1}^{l^2}D_{i,j}\otimes D_{y^{\overline m(l)},i},
\end{equation}
where for $x^{n(l)}=(x_1,\ldots,x_{n(l)})$ we define $x^{\overline m(l)}:=(x_1,\ldots,x_{\overline m(l)})$. Then
\begin{align}
&\sum_{x^{n(l)}}\sum_{y^{n(l)}}p^{\otimes n(l)}(x^{n(l)},y^{n(l)})\sum_{j=1}^{M_{c(l)}}\tr\{D_{y^{n(l)},j}\cn_{s^l}\circ\cP_{x^{n(l)}}(j)\}\\
&\geq\sum_{x^{n(l)},y^{n(l)}}p^{\otimes n(l)}(x^{n(l)},y^{n(l)})\sum_{j=1}^{M_{c(l)}}\sum_{i=1}^{l^2}\frac{1}{l^2}\tr\{(D_{i,j}\otimes D_{y^{\overline m(l)},i})\cn_{s^l}\circ(\cP^{c(l)}_i(j)\otimes\cP_{x^{\overline m(l)}}(i))\}\\
&=\sum_{i=1}^{l^2}\frac{1}{l^2}\sum_{x^{\overline m(l)},y^{\overline m(l)}}p^{\otimes \overline m(l)}(x^{\overline m(l)},y^{\overline m(l)})\tr\{D_{y^{\overline m(l)},i}\cn_{s^{ m(l)}}\circ\cP_{x^{\overline m(l)}}(i)\}\sum_{j=1}^{M_{c(l)}}\tr\{D_{i,j}\cn_{s^{c(l)}}\circ\cP^{c(l)}_i(j)\}\\
&\geq M_l(1-\eps_{c(l)}-\delta_{m(l)}),
\end{align}
by the Innerproduct Lemma in \cite{ahlswede-elimination} (which is, to be precise, stated there with $\eps_{c(l)}=\delta_{m(l)}=\eps$, but holds in our slightly more general
case as well). Additionally,
\begin{align}
\liminf_{l\to\infty}\frac{1}{l}\log M_{c(l)}&=\liminf_{l\to\infty}\frac{1}{m(l)+c(l)}\log M_{c(l)}\geq \overline C_{\mathrm{random}}(\fri)-\eta.
\end{align}
\end{proof}
\end{section}
%-------------------------------------------------------------------------------------------
\begin{section}{\label{sec:correlation-and-entanglement}Impact of Correlation on Entanglement Transmission over an AVQC}
We will now give the proof of Theorem \ref{thm:entanglement-transmission-over-avqc}, showing that for a finite AVQC $\fri$, the use of correlated codes is sufficient to guarantee transmission of entanglement at a rate of $\A_{\mathrm{random}}(\fri)$. This shows that a very restricted class of random codes, namely $((X,Y),r)$ correlated codes with arbitrary $r\in\nn$ and the only requirement being $I(X,Y)>0$ already achieve the full random entanglement transmission capacity. In terms of the resources that are needed in order to transmit at the random entanglement transmission capacity, this is a huge benefit.\\
The idea of the proof is the same as the one for the proof of Theorem \ref{thm-2}, with the only difference being that entanglement is transmitted in the second part of the protocol instead of messages. Despite the similarity of the two proofs, we still give the full proof of Theorem \ref{thm:entanglement-transmission-over-avqc} - for readers convenience.
\begin{proof}[Proof of Theorem \ref{thm:entanglement-transmission-over-avqc}]
By Theorem \ref{thm-2}, $\overline C(\fri,r,(X,Y))>0$ for every $r\in\nn$. Again we set $c:=\overline C(\fri,r,(X,Y))/2$. Thus, for every such $r$ there exists a sequence $(\mathfrak C_l)_{l\in\nn}$ of $((X,Y),r)$-correlated codes for message transmission over $\fri$ such that
$N(l):=\lfloor2^{c\cdot l}\rfloor$ messages are transmitted (at blocklength $l$, and for all $l\in\nn$) and the transmission of messages is asymptotically perfect with respect to the average error criterion:
\begin{align}
\sum_{i=1}^{N(l)}\frac{1}{N(l)}\sum_{x^{n(l)},y^{n(l)}}p^{\otimes n(l)}(x^{n(l)},y^{n(l)})\tr\{D_{y^{n(l)},i}\cn_{s^{l}}\circ\cP_{x^{n(l)}}(i)\}\geq1-\delta_l,\qquad \delta_l\searrow0.
\end{align}
Take a sequence of random codes $(\{\fr_l,\hat\crr^l_i,\hat\cP^l_i\}_{i=1}^{l^2})_{l\in\nn}$ for message transmission over $\fri$ such that
\begin{align}
\min_{s^l\in\bS^l}\sum_{i=1}^{l^2}\frac{1}{l^2}F_e(\pi_{\fr_l},\hat\crr^l\circ\cn_{s^l}\circ\hat\cP^l)\geq1-\eps_l,\qquad\eps_l\searrow0,\\
\liminf_{l\to\infty}\frac{1}{l}\log \dim(\fr_l)\geq\A_{\mathrm{random}}(\fri)-\eta,\qquad \eta>0\ \mathrm{arbitrary}.
\end{align}
Existence of such a sequence is guaranteed by Theorem 32 and Lemma 37 in \cite{abbn}. Let $m(l):=\frac{2}{c}\log l$, $\overline m(l):=\max\{\lfloor l/r\rfloor \ :\ \lfloor l/r\rfloor\leq m(l)\}$, $c(l):=l-m(l)$. Consider the code defined by
\begin{align}
\cP_{x^{n(l)}}^l(a)&:=\sum_{i=1}^{l^2}\frac{1}{l^2}\cdot\hat\cP^{c(l)}_i(a)\otimes\cP_{x^{\overline m(l)}}(i)\qquad\forall a\in\mathcal B(\fr_l),\\
\crr^l_{y^{n(l)}}(a\otimes b)&:=\sum_{i=1}^{m(l)}\tr\{D_{y^{\overline m(l)},i}b\}\cdot\crr^{c(l)}_i(a)\qquad\forall a\in\B(\kr^{\otimes c(l)}),\ b\in\B(\kr^{\otimes m(l)}).
\end{align}
Then for every $s^l=(s^{m(l)},s^{c(l)})$ it holds
\begin{align}
&\sum_{x^{n(l)}}\sum_{y^{n(l)}}p^{\otimes n(l)}(x^{n(l)},y^{n(l)})F_e(\pi_{\fr_l},\crr^l_{y^{n(l)}}\circ\cn_{s^l}\circ\cP^l_{x^{n(l)}})\\
&\geq\sum_{i=1}^{l^2}\frac{1}{l^2}\sum_{x^{\overline m(l)},y^{\overline m(l)}}p^{\otimes \overline m(l)}(x^{\overline m(l)},y^{\overline m(l)})\tr\{\cn_{s^{m(l)}}(\cP_{x^{\overline m(l)}}(i))D_{y^{\overline m(l)},i}\}\cdot F_e(\pi_{\fr_{c(l)}},\crr^{c(l)}_i\circ\cn_{s^{c(l)}}\circ\cP^{c(l)}_i)\\
&\geq 1-\eps_{c(l)}-\delta_{m(l)},
\end{align}
 where we use the Innerproduct Lemma in \cite{ahlswede-elimination} the second time. Additionally,
\begin{align}
\liminf_{l\to\infty}\frac{1}{l}\log \dim(\fr_{c(l)})&=\liminf_{l\to\infty}\frac{1}{m(l)+c(l)}\log \dim(\fr_{c(l)})\geq \overline \A_{\mathrm{random}}(\fri)-\eta.
\end{align}
\end{proof}
\end{section}
%-------------------------------------------------------------------------------------------
%-------------------------------------------------------------------------------
\begin{section}{\label{sec:equivalence}Equivalence of maximal- and average error criterion}
As mentioned in the introduction, the asymptotic equivalence of the two error criteria already exists in the classical case, when one allows for randomized encoding. While it is no wonder that a code that has a good performance with respect to the maximal error criterion has at least the same performance with respect to the average error criterion, the reverse statement is nontrivial.\\
A simple expurgation of a small number of 'bad' codewords, as is done for a single memoryless channel, would not suffice in the setting considered here - for each possible choice of channel sequence that the jammer might make, one would have to expurgate possibly \emph{different} subsets codewords. Taking into account that the jammer has an exponentially large number of choices it is clear that this will not yield the desired result.\\
Therefore, a different strategy is chosen:
\begin{proof}[Proof of Theorem \ref{thm:equivalence-of-average-and-maximal-error}]
In the first part of the proof we show that $\overline C_{\mathrm{det}}(\fri)>0$ if and only if $C_{\mathrm{det}}(\fri)>0$. This part of the proof uses the symmetrizability conditions of \cite{abbn}. Then, we show that $C_{\mathrm{det}}(\fri)>0$ implies $C_{\mathrm{det}}(\fri)=C_{\mathrm{random}}(\fri)$: We first use Ahlswede's random code reduction (this proves that not too much randomness is needed in order to have asymptotically perfect transmission of messages with respect to the maximal error criterion and at a rate close to $C_{\mathrm{random}}(\fri)$), followed by the 'usual' trick: First, use the fact that $C_{\mathrm{det}}(\fri)>0$ to establish some small amount of common randomness between sender and receiver, then operate a random code on top of that. Finally, we show that $\overline C_{\mathrm{random}}(\fri)=C_{\mathrm{random}}(\fri)$ - given an arbitrary code for transmission of messages that works well under average error criterion, the very same code, randomized jointly over all permutations of the en- and decoding operations, works well for the maximal error criterion. Let us start now with the details.
\\\\
It is clear that $C_{\mathrm{det}}(\fri)>0$ implies $\overline C_{\mathrm{det}}(\fri)>0$. On the other hand, if $\overline C_{\mathrm{det}}(\fri)>0$ then for every $\eps>0$ there is $l\in\nn$ and $\rho_1,\rho_2\in\cs(\hr^{\otimes l})$ as well as a POVM consisting of $D_1,D_2$ such that
\begin{align}
\frac{1}{2}(\tr\{D_1\cn_{s^l}(\rho_1)\}+\tr\{D_2\cn_{s^l}(\rho_2)\})\geq1-\eps\qquad\forall\ s^l\in\bS^l.
\end{align}
As usual, this implies
\begin{align}\label{eqn-101}
\tr\{D_i\cn_{s^l}(\rho_i)\}\geq1-2\eps\qquad\forall\ s^l\in\bS^l,\ i=1,2.
\end{align}
Assume $\fri$ has $C_{\mathrm{det}}(\fri)=0$, then by Theorem 42 in \cite{abbn} we have that there exist $p,q\in\mathfrak P(\bS^l)$ such that with $\cn_p:=\sum_{s^l\in\bS^l}p(s^l)\cn_{s^l}$ and $\cn_q:=\sum_{s^l\in\bS^l}q(s^l)\cn_{s^l}$ we have
\begin{align}\label{eqn-102}
\sum_{s^l\in\bS^l}p(s^l)\cn_{s^l}(\rho_1)=\sum_{s^l\in\bS^l}q(s^l)\cn_{s^l}(\rho_2),
\end{align}
while from equation (\ref{eqn-101}) we know that
\begin{align}
\tr\{D_i\cn_{q}(\rho_i)\}\geq1-2\eps,\qquad\tr\{D_i\cn_{p}(\rho_i)\}\geq1-2\eps\qquad, i=1,2.
\end{align}
Combining this with equation (\ref{eqn-102}) gives the contradiction
\begin{align}
1-2\eps&\leq\tr\{D_2\cn_q(\rho_2)\}\\
&=\tr\{D_2\cn_p(\rho_1)\}\\
&=1-\tr\{D_1\cn_p(\rho_1)\}\\
&\leq2\eps,
\end{align}
and this is a clear contradiction for $\eps<1/4$. But $\eps>0$ was arbitrary, proving our claim.
\\\\
The proof that $C_{\mathrm{det}}(\fri)>0$ implies $C_{\mathrm{det}}(\fri)=C_{\mathrm{random}}(\fri)$ is again straight along the lines of the proof in the classical case, which was done by Ahlswede in \cite{ahlswede-elimination}. For readers convenience, we give a proof that is a slight modification of our proof of Lemma 10 in \cite{bbjn}, which in turn is just a reformulation of Ahlswedes original proof to the quantum setting.
\begin{lemma}[Random Code Reduction]\label{random-code-reduction}
Let $\A=\{A_s\}_{s\in\mathbf S}$ be an AVQC, $(\mu_l)_{l\in\nn}$ a sequence of random codes for message transmission over $\fri$ such that
\begin{equation}\label{eq:random-code-reduction}
\min_{s^l\in\bS^l}\min_{i\in[M_l]}\int\tr\{\cn_{s^l}(\rho_i)D_i\}d\mu_l((x_i,D_i)_{i=1}^{M_l})\ge
1-\eps_l,
\end{equation}
where $\eps_l\searrow0$ and $\liminf_{l\to\infty}\frac{1}{l}\log(M_l)=R>0$. For every $\eps>0$ there is $L\in\nn$ such that there exist $L^2$ $(L,M_L)$-deterministic codes
$(\rho_{1,j},\ldots,\rho_{M_L,j},D_{1,j},\ldots,D_{M_l,j})$ ($1\leq j\leq L^2$) for $\fri$  such that
\begin{equation}\label{eq:random-code-reduction-a}
\min_{i\in[M_L]}\frac{1}{L^2}\sum_{j=1}^{L^2}\tr\{\cn_{s^L}(\rho_{i,j})D_{i,j}\}\geq1-\eps \qquad  \forall s^L\in\mathbf S^L.
\end{equation}
\end{lemma}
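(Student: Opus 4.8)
The plan is a random-selection (probabilistic method) argument, precisely Ahlswede's elimination technique. Fix $\eps>0$. Since $\eps_l\searrow0$, pick $L$ so large that $\eps_L\le\eps/2$; further lower bounds on $L$ will be imposed along the way. Draw $L^2$ codes $C_1,\dots,C_{L^2}$ independently, each distributed according to $\mu_L$, and write $C_j=((\rho_{i,j},D_{i,j}))_{i=1}^{M_L}$, where $\rho_{i,j}:=\cP_j(i)$ is the state that the $j$-th (random) encoder assigns to message $i$ and $(D_{i,j})_{i=1}^{M_L}$ is the $j$-th (random) decoding POVM. For each fixed $s^L\in\bS^L$ and $i\in[M_L]$, the numbers $Z^{s^L}_{i,j}:=\tr\{\cn_{s^L}(\rho_{i,j})D_{i,j}\}$ ($j=1,\dots,L^2$) are independent, take values in $[0,1]$, and by the hypothesis (\ref{eq:random-code-reduction}) each has expectation at least $1-\eps_L\ge1-\eps/2$.

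The key step is a large-deviation estimate together with a union bound. By Hoeffding's inequality, for every fixed $(s^L,i)$,
\begin{align*}
\Pr\Big[\frac{1}{L^2}\sum_{j=1}^{L^2}Z^{s^L}_{i,j}<1-\eps\Big]\ \le\ \exp\!\big(-2L^2(\eps-\eps_L)^2\big)\ \le\ \exp\!\big(-\tfrac{1}{2}L^2\eps^2\big).
\end{align*}
Now union-bound over all pairs $(s^L,i)$. Their number is $|\bS|^L\cdot M_L$, which is at most exponential in $L$: $|\bS|^L=\exp(L\log|\bS|)$, and $M_L$ is itself at most (essentially) $(\dim\hr)^L$ — indeed, since $(D_{i,j})_i$ is a POVM on $\hr^{\otimes L}$ and $\int\tr\{\cn_{s^L}(\rho_{i,j})D_{i,j}\}\,d\mu_L\ge1-\eps_L$ forces $\int\tr D_{i,j}\,d\mu_L\ge1-\eps_L$ for every $i$, summing over $i$ gives $M_L(1-\eps_L)\le(\dim\hr)^L$. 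Hence there is a constant $K=K(\dim\hr,|\bS|)$ with
\begin{align*}
\Pr\Big[\exists\,(s^L,i):\ \frac{1}{L^2}\sum_{j=1}^{L^2}Z^{s^L}_{i,j}<1-\eps\Big]\ \le\ \exp(KL)\cdot\exp\!\big(-\tfrac{1}{2}L^2\eps^2\big)\ \longrightarrow\ 0\qquad(L\to\infty).
\end{align*}
Thus, for all sufficiently large $L$ this probability is strictly below $1$, so some realization $(\rho_{i,j},D_{i,j})_{i\in[M_L],\,j\in[L^2]}$ of $L^2$ deterministic $(L,M_L)$-codes satisfies $\min_{i\in[M_L]}\frac{1}{L^2}\sum_{j=1}^{L^2}\tr\{\cn_{s^L}(\rho_{i,j})D_{i,j}\}\ge1-\eps$ simultaneously for all $s^L\in\bS^L$ — which is exactly (\ref{eq:random-code-reduction-a}).

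The only delicate point, and the reason one selects \emph{quadratically} many codes, is the race between two exponents: the Hoeffding bound in the number $L^2$ of sampled codes decays like $\exp(-\Theta(L^2))$, while the union bound over the jammer's exponentially many sequences $s^L$ (and over the at most exponentially many messages) costs only $\exp(\Theta(L))$; the quadratic-versus-linear gap is what makes the failure probability vanish, and a bounded or merely linear number of copies would not suffice. That $\eps_l$ is a null sequence rather than a fixed constant is harmless — it only sharpens the deviation estimate — and it is precisely what upgrades the per-sequence ``average over the $L^2$ copies'' guarantee into one that holds uniformly in $s^L$, as required for the subsequent derandomization via a vanishing amount of common randomness.
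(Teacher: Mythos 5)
Your proposal is correct and takes essentially the same route as the paper: draw $L^2$ codes i.i.d.\ from $\mu_L$, establish a per-pair concentration estimate, and union-bound over the exponentially many pairs $(s^L,i)$, winning the $L^2$-versus-$L$ race in the exponents; the paper merely replaces your appeal to Hoeffding by a hand-rolled Markov/exponential-moment bound using $2^t\le 1+t$, and bounds $M_L$ directly from the rate hypothesis $\liminf_{l\to\infty}\frac{1}{l}\log M_l=R$ rather than via your POVM trace argument. The latter is fine too, except that the relevant dimension is that of the output space $\kr^{\otimes L}$ rather than $\hr^{\otimes L}$ — a harmless slip, since all that matters is that the count is exponential in $L$.
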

\begin{proof}
Let $l\in\nn$ be arbitrarily large but fixed for the moment. For a fixed $K\in\nn$, consider $K$ independent random variables $\Lambda_i$ with values in $((\mathbf X^l)^{M_l}\times\M_{M_l}(\hr^{\otimes l}))$
which are distributed according to
$\mu_l$.\\
Define, for each $s^l\in\bS^l$ and $i\in[M_l]$, the function $p^i_{s^l}:((\mathbf X^l)^{M_l}\times\M_{M_l}(\hr^{\otimes l}))\rightarrow[0,1]$,\\
$(\rho_1,\ldots,\rho_{M_l},D_1,\ldots,D_{M_l})\mapsto\tr\{\cn_{s^l}(x^n_i)D_i\}$.\\
We get, by application of Markovs inequality, for every $s^l\in\bS^l$ and $i\in[M_l]$:
\begin{eqnarray}
 \mathbb P(1-\frac{1}{K}\sum_{j=1}^{K}p^i_{s^l}(\Lambda_j)\geq\eps)&=& \mathbb P(2^{ K- \sum_{j=1}^Kp^i_{s^l}(\Lambda_j)}\geq2^{K\eps})\\
&\leq&2^{-K\eps}\mathbb E(2^{ (K-\sum_{j=1}^Kp^i_{s^l}(\Lambda_j))}).
\end{eqnarray}
The $\Lambda_i$ are independent and it holds $2^{t}\leq1+t$ for every $t\in[0,1]$ as well as $\log(1+\eps_l)\leq2\eps_l$ and so we get
\begin{align}
 \mathbb P(1-\frac{1}{K}\sum_{j=1}^{K}p^i_{s^l}(\Lambda_j)\geq\eps)&\leq 2^{-K\eps}\mathbb E(2^{K-\sum_{j=1}^Kp^i_{s^l}(\Lambda_j)})\\
&=2^{-K\eps}\mathbb E(2^{\sum_{j=1}^K(1-p^i_{s^l}(\Lambda_j))})\\
&=2^{-K\eps}\mathbb E(2^{(1-p^i_{s^l}(\Lambda_1))})^K\\
&\leq2^{-K\eps}\mathbb E(1+(1-p^i_{s^l}(\Lambda_1)))^K\\
&\leq2^{-K\eps}\mathbb E(1+\eps_l)^K\\
&\leq2^{-K\eps}2^{K2\eps_l}\\
&=2^{-K(\eps-2\eps_l)}.
\end{align}
Therefore, and since we can w.l.o.g assume that not only $\liminf_{l\to\infty}\frac{1}{l}\log M_l=R$, but even $\lim_{l\to\infty}\frac{1}{l}\log M_l=R$,
\begin{align}
  \mathbb P(\frac{1}{K}\sum_{j=1}^{K}p^i_{s^l}(\Lambda_j)\geq1-\eps\ \forall s^l\in\bS^l,\ i\in[M_l])&\geq1-|M_l|\cdot|\bS|^l\cdot2^{-K(\eps-2\eps_l)}\\
  &\geq1-2^{l(R+\eps+\log|\bS|)}\cdot2^{-K(\eps-2\eps_l)}.
  \end{align}
By assumption, $\eps_l\searrow0$ and a comparison of the order of the exponentials in above estimate yields the existence of $L\in\nn$ satisfying
\begin{align}
\frac{1}{L^2}\sum_{i=1}^{L^2}\tr(\cn_{s^L}(\rho_i)D_i)\geq1-\eps\qquad\forall i\in[M_L],\ s^L\in\bS^L.
\end{align}
\end{proof}
Now, using Lemma \ref{random-code-reduction} and carrying out the derandomization procedure as in the proof of Theorem 2 in \cite{bbjn}, but with maximal error probability criterion instead of average error criterion, one verifies that $C_{\mathrm{det}}(\fri)=C_{\mathrm{random}}(\fri)$, if the l.h.s. is strictly larger than zero.\\
Since the inequality $C_{\mathrm{random}}(\fri)\leq \overline C_{\mathrm{random}}(\fri)$ is trivially true, all that is left to do is to show the reverse inequality. To this end, let $(\mu_l)_{l\in\nn}$ be a sequence of random codes such that
\begin{align}
\min_{s^l\in\bS^l}\int \frac{1}{M_l}\sum_{i=1}^{M_l}\tr\{\cn_{s^l}(\cP(i))D_i\}d\mu_l(\cP,\mathbf D)\ge
1-\eps_l\qquad\forall l\in\nn,
\end{align}
where $\eps_l\searrow0$ and $\liminf_{l\to\infty}\frac{1}{l}\log M_l\geq\overline C_{\mathrm{random}}(\fri)-\eta$ for some $\eta>0$ that we may choose arbitrarily small.\\
We define the new sequence of random codes $(\hat\mu_l)_{l\in\nn}$ by
\begin{align}
\hat\mu_l(A):=\sum_{\tau\in\mathrm{Perm}([M_l])}\mu_l(\tau(A))\frac{1}{M_l!},\qquad A\in \Sigma_l,
\end{align}
where $\mathrm{Perm}([M_l])$ is the set of permutations on $M_l$ and the for a pair $(\cP,\mathbf D)\in CQ([M_l],\hr^{\otimes l})\times\M_{M_l}(\kr^{\otimes l})$ we define $\tau(\cP,\mathbf D):=(\cP\circ\tau,(D_{\tau(1)},\ldots,D_{\tau(M_l)}))$. Then for every $l\in\nn$, $s^l\in\bS^l$ and $i\in[M_l]$ we get
\begin{align}
\int\tr\{\cn_{s^l}(\cP(i))D_i\}d\hat\mu_l(\cP,\mathbf D)&=\sum_{\tau\in\mathrm{Perm}([M_l])}\int\tr\{\cn_{s^l}(\cP(\tau(i)))D_\tau(i)\})\frac{1}{M_l!}d\mu_l(\cP,\mathbf D)\\
&=\sum_{i=1}^{M_l}\int\tr(\cn_{s^l}(\cP(i))D_i)\frac{1}{M_l}d\mu_l(\cP,\mathbf D)\\
&\geq1-\eps_l.
\end{align}
\end{proof}
\end{section}
%--------------------------------------------------------------------------------------------
\begin{section}{\label{sec:symmetrizability}The Simplified Symmetrizability Condition}
The following theorem is the core for our analysis of the symmetrizability condition Definition 39 given in \cite{abbn} and the proof of Theorem \ref{thm:geometric-variant-of-symmetrizability}.
\begin{theorem}\label{theorem-elementary-convex-reduction}
Let $\{N_s\}_{s\in\bS}$ be a finite set of linear maps from $\mathbb C^n$ to $\mathbb C^m$. If, for $\mathfrak S=\{t_1,\ldots,t_K\}\subset\mathbb C^n$ there exist probability distributions $p_1\ldots,p_K\subset\mathfrak P(\bS)$ such that
\begin{equation}
 \sum_{s\in\bS}p_i(s)N_s(t_j)=\sum_{s\in\bS}p_j(s)N_s(t_i)\qquad\forall i,j\in[K],
\end{equation}
then for every $\{t_{K+1},\ldots,t_N\}\subset \conv(\mathfrak S)$ ($N\geq K$) there exist $p_{K+1},\ldots,p_N\in\mathfrak P(\bS)$ such that
\begin{equation}
 \sum_{s\in\bS}p_i(s)N_s(t_j)=\sum_{s\in\bS}p_j(s)N_s(t_i)\qquad\forall i,j\in[N].
\end{equation}
\end{theorem}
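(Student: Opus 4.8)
The plan is to exploit the bilinearity of the assignment $(p,t)\mapsto\sum_{s\in\bS}p(s)N_s(t)$ and to build each new distribution $p_j$, $j>K$, as the very same convex combination of $p_1,\dots,p_K$ that expresses $t_j$ in terms of $t_1,\dots,t_K$. Concretely, introduce the bilinear map $B\colon\mathbb C^{|\bS|}\times\mathbb C^n\to\mathbb C^m$, $B(p,t):=\sum_{s\in\bS}p(s)N_s(t)$; it is linear in $t$ because every $N_s$ is, and manifestly linear in $p$. The hypothesis reads $B(p_i,t_j)=B(p_j,t_i)$ for all $i,j\in[K]$, and we must produce $p_{K+1},\dots,p_N\in\mathfrak P(\bS)$ making this hold for all $i,j\in[N]$.

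First I would fix, for each $j\in\{K+1,\dots,N\}$, a representation $t_j=\sum_{k=1}^K\lambda^{(j)}_kt_k$ with $\lambda^{(j)}\in\mathfrak P([K])$; such a representation exists since $t_j\in\conv(\mathfrak S)$, and although it need not be unique, any fixed choice will do. For uniformity set $\lambda^{(j)}$ equal to the $j$-th unit vector for $j\in[K]$, so that $t_j=\sum_{k=1}^K\lambda^{(j)}_kt_k$ holds for all $j\in[N]$. Then define
\[
p_j:=\sum_{k=1}^K\lambda^{(j)}_kp_k\qquad(j\in[N]).
\]
For $j\in[K]$ this recovers the given $p_j$, and for every $j$ it is again an element of $\mathfrak P(\bS)$, being a convex combination of the $p_k\in\mathfrak P(\bS)$.

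It then remains to verify $B(p_i,t_j)=B(p_j,t_i)$ for arbitrary $i,j\in[N]$, which is a two-line expansion. Using bilinearity of $B$,
\[
B(p_i,t_j)=B\Bigl(\sum_{a=1}^K\lambda^{(i)}_ap_a,\ \sum_{b=1}^K\lambda^{(j)}_bt_b\Bigr)=\sum_{a=1}^K\sum_{b=1}^K\lambda^{(i)}_a\lambda^{(j)}_b\,B(p_a,t_b),
\]
and symmetrically $B(p_j,t_i)=\sum_{a,b}\lambda^{(i)}_a\lambda^{(j)}_b\,B(p_b,t_a)$. Since $a,b\in[K]$, the hypothesis gives $B(p_a,t_b)=B(p_b,t_a)$ for each pair, so the two double sums agree termwise and the claimed identity follows.

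I do not anticipate a genuine obstacle: the content is entirely the bilinear/convex bookkeeping above, and the only minor care needed is (i) to keep the combining coefficients $\lambda^{(j)}$ probability vectors so that the $p_j$ stay in $\mathfrak P(\bS)$, and (ii) to observe that the non-uniqueness of the convex representation of $t_j$ is harmless, since the identity being proven does not reference it. An equivalent route, if one prefers, is induction on $N-K$: add $t_{K+1},\dots,t_N$ one index at a time, and the computation above, specialized to a single new index, shows that each enlarged family again satisfies the symmetric system, feeding the next step.
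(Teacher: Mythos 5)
Your proposal is correct and is essentially the paper's own argument: the paper also fixes a stochastic map $r(\cdot|i)$ expressing each new $t_i$ as a convex combination of $t_1,\ldots,t_K$ (with $r(j|i)=\delta(i,j)$ on $[K]$, i.e.\ your unit-vector convention), defines $p_i:=\sum_{j=1}^K r(j|i)p_j$, and verifies the symmetry by the same bilinear double-sum expansion reduced to the hypothesis on $[K]\times[K]$. No substantive difference.
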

\begin{proof}
Take a classical channel $r:[N]\rightarrow\mathfrak P[N]$ (for each $i\in[N]$, $r(\cdot|i)\in\mathfrak P([N])$ or, equivalently, $r$ is a conditional probability distribution) such that
\begin{equation}
 t_i=\sum_{j=1}^Nr(j|i)t_j\qquad \forall i\in[N].
\end{equation}
We may choose $r$ such that $r(j|i)=\delta(i,j)$ for all $j\in[N]$ and $i\in[K]$. Define $p_{K+1},\ldots,p_N$ by
\begin{equation}
p_i:=\sum_{j=1}^Kr(j|i)p_j,
\end{equation}
then we get that
\begin{align}
\sum_{s\in\bS}p_i(s)N_s(t_j)&=\sum_{s\in\bS}\sum_{m=1}^Kr(m|j)p_i(s)N_s(t_m)\\
&=\sum_{s\in\bS}\sum_{n=1}^K\sum_{m=1}^Kr(m|j)r(n|i)p_n(s)N_s(t_m)\\
&=\sum_{s\in\bS}\sum_{n=1}^K\sum_{m=1}^Kr(m|j)r(n|i)p_m(s)N_s(t_n)\\
&=\sum_{s\in\bS}p_j(s)N_s(t_i).
\end{align}
\end{proof}
This immediately implies the following.
\begin{theorem}\label{theorem-convex-reduction}
If, for $\mathfrak S=\{\rho_1,\ldots,\rho_K\}\subset\cs(\hr)$ there exist probability distributions $p_1\ldots,p_K\subset\mathfrak P(\bS)$ such that
\begin{equation}
 \sum_{s\in\bS}p_i(s)\cn_s(\rho_j)=\sum_{s\in\bS}p_j(s)\cn_s(\rho_i)\qquad\forall i,j\in[K],
\end{equation}
then for every $\{\rho_{K+1},\ldots,\rho_N\}\subset \conv(\mathfrak S)$ ($N\geq K$) there exist $p_{K+1},\ldots,p_N\in\mathfrak P(\bS)$ such that
\begin{equation}
 \sum_{s\in\bS}p_i(s)\cn_s(\rho_j)=\sum_{s\in\bS}p_j(s)\cn_s(\rho_i)\qquad\forall i,j\in[N].
\end{equation}
\end{theorem}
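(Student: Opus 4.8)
The plan is to deduce this directly from Theorem \ref{theorem-elementary-convex-reduction} by the appropriate identification of objects. First I would note that every $\cn_s\in\mathcal C(\hr,\kr)$ is in particular a $\cc$-linear map from the operator space $\mathcal B(\hr)$ into $\mathcal B(\kr)$, and that after fixing bases of these spaces it becomes a linear map $\cc^n\to\cc^m$ with $n=(\dim\hr)^2$ and $m=(\dim\kr)^2$. Likewise, each state $\rho_i\in\cs(\hr)$ is simply a vector in $\mathcal B(\hr)\cong\cc^n$. With the substitution $N_s:=\cn_s$ and $t_i:=\rho_i$ the hypothesis of Theorem \ref{theorem-convex-reduction} becomes verbatim the hypothesis of Theorem \ref{theorem-elementary-convex-reduction}.

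The only point that needs a word is that the convex hull appearing in the statement, taken inside $\cs(\hr)$, agrees with the convex hull taken inside the ambient vector space $\mathcal B(\hr)$; but this is immediate, since a convex combination of density operators is again a density operator, so no points are lost or gained. Hence each $\rho_j$ with $j\in\{K+1,\ldots,N\}$ lies in $\conv(\{t_1,\ldots,t_K\})\subset\cc^n$, and Theorem \ref{theorem-elementary-convex-reduction} applied with $N\geq K$ yields probability distributions $p_{K+1},\ldots,p_N\in\mathfrak P(\bS)$ satisfying the required commutation identities for all $i,j\in[N]$.

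There is no real obstacle here: all the work sits in Theorem \ref{theorem-elementary-convex-reduction}, whose proof picks a stochastic matrix $r(j|i)$ expressing each $t_i$ as a convex combination of $t_1,\ldots,t_K$ (with $r(\cdot|i)$ the point mass at $i$ when $i\leq K$) and then sets $p_i:=\sum_{j=1}^K r(j|i)p_j$, so that linearity of the maps $N_s$ propagates the commutation relations from the generators to all of $[N]$. The present theorem is simply the instance of that statement relevant to arbitrarily varying quantum channels, and I would state it as such with a one-line proof invoking the identification above.
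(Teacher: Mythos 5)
Your proposal is correct and matches the paper's own route: the paper states Theorem \ref{theorem-convex-reduction} as an immediate consequence of Theorem \ref{theorem-elementary-convex-reduction}, using precisely the identification of states with vectors in $\mathcal B(\hr)\cong\cc^n$ and channels with linear maps that you spell out. Your added remark on the convex hull agreeing with the hull in the ambient space is a harmless elaboration of the same argument.
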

Yet another formulation is the corollary
\begin{corollary}\label{corollary-1}
If, for a given set $\mathfrak S=\{\rho_1,\ldots,\rho_K\}$, there exists \emph{no} set $\{p_1,\ldots,p_K\}\subset\mathfrak P(\bS)$ such that
\begin{align}
  \sum_{s\in\bS}p_i(s)\cn_s(\rho_j)=\sum_{s\in\bS}p_j(s)\cn_s(\rho_i)\qquad\forall i,j\in[K],
\end{align}
then also for every set $\mathfrak S'=\{\sigma_1,\ldots,\sigma_{K'}\}\subset\cs(\hr)$ with $\conv(\mathfrak S)\subset\conv(\mathfrak S')$ there can be no set
$\{p_1,\ldots,p_{K'}\}\subset\mathfrak P(\bS)$ such that
\begin{align}
  \sum_{s\in\bS}p_i(s)\cn_s(\rho_j)=\sum_{s\in\bS}p_j(s)\cn_s(\rho_i)\qquad\forall i,j\in[K'].
\end{align}
\end{corollary}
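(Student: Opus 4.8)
The plan is to obtain Corollary \ref{corollary-1} as the contrapositive of Theorem \ref{theorem-convex-reduction}, applied with the roles of the two sets chosen correctly. Assume, contrary to the claimed conclusion, that for some $\mathfrak S'=\{\sigma_1,\ldots,\sigma_{K'}\}\subset\cs(\hr)$ with $\conv(\mathfrak S)\subset\conv(\mathfrak S')$ there \emph{does} exist a family $\{p_1,\ldots,p_{K'}\}\subset\mathfrak P(\bS)$ satisfying $\sum_{s\in\bS}p_i(s)\cn_s(\sigma_j)=\sum_{s\in\bS}p_j(s)\cn_s(\sigma_i)$ for all $i,j\in[K']$. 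I would then invoke Theorem \ref{theorem-convex-reduction} with $\mathfrak S'$ playing the role of the base set $\mathfrak S$ in that theorem. Since the hypothesis $\conv(\mathfrak S)\subset\conv(\mathfrak S')$ gives $\{\rho_1,\ldots,\rho_K\}\subset\conv(\mathfrak S)\subset\conv(\mathfrak S')$, the theorem yields distributions $q_1,\ldots,q_K\in\mathfrak P(\bS)$ such that the full system of symmetrization identities holds for the enlarged, relabelled list $\sigma_1,\ldots,\sigma_{K'},\rho_1,\ldots,\rho_K$.

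The second step is just to read off the sub-block of these identities indexed by $\rho_1,\ldots,\rho_K$: they state
\[
\sum_{s\in\bS}q_i(s)\cn_s(\rho_j)=\sum_{s\in\bS}q_j(s)\cn_s(\rho_i)\qquad\forall\, i,j\in[K],
\]
so that $\{q_1,\ldots,q_K\}$ is a symmetrizing family for $\mathfrak S$. This contradicts the standing assumption that no such family exists. Hence no symmetrizing family for $\mathfrak S'$ can exist, which is exactly the assertion of the corollary.

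I do not expect any real obstacle here; the only thing that needs care is the index bookkeeping, because Theorem \ref{theorem-convex-reduction} is phrased as the extension of a \emph{single} indexed set of states rather than as a statement about two sets $\mathfrak S,\mathfrak S'$. One must be explicit that the base set in the application is $\mathfrak S'$ (not $\mathfrak S$) and that the extra points adjoined to it are precisely $\rho_1,\ldots,\rho_K$, whose membership in $\conv(\mathfrak S')$ is what the containment of convex hulls provides. (There is also a harmless typo in the displayed equation of the corollary's statement: on the right-hand side it should read $\cn_s(\sigma_j)=\ldots\cn_s(\sigma_i)$ rather than $\cn_s(\rho_j)=\ldots\cn_s(\rho_i)$; I would fix this when writing up.)
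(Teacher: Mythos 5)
Your proposal is correct and is essentially the paper's intended argument: the paper presents Corollary \ref{corollary-1} merely as a reformulation (contrapositive) of Theorem \ref{theorem-convex-reduction}, and your application of that theorem with $\mathfrak S'$ as the base set, adjoining $\rho_1,\ldots,\rho_K\in\conv(\mathfrak S)\subset\conv(\mathfrak S')$ and reading off the $\rho$-indexed sub-block, is exactly the bookkeeping that justifies it. Your remark about the typo (the second display should involve the $\sigma_i$'s) is also accurate.
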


\begin{theorem}\label{thm:pure-state-symmetrizability}
Let $\fri=\{\cn_s\}_{s\in\bS}\subset\mathcal C(\hr,\kr)$ be an AVQC. $\fri$ is symmetrizable if and only if for every $l\in\nn$ and every set $\{\psi_1,\ldots,\psi_K\}\subset\cs(\hr^{\otimes l})$ of pure states there is a set $\{p_1,\ldots,p_K\}\subset\mathfrak P(\bS^l)$ such that
\begin{align}
\sum_{s^l\in\bS^l}p_j(s^l)\cn_{s^l}(\psi_i)=\sum_{s^l\in\bS^l}p_i(s^l)\cn_{s^l}(\psi_j)\qquad\forall i,j\in[K].
\end{align}
\end{theorem}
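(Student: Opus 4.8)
The plan is to read the theorem through the notion of symmetrizability of \cite{abbn}, the convex-reduction principle Theorem \ref{theorem-convex-reduction}, and one compactness argument. Recall (Definition 39 and Theorem 40 in \cite{abbn}) that $\fri$ is symmetrizable exactly when it is $l$-symmetrizable for every $l\in\nn$, that is, when for each $l$ there is an assignment $\cs(\hr^{\otimes l})\ni\rho\mapsto\tau(\cdot\,|\rho)\in\mathfrak P(\bS^l)$ with
\begin{align*}
\sum_{s^l\in\bS^l}\tau(s^l|\rho)\,\cn_{s^l}(\sigma)=\sum_{s^l\in\bS^l}\tau(s^l|\sigma)\,\cn_{s^l}(\rho)\qquad\text{for all }\ \rho,\sigma\in\cs(\hr^{\otimes l}).
\end{align*}
With this in hand the implication ``symmetrizable $\Rightarrow$ the pure-state condition'' is trivial: restrict such a $\tau$ to any finite family of pure states $\psi_1,\dots,\psi_K$ and put $p_i:=\tau(\cdot\,|\psi_i)$. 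Everything else is the converse, for which I fix $l\in\nn$ and must manufacture one symmetrizing assignment defined on all of $\cs(\hr^{\otimes l})$.

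First I would boost the hypothesis from finite sets of \emph{pure} states to arbitrary finite sets of states. Given a finite $F=\{\rho_1,\dots,\rho_N\}\subset\cs(\hr^{\otimes l})$, spectral decompositions of the $\rho_i$ exhibit a finite set $P=\{\psi_1,\dots,\psi_K\}$ of pure states with $F\subset\conv(P)$; the hypothesis (at block length $l$) yields distributions symmetrizing $\{\cn_{s^l}\}_{s^l\in\bS^l}$ on $P$, and Theorem \ref{theorem-convex-reduction}, applied to the AVQC $\{\cn_{s^l}\}_{s^l\in\bS^l}$ with $\mathfrak S=P$ and the additional states $\rho_1,\dots,\rho_N\in\conv(P)$, delivers $q_1,\dots,q_N\in\mathfrak P(\bS^l)$ with $\sum_{s^l}q_j(s^l)\cn_{s^l}(\rho_i)=\sum_{s^l}q_i(s^l)\cn_{s^l}(\rho_j)$ for all $i,j$. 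So every finite subset of $\cs(\hr^{\otimes l})$ can be symmetrized.

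The second, main step is a finite-intersection (Tychonoff) argument that turns ``each finite set is symmetrizable'' into ``one assignment symmetrizes everything''. I would work in the compact space $\Omega:=\prod_{\rho\in\cs(\hr^{\otimes l})}\mathfrak P(\bS^l)$ with the product topology (compact by Tychonoff, since $\mathfrak P(\bS^l)$ is a compact simplex), a point of which is precisely an assignment $\rho\mapsto\tau(\cdot\,|\rho)$, and consider for each ordered pair $(\rho,\sigma)$ the set
\begin{align*}
C_{\rho,\sigma}:=\Bigl\{\tau\in\Omega:\ \textstyle\sum_{s^l}\tau(s^l|\rho)\,\cn_{s^l}(\sigma)=\sum_{s^l}\tau(s^l|\sigma)\,\cn_{s^l}(\rho)\Bigr\}.
\end{align*}
Each $C_{\rho,\sigma}$ is closed, being the preimage of $0$ under a continuous $\mathcal B(\kr^{\otimes l})$-valued function of the two coordinates $\tau(\cdot\,|\rho)$ and $\tau(\cdot\,|\sigma)$. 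For any finite collection of pairs, letting $F$ be the finite set of states occurring among them, Step 1 produces an assignment on $F$ that lies in the corresponding $C_{\rho,\sigma}$, and any extension of it to $\cs(\hr^{\otimes l})$ still does; hence the family $\{C_{\rho,\sigma}\}$ has the finite intersection property, and compactness of $\Omega$ gives a $\tau\in\bigcap_{\rho,\sigma}C_{\rho,\sigma}$. Such a $\tau$ witnesses that $\fri$ is $l$-symmetrizable, and as $l\in\nn$ was arbitrary, $\fri$ is symmetrizable.

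I expect this last step to be the only real obstacle: one has to run the finite-intersection argument rather than, say, approximate $\cs(\hr^{\otimes l})$ by a finite net — the symmetrizing relations are exact equalities, so net approximations gain nothing. A secondary point is the interface with \cite{abbn}: one should quote the precise shape of the $l$-symmetrizability condition there, and, if it insists on a continuous or affine $\tau$, observe that the defining identities are bilinear in the two state arguments, so it is enough to construct $\tau$ on pure states (running Step 2 over the pure-state coordinates only) and then extend it to $\cs(\hr^{\otimes l})$ by any fixed convex-combination rule.
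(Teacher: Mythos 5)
Your proof is correct, and its mathematical core coincides with the paper's: both arguments rest on the fact that any finite set of states lies in the convex hull of a finite set of pure states (spectral decomposition) together with the convex-reduction Theorem \ref{theorem-convex-reduction}; the paper simply runs this reduction contrapositively via Corollary \ref{corollary-1} ("some set of states not symmetrizable $\Rightarrow$ some set of pure states not symmetrizable"), while you run it directly. The one divergence is your second step: the $l$-symmetrizability notion of \cite{abbn} that the paper uses is itself quantified over \emph{finite} sets of states (a family of distributions for each finite set), so your Step 1 already completes the converse, and the Tychonoff/finite-intersection construction of a single global assignment $\rho\mapsto\tau(\cdot\,|\rho)$ on all of $\cs(\hr^{\otimes l})$ is correct but proves more than the definition requires; it compensates for the stronger "global map" reading of Definition 39 that you recalled, and does no harm since the global form trivially implies the finite-set form.
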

\begin{remark}
This result shows that randomized encoding (e.g. the use of mixed signal states instead of pure ones), in analogy to the error probabilities defined in \cite{ahlswede-elimination}, cannot lead to a nonzero message transmission capacity if the message transmission capacity using pure signal states is zero. This is in analogy to Theorem 3 a) in \cite{ahlswede-elimination}.
\end{remark}
\begin{proof}
If $\fri$ is symmetrizable then to every set $\{\rho_1,\ldots,\rho_K\}\subset\cs(\hr^{\otimes l})$, $l\in\nn$ arbitrary, there is a set $\{p_1,\ldots,p_K\}\subset\mathfrak P(\bS^l)$ such that
\begin{align}
\sum_{s^l\in\bS^l}p_j(s^l)\cn_{s^l}(\rho_i)=\sum_{s^l\in\bS^l}p_i(s^l)\cn_{s^l}(\rho_j)\qquad\forall i,j\in[K].
\end{align}
Clearly then, this holds especially for sets $\mathfrak S=\{\psi_1,\ldots,\psi_K\}$ of pure states.\\
If $\fri$ is not symmetrizable, then there is a set $\{\rho_1,\ldots,\rho_K\}$ such that no set $\{p_1,\ldots,p_K\}\subset\mathfrak P(\bS^l)$ satisfies
\begin{align}
\sum_{s^l\in\bS^l}p_j(s^l)\cn_{s^l}(\rho_i)=\sum_{s^l\in\bS^l}p_i(s^l)\cn_{s^l}(\rho_j)\qquad\forall i,j\in[K].
\end{align}
Clearly, there is a set $\mathfrak S'=\{\psi_1,\ldots,\psi_K\}$ of pure states such that $\conv(\mathfrak S)\subset\conv(\mathfrak S')$ holds and, by Theorem \ref{theorem-convex-reduction} (Corollary \ref{corollary-1}), this implies that there can be no set $\{q_1,\ldots,q_K\}\subset\mathfrak P(\bS^l)$ such that
\begin{align}
\sum_{s^l\in\bS^l}q_j(s^l)\cn_{s^l}(\psi_i)=\sum_{s^l\in\bS^l}q_i(s^l)\cn_{s^l}(\psi_j)\qquad\forall i,j\in[K].
\end{align}
\end{proof}
\begin{proof}[Proof of Theorem \ref{thm:geometric-variant-of-symmetrizability}] The proof can be carried out using the same strategy as in the proof of Theorem \ref{thm:pure-state-symmetrizability}. A short argument goes as follows. If $\fri$ is $l$-symmetrizable according to Theorem \ref{thm:geometric-variant-of-symmetrizability}, then by Theorem \ref{theorem-elementary-convex-reduction} it is clearly symmetrizable in the sense of \cite{abbn}. If, on the other hand, it is not symmetrizable in the sense of \cite{abbn}, then it can, again by Theorem \ref{theorem-elementary-convex-reduction} not be symmetrizable in the sense of Theorem \ref{thm:geometric-variant-of-symmetrizability}.
\end{proof}
\end{section}
\ \\\\
\emph{Final Remarks.} After finishing this paper, we found explicit examples showing that our second conjecture (there exist AVQCs $\fri$ such that $\overline C_{\mathrm{det}}(\fri)=0$ but still $\overline C_{\mathrm{random}}(\fri)>0$) is true. These results will be presented in a forthcoming paper.\\
This further underlines the importance of the symmetrizability conditions that were found in \cite{abbn} as a means of distinguishing two important classes of AVQCs.\\
In view of possible applications of our results to cryptographic scenarios it should be noted that non-symmetrizability is a necessary (and, by the preceeding, also nontrivial) criterion for the ability to share or generate a secret key over a quantum channel that is under partial control of a malicious party that can do eavesdropping, but is also able to carry out active manipulations of the channel.
\\\\
%-------------------------------------------------------------------------------
%-----------------------------------------------------------------------------------
\emph{Acknowledgements.} We thank Andreas Winter for stimulating discussions and remarks on the topic.
This work was supported by the DFG via grant BO 1734/20-1 (H.B.) and by the BMBF via grant 01BQ1050 (H.B., J.N.).
%-----------------------------------------------------------------------------------


\begin{thebibliography}{99}
\bibitem{ahlswede-elimination}
R. Ahlswede, ``Elimination of Correlation in Random Codes for Arbitrarily Varying Channels'', \emph{Z. Wahrscheinlichkeitstheorie verw. Gebiete} 44, 159-175 (1978)

\bibitem{ahlswede-blinovsky} R. Ahlswede, V. Blinovsky, ``Classical capacity of classical-quantum arbitrarily varying channels'', IEEE Trans. Inf. Theory, Vol. 53, No. 2, 526-533.

\bibitem{ahlswede-cai} R. Ahlswede, N. Cai, ``Correlated sources help the transmission over AVC'', \emph{IEEE Trans. Inf. Th.}, Vol. 43, No. 4, 1254-1255 (1997)

\bibitem{abbn} R. Ahlswede, I. Bjelakovic, H. Boche, J. N\"otzel ``Quantum capacity under adversarial noise: arbitrarily varying quantum channels'', \emph{Comm. Math. Phys.}, Vol. 317, Iss. 1, 103-156, 10.1007/s00220-012-1613-x,  (2013)

\bibitem{bbjn} I. Bjelakovic, H. Boche, G. Jan\ss en, J. N\"otzel, ``Arbitrarily varying and compound classical-quantum channels and a note on quantum zero-error capacities'', accepted for publication in the LNCS Volume in Memory of Rudolf Ahlswede, arXiv:1209.6325

\bibitem{witsenhausen} H. S. Witsenhausen, ``On sequences of pairs of dependent random variables'', SIAM J. Appl. Math. Vol. 28, No. 1, (1975)

\bibitem{webster} R. Webster, \emph{Convexity}, Oxford University Press 1994

\bibitem{ericson} T. Ericson, ``Exponential Error Bounds for Random Codes in the Arbitrarily Varying Channel'', \emph{IEEE Trans. Inf. Th.} Vol. 31, No. 1, 42-48 (1985)

\bibitem{csiszar-narayan} I. Csiszar, P. Narayan, ``The Capacity of the Arbitrarily Varying Channel Revisited: Positivity, Constraints'', \emph{IEEE Trans. Inf. Th.} Vol. 34, No. 2, 181-193 (1989)

\bibitem{gacs-koerner} P. Gacs, J. Koerner, ``Common information is far less than mutual information'', \emph{Probl. Control Inf. Th.}, Vol. 2(2), 149-162, (1973)

\bibitem{ahlswede-koerner} R. Ahlswede, J. K\"orner, ``On common information and related characteristics of correlated information sources'', included in ``Information Theory'' by I. Csisz\c{a}r and J. K\"orner, Acad. Press, 1981, General Theory of Information Transfer and Combinatorics, Lecture Notes in Computer Science, Vol. 4123, Springer Verlag, 2006, 664-677

\bibitem{wyner} A.D. Wyner ``The Common Information of Two Dependent Random Variables'', \emph{IEEE Trans. Inf. Th.} Vol. IT-24, No. 2, 163-79 (1975)

\bibitem{ahlswede-csiszar-I} R. Ahlswede, I. Csiszar ``Common randomness in information theory and cryptography, Part I: Secret sharing'', \emph{IEEE Trans. Inf. Theory}, Vol. 39, No. 4, 1121-1132 (1993)

\bibitem{ahlswede-csiszar-II} R. Ahlswede, I. Csiszar ``Common randomness in Information Theory and Cryptography, Part II: CR capacity'', \emph{IEEE Trans. Inf. Theory}, Vol. 44, No. 1, 225-240 (1998)

\bibitem{beigi} S. Beigi, ``A New Quantum Data Processing Inequality'', arXiv:1210.1689v1 (2012)

\bibitem{kang-ulukus} W. Kang, S. Ulukus, ``A New Data Processing Inequality and Its Applications in Distributed Source and Channel Coding'', \emph{IEEE Trans. Inf. Theory}, Vol. 57, No. 1 (2011)

\bibitem{devetak-winter} I. Devetak, A. Winter, ``Distilling common randomness from bipartite quantum states'', \emph{IEEE Trans. Inf. Th.} Vol. 50, Iss. 12, 3183 - 3196 (2004)

\bibitem{schumacher} B. Schumacher, ``Sending entanglement through noisy quantum channels'' \emph{Phys. Rev. A} 54, 2614 (1996)
\end{thebibliography}
\end{document}